\newtheorem{theorem}{Theorem}
\newtheorem{observation}[theorem]{Observation}
\newtheorem{definition}[theorem]{Definition}
\newtheorem{lemma}[theorem]{Lemma}
\crefname{observation}{Observation}{Observations}
\DeclareMathOperator*{\polylog}{polylog}
\DeclareMathOperator*{\polyloglog}{polyloglog}
\DeclareMathOperator*{\outdeg}{out-deg}
\DeclareMathOperator*{\indeg}{in-deg}
\DeclareMathOperator*{\Bin}{Bin}
\DeclareMathOperator*{\Uniform}{Unif}
\newcommand{\case}[1]{\emph{Case $#1$.}}
\newcommand{\BigO}[1]{\ensuremath{O\left( #1 \right)}}
\newcommand{\Dist}[3]{\ensuremath{\mathcal{D}\left(#1,#2,#3\right)}}
\newcommand{\Glogt}{\ensuremath{\mathcal{G}_{\log}^t}}
\newcommand{\Glog}{\ensuremath{\mathcal{G}_{\log}}}
\newcommand{\Rootset}{\ensuremath{V}_{R}}
\newcommand{\Rootsett}{\ensuremath{V}_{R}^t}
\newcommand{\Ex}[1]{\ensuremath{\mathbb{E}\left[#1\right]}}
\newcommand{\Galpha}{\ensuremath{\mathcal{G}_{\alpha}^{\mathcal{F}}}}
\newcommand{\Ealpha}{\ensuremath{E_{\alpha}^{\mathcal{F}}}}
\def\?#1{}
\def\whp{w.h.p\@ifnextchar.{.\whpFootnote/\?}{\@ifnextchar,{.\whpFootnote/}{\@ifnextchar){.\whpFootnote/}{\@ifnextchar:{.:\whpFootnote/\?}{.\whpFootnote/\ }}}}}
\def\Whp{W.h.p\@ifnextchar.{.\whpFootnote/\?}{\@ifnextchar,{\whpFootnote/.}{.\whpFootnote/\ }}}
\def\whpFootnote/{\global\def\whpFootnote/{}
}
\author{
Gregor Bankhamer\\\normalsize{Department of Computer Sciences}\\\normalsize{University of Salzburg, Austria}\\\normalsize{gbank@cs.sbg.ac.at} \and
Robert Elsässer\\\normalsize{Department of Computer Sciences}\\\normalsize{University of Salzburg, Austria}\\\normalsize{elsa@cs.sbg.ac.at} \and 
Stefan Schmid\\\normalsize{Technical University of Berlin, Germany and}\\\normalsize{Faculty of Computer Science, University of Vienna, Austria}\\\normalsize{stefan.schmid@tu-berlin.de} 
}
\title{Randomized Local Fast Rerouting for Datacenter\\Networks
with Almost Optimal Congestion\footnote{Research supported by the Vienna Science and Technology Fund (WWTF), project´ICT19-045 (WHATIF), 2020-2024.}}
\date{}
\begin{document}

\maketitle

\begin{abstract}
To ensure high availability, 
datacenter networks must rely on local fast rerouting mechanisms
that allow routers to quickly react to link failures, in a fully
decentralized manner. 
However, configuring these mechanisms to provide 
a high resilience against multiple failures while avoiding congestion
along failover routes is algorithmically challenging,
as the rerouting rules can only depend on local failure information and 
must be defined ahead of time.
This paper presents a randomized local fast rerouting algorithm for Clos networks,
the predominant datacenter topologies. 
Given a graph $G=(V,E)$ describing a Clos topology, 
our algorithm defines local routing rules for each node $v\in V$,
which only depend on the packet's destination and are conditioned on the incident link failures. 
We prove that as long as number of failures at each node does not exceed a certain bound, our algorithm achieves an asymptotically
minimal congestion up to $\polyloglog$ factors along failover paths.
Our lower bounds are developed under some natural 
routing assumptions.

\end{abstract}

\section{Introduction}

Due to the popularity of data-centric applications
and distributed machine learning,
datacenter networks have become a critical infrastructure of
the digital society.
To meet the resulting stringent dependability
requirements, datacenter networks 
implement fast failover mechanisms that enable routers
to react to link failures quickly and to reroute flows in a decentralized manner, 
relying on \emph{static} routing tables which include conditional \emph{local} failover rules. 
Such local failover mechanisms in the data plane can react to failures orders of magnitudes
faster than traditional global mechanisms 
in the control plane which, 
upon failure, may recompute routing tables by running the routing protocol again~\cite{frr-survey,francois2005achieving,liu2013ensuring}. 

Configuring fast failover mechanisms however is challenging under multiple link failures,
as the failover behavior needs to be pre-defined, before the actual
failures are known. In particular, rerouting decisions can only rely on local information, 
without knowledge of possible further failures downstream. Without precautions, 
a local failover mechanism may hence entail congestion or even 
forwarding loops, already under a small number of link failures, while these issues
could easily be avoided in a centralized setting.

More formally, resilience is achieved in two stages. 
First, we are given a graph $G=(V,E)$ describing an undirected network (without failures). 
Our task is to compute local failover rules for each node $v\in V$ which define for each packet 
arriving at $v$ to which incident link it should be forwarded, based on the packet's destination
(known as destination-based routing); 
these rules can be conditioned on the status of the links incident to $v$. 
Second, when an adversary fails multiple links in the network, packets are forwarded 
according to our pre-defined conditional rules. Our objective is to define the 
static routing tables (i.e., the \emph{rulesets}) in the first stage such that desirable properties are preserved in the second stage, 
in particular, connectivity and a minimal congestion.

This paper studies fast failover algorithms tailored
towards Clos topologies, and more specifically to \emph{(multirooted) fat-trees}~\cite{leiserson1985fat,al2008scalable,singh2015jupiter}, 
the predominant datacenter networks.
In particular, we consider a scenario where
$n-1$ sources inject one indefinite flow each to a single destination. This 
scenario has already been studied intensively in the literature~\cite{foerster2019casa,opodis13shoot,dsn17,icnp19}: it models practically important operations such as in-cast~\cite{wu2012ictcp,handley2017re}, and is also theoretically interesting as it describes a particularly challenging situation 
because it is focused on a single destination which can lead to bottlenecks.

The goal is to ensure that each flow reaches its destinations even in the presence of 
a large number of link failures while minimizing congestion: 
to provide a high availability, it is crucial to avoid a high load
(and hence delays and packet loss) on the failover paths.
The problem is related to classic load-balancing problems such as
balls-into-bins, however, our setting introduces additional dependencies
in that failover rules need to define valid paths.

\subsection{Our Contribution}

This paper studies the theoretical question of how to configure local fast
failover rules in Clos topologies such that connectivity is preserved and
load is minimized even under a large number of failures.

\paragraph{Results in a Nutshell:} We first derive a lower bound showing that 
by failing $\BigO{n / \log n}$ edges, the adversary can create a load of $\Omega(\log n / \log \log n)$ \whp in arbitrary topologies with $n$ nodes.
As a next step, we give a routing protocol for complete bipartite graphs that incorporates local failover rules and, for up to $\BigO{n / \log n}$ link failures,
achieves an almost minimal congestion (i.e., up to $(\log \log n)^2$ factor).
We then use the derived results to construct a failover ruleset for the Clos topology
with $L+1 = \Theta(1)$ levels and degree $k$ (cf~the definition in~\cref{sec:clos-definition} and the example in~\cref{fig:fattree-example}).
It is resilient to $\BigO{k / \log k}$ link failures, while keeping the total load below $\BigO{k^{L-1} \log k \cdot \log \log k}$ \whp. For a certain class of routing protocols, that only forward over shortest paths (according to the local view of the nodes, cf~\cref{def:fairly-balanced}) and exhibit a property we call \emph{fairly balanced}, this is again optimal up to $(\log \log n)^2$ factors. This class of protocols is natural and reminiscent of the widely-deployed shortest-path routing protocol ECMP (equal-cost multipath)~\cite{singh2015jupiter,kabbani2014flowbender}.

\paragraph{Techniques:}

In this work, we are interested in rulesets that include randomization, and are robust against an adversary, which knows the algorithm and the routing destination, but not the random choices leading to the specific failover routes.
In our lower bound analysis we need to cover a wide variety of failover protocols. While the deterministic case is well-understood \cite{opodis13shoot}, and failover protocols based on the uniform distribution are easy to handle, a mixture of the both is non-trivial to analyze.
We opt for a carefully crafted case-distinction 
that captures failover paths, which might be predicted by the adversary with good probability. 
For this, we exploit the properties of the subgraphs induced by the edges, which have a certain (high) probability to be chosen as failover links. In all the other cases, we are able to use the (not necessarily uniform) random placement of the loads initiated by a subset of source nodes, and apply a balls-into-bins style argument to show that at least one node will receive high load.
 
To develop an efficient protocol for the Clos topology, we exploit the fact that it contains multiple bipartite sub-graphs.
The main algorithm combines the advantages of deterministic protocols and forwarding loop-freeness, with the resilience of randomized approaches.
Our approach builds upon the Interval protocol in \cite{icnp19}, which is designed for the clique. However, the adaptation of this approach to the Clos topology comes with multiple challenges that need to be solved. The approach of \cite{icnp19} models the load that nodes receive with the help of trees that are tailored towards the clique, and this method can not be extended to more complex topologies. To overcome this problem, we use a Markov chain to 
describe such loads and develop a general Markov chain result that might be of independent interest (see \cref{thm:markovchain}). For Markov chains with state space $\mathbb{N}$ that drift towards $0$ and that
can be modelled with Poisson trials, it states a concentration inequality for the sum of the first $r = \Omega(1)$ elements.
Additionally, the protocol in \cite{icnp19} relies on splitting the nodes into partitions of similar size. Contrary to the clique, where the assignment of nodes to partitions can be arbitrary, this is challenging in the case of the Clos topology.
Furthermore, for our analysis in the Clos network we need to consider the flows arriving at a certain node from lower and upper levels concurrently.
This leads to dependencies, which prevents us from using standard techniques such as Chernoff bounds and the method of bounded differences. 
To overcome this problem, we uncover the failover edges step by step, and utilize an inductive approach over the increasingly small subtrees around the destination, bounding the number of flows entering the corresponding subtree. For the details see \cref{sec:clos-analysis}. 


\subsection{Related Work}

Motivated by measurement studies of network-layer failures in datacenters, 
showing that especially link failures are frequent
and disruptive \cite{gill2011understanding}, the problem of designing resilient routing
mechanisms has received much attention in the literature over the last years, see e.g.,~\cite{chiesa2017ton,robroute16infocom,icalp16,apocs21resilience,podc-ba,opodis13shoot} or the recent survey by Chiesa et al.~\cite{frr-survey}.

In this paper, we focus on the important model in which we do not allow for packet header rewriting
or maintain state at the routers, which rules out approaches such as link reversal and
others~\cite{gafni-lr,opodis20}. 
A price of locality for static rerouting mechanisms has first been shown by Feigenbaum et al.~\cite{podc-ba} and
Borokhovich et al.~\cite{opodis13shoot}, who proved that
it is not always possible to locally reroute packets to their destination even if the underlying network remains connected after the failures. These impossibility results have recently
been generalized to entire graph families by Foerster et al.~\cite{apocs21resilience}.
On the positive side, 
Chiesa et al.~showed that highly resilient failover algorithms can be realized
based on arc-disjoint arborescence covers~\cite{chiesa2017ton,robroute16infocom,icalp16},
an approach  which generalizes traditional solutions based on spanning trees \cite{tapolcai2013sufficient}. 
However, these papers only focus on connectivity and do not consider congestion on the resulting
failover paths.
Furthermore, while arborescence-based approaches have the advantage that they are naturally loop-free,
they result in long paths (and hence likely high load) and are complex to compute. 

Only little is known about local fast failover
mechanisms that account for load.
Pignolet et al.~in~\cite{dsn17} showed that 
when only relying on deterministic destination-based failover rules, 
an adversary can always induce a maximum edge load of $\Omega(\varphi)$ 
by cleverly failing $\varphi$ edges; when failover rules can also depend on the 
source address, an edge load of $\Omega(\sqrt{\varphi})$ 
can still be generated, when failing $\varphi$ many edges. 
In \cite{foerster2019casa},
Foerster et al.~build upon \cite{opodis13shoot,dsn17},
and leverage a connection to distributed computing problems without communication \cite{Malewicz2005},
to devise a fast failover algorithm which balances load across arborescences using
combinatorial designs.
In these papers the focus is on deterministic algorithms.

Our work builds upon \cite{icnp19} where we showed that randomized algorithms
can reduce congestion significantly in complete graphs. In particular, we presented 
three failover strategies:  
Assuming up to $\varphi= O(n)$ edge failures, 
the first algorithm ensures that w.h.p.~a load of $O(\log n \log \log n)$ 
will not be exceeded at most nodes, while the remaining $O(\polylog n)$ nodes reach a load of
at most $O(\polylog n)$. 
The second approach reduces the edge failure resilience to $O(n/\log n)$
but only requires knowledge of the packet destinations,
and achieves a congestion of only $O(\log n \log \log n)$ at any node w.h.p.
Finally, by assuming that the nodes do have access to $O(\log n)$ shared permutations of $V$, which are not known to the adversary, the node load can be reduced even further: 
a maximum load of only $O(\sqrt{\log n})$ occurs at \emph{any} node w.h.p. 
However, our work relied on the assumption that the underlying network is fully 
connected (i.e., forms a clique).
That said, our simulations (performed after we published that paper) showed
promising first results for interval-based routing on Clos topologies as well.

In this work, we consider randomized fast failover specifically in the context 
of datacenter networks which typically rely on Clos topologies (also known as
multi-rooted fat-trees)~\cite{leiserson1985fat,al2008scalable,singh2015jupiter}.
This scenario is not only of practical importance, but also significantly more challenging.
Nevertheless, we are able to derive almost tight upper and lower bounds for this setting,
under some natural fairness and shortest path assumptions.

\begin{figure}
\centering
\scalebox{0.8}{
 \input{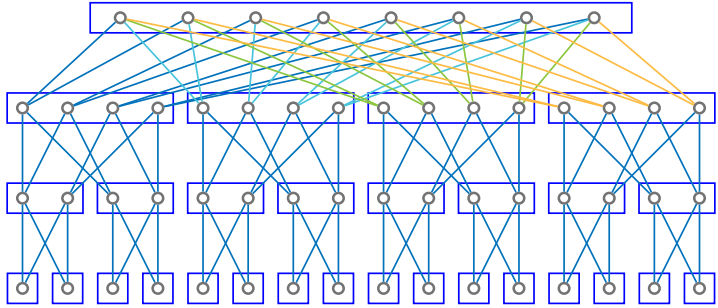}
}
\caption{Clos topology with levels $0,1,2,3$ (numbered from top to bottom)
and degree $k=4$} \label{fig:fattree-example}
\vspace{-.3cm}
\end{figure}

\subsection{Model}

In a nutshell, our model includes two stages.
First, we are asked to define the rulesets of the (static) routing tables
of each node $v$ in the network; these rules can depend on the destination
and be conditioned on the possible link failures incident to $v$ (i.e., only
the \emph{local} failures). 
Later, an adversary will decide which links to fail in the network; as
the routing tables defined before are static and cannot be changed depending on the
actual failures, packets will now simply be forwarded according to the local 
failover rules. Our objective is to pre-define these ruleset such that routing 
reachability is preserved under these failures and the load is minimized.

\label{sec:local-failover} \label{sec:model}
\paragraph{Local Destination-Based Failover Routing:} We represent our network as an undirected graph $\mathcal{G} = (V,E)$ and denote by $\mathcal{F}$ the set of failed edges. Each node $v$ is equipped with a static routing table $\alpha$, which we assume to be precomputed without knowledge of $\mathcal{F}$.
When a packet with destination $d$ arrives at a node $v$, it is forwarded to the neighbor of $v$ specified in the routing entry $\alpha(v, \mathcal{F}_v, d)$. Here $\mathcal{F}_v = \{w ~|~ (v,w) \in \mathcal{F}\}$ denotes the set of unreachable neighbors of $v$ (which may be empty). 
In order to allow for randomization we assume that, for each node $v$, the entry $\alpha(v, \mathcal{F}_v, d)$ is drawn from $\Dist{v}{\mathcal{F}_v}{d}$, which is a distribution over $(\Gamma(v) \setminus \mathcal{F}_v) \cup \{v\}$. Here $\Gamma(v)$ denotes the set of neighbors of $v$.  This way, a local failover routing protocol $\mathcal{P}$ can be described by the set of its distributions $\mathcal{P} = \{ \Dist{v}{\mathcal{F}_v}{d} ~|~ v \in V, \mathcal{F}_v \subseteq \Gamma(v), d \in V\}$. We call such a protocol $\mathcal{P}$  \emph{destination-based} as the only header information that is used for forwarding decisions is the destination address. In the following, we will also assume that for every $v$ the edge $(v,v)$ exists in $\mathcal{G}$ without being included in $\Gamma(v)$. This looping edge cannot be failed by the adversary and is used to enable the analysis of the case where $\mathcal{F}_v  = \Gamma(v)$.
Throughout the following sections, we assume the existence of an adversary, which knows the employed protocol, or in other words, the set of distributions $\mathcal{P}$ and may construct the set $\mathcal{F}$. However, this adversary does not know the random choices that lead to the routing entries $\alpha$.
In practice, to hide from the adversary longer term, this could for example be realized by generating new random tables once in a while.

\paragraph{Traffic Pattern and Load:}
Our focus lies on \emph{flow}-based \emph{all-to-one} routing~\cite{foerster2019casa,dsn17}. That is, we assume that every node $v \in V \setminus \{d\}$ sends out an indefinite flow of packets towards some common destination $d$. For a node $v$, we will then say that it has $\emph{load}$ of $\ell$ (or short: $\mathcal{L}(v) = \ell)$ iff $\ell$ such flows cross node $v$ on their way to destination $d$. Similarly, for an edge $e$ we define $\mathcal{L}(e)$ to denote the number of flows forwarded over edge $e$. In case some flow travels in a forwarding cycle, we say that all edges and nodes that lie on this cycle have infinite load.

Observe that, because we consider a purely destination-based ruleset, two flows hitting some node $w$ will be forwarded via the same routing entry $\alpha(w,\mathcal{F}_w,d)$. Therefore, as soon as flows hit the same node they cannot be separated anymore.
Note that this implies that $\max_{v \in V \setminus \{d\}} \{ \mathcal{L}(v) \} = \max_{e \in E} \{ \mathcal{L}(e) \}$ as the node $v$ with maximum load in $V \setminus \{d\}$ needs to forward all its flows over the edge $(v,d)$ to reach the destination.
As the edge load can be inferred from the node load, we will in the remaining part of our work only consider node loads.

\subsection{Conventions and Structure}

In the remainder of the paper, when we say we apply Chernoff bounds, we mean the usual multiplicative version (stated in \cref{thm:chernoff} for convenience). Furthermore, we denote by $\Bin(n,p)$ the binomial distribution with $n$ trials and success probability $p$, and by $\Uniform(A)$ the uniform distribution over elements in the set $A$. Finally, we denote by \whp ("with high probability") a probability of at least $1-n^{-\Omega(1)}$, where $n$ is the networks size.

We start by presenting a lower bound in \cref{sec:lower-bound}, stating that there exists a set of edge failures which induces a high load for any network and local destination-based failover protocol $\mathcal{P}$.
In \cref{sec:bipartite} we present an efficient loop-free failover protocol, which operates in complete bipartite graphs.
This is used as a preliminary result to develop a protocol in \cref{sec:fat-tree}, which may be employed in Clos topologies \cite{al2008scalable}. Each such section comes with a dedicated theorem and the corresponding analysis.
Certain technical details as well as concentration inequalities required in the analysis of \cref{sec:lower-bound,sec:bipartite,sec:fat-tree} are given in the technical details section (\cref{sec:details}). Finally, we discuss future research in \cref{sec:conclusion}.

\section{Lower Bound for Local Destination-Based Failover Protocols}
\label{sec:lower-bound}

In the following section we construct a lower bound, stating that by failing $\BigO{n / \log n}$ edges the adversary can w.h.p.~always create a load of $\Omega(\log n / \log \log n)$. 

\begin{theorem}
\label{thm:lower-bound-main}
    Consider any local destination-based  failover protocol $\mathcal{P}$ that operates in a graph $\mathcal{G} = (V, E)$ with $|V| = n$ and assume that all nodes perform all-to-one routing to some node $d \in V$. Then, if $d$ and $\mathcal{P}$ are known, a set of failures $\mathcal{F} \subseteq  \{ (v,d) ~|~ v \in V\}$ with $|\mathcal{F}| = \BigO{ n/ \log n}$ can be constructed such that some node $v \neq d$ has $\mathcal{L}(v) > (1/10) \cdot  \log n / \log \log n$ \whp. 
\end{theorem}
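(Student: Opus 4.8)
The plan is to exploit that the adversary may cut only links incident to $d$, and then to run a case analysis separating ``predictable'' failover behaviour (attacked like the deterministic case) from ``diffuse'' behaviour (attacked by a balls-into-bins argument). First I would dispose of the case where $d$ has small degree: if $\deg(d)\le (n-1)\log\log n/\log n$, take $\mathcal{F}=\emptyset$; then every flow either loops forever, or reaches $d$ through one of its $\deg(d)$ neighbours, so $\sum_{w\in\Gamma(d)}\mathcal{L}(w)\ge n-1$ and some neighbour already carries load $\ge(n-1)/\deg(d)>(1/10)\log n/\log\log n$, deterministically. From now on $\deg(d)$ is large; write $D:=\Gamma(d)$. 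The only moves left to the adversary are to fail a subset of $\{(v,d):v\in D\}$, which forces the corresponding $v$ to forward everything it receives according to its backup distribution $\mathcal{D}_v:=\mathcal{D}(v,\{d\},d)$, a distribution over $(\Gamma(v)\setminus\{d\})\cup\{v\}$ that never puts mass on $d$.

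Next I would recast the goal combinatorially. For a tentative failure set $\mathcal{F}=\{(v,d):v\in S\}$ with $S\subseteq D$ and $|S|=\BigO{n/\log n}$, draw $X_v\sim\mathcal{D}_v$ independently for $v\in S$ and put $f(v):=X_v$. The flow started at $v\in S$ traverses $v,f(v),f^2(v),\dots$ until it first leaves $S$ (that node then forwards toward $d$ by its failure-free table), or forever, in which case every node on the resulting cycle carries infinite load. Hence for every node $w$ the load $\mathcal{L}(w)$ is at least the number of vertices of $S$ from which $w$ is reached by iterating $f$ inside $S$ (the number of ancestors of $w$ in the functional graph $f|_S$), and $w\ne d$ always. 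So it suffices to design $S$, knowing all the $\mathcal{D}_v$, so that \whp, either $f|_S$ contains a directed cycle or some vertex has at least $t:=(1/10)\log n/\log\log n$ ancestors.

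For the case distinction, fix a threshold $p^{\ast}$, let $H=(D,E^{\ast})$ with $E^{\ast}=\{(v,w):\mathcal{D}_v(\{w\})\ge p^{\ast}\}$ be the ``likely-failover'' digraph (out-degrees at most $1/p^{\ast}$), and for each $w$ let $M_m(w)$ be the sum of the $m$ largest values among $\{\mathcal{D}_v(\{w\}):v\in D\}$, with $m=\Theta(n/\log n)$. In the first case some $w^{\ast}$ has $M_m(w^{\ast})\ge C\log n$ for a large constant $C$: take $S$ to be $m$ nodes $v$ maximising $\mathcal{D}_v(\{w^{\ast}\})$, so that the $f$-in-degree of $w^{\ast}$ is a sum of independent Bernoulli trials with mean $\ge C\log n$; a Chernoff bound makes it at least $t$ \whp, and each such flow visits $w^{\ast}$. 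In the complementary case $M_m(w)<C\log n$ for every $w$, so no single node can absorb much backup traffic, and one splits further along $H$: if $H$ has a directed path on $m$ vertices, failing its links piles all $m$ of those flows onto the last vertex; if $H$ has many short pairwise-disjoint cycles, failing all of their links ensures that, \whp, at least one cycle is followed all the way round, giving infinite load; and if $H$ has neither, then the backup mass any $m$ sources place on any node is not only $O(\log n)$ but also spread across many targets with small individual atoms, so the first-hop images $f(v)$, $v\in S$, of a well-chosen $S$ of size $m$ behave like $\Omega(m)$ balls dropped into at most $m$ bins in a sufficiently flat, though non-uniform, way, and a second-moment estimate on the number of $t$-wise collisions yields a bin of load $\ge t$ \whp, exactly as in the uniform lower bound for $n/\log n$ balls into $n$ bins.

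I expect the main obstacle to be making this case distinction watertight: one has to choose $p^{\ast}$ and the counting cut-offs so that the ``predictable'' part of the protocol (seen through $H$, attacked by deterministic pile-ups) and the ``diffuse'' part (attacked by balls-into-bins after flattening the target set) together exhaust all protocols $\mathcal{P}$, leaving none that is simultaneously too concentrated for the balls-into-bins bound and too spread for a deterministic pile-up. A secondary difficulty is that once a flow re-enters $S$ its continuation depends on $f$ at the new vertex, so the variables $X_v$ along a path are not independent of the event being analysed; I would handle this by revealing the $X_v$ one hop at a time and controlling the growth of the ancestor sets inductively.
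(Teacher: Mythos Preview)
Your overall architecture---separate ``predictable'' failover (captured by a likely-edge digraph) from ``diffuse'' failover (handled by a balls-into-bins argument)---is exactly the spirit of the paper's proof, and your first case ($M_m(w^\ast)\ge C\log n$, Chernoff on the in-degree of $w^\ast$) matches one branch of the paper's analysis essentially verbatim. Your small-$\deg(d)$ observation is a clean way to reduce to the dense-neighbourhood case; the paper instead reduces arbitrary $\mathcal{G}$ to the clique $K_n$ by embedding $\mathcal{P}$ into a protocol on $K_n$ that never uses non-edges of $\mathcal{G}$.

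There are, however, two genuine gaps in your case split. First, your ``long path'' subcase is wrong as stated: if $H$ contains a directed path on $m=\Theta(n/\log n)$ vertices, failing all of their links to $d$ does \emph{not} pile $m$ flows onto the last vertex. An edge $(v,w)\in H$ only means $f(v)=w$ with probability $\ge p^\ast$, so the probability that a path of length $\ell$ is realised in the functional graph is at most $(p^\ast)^\ell$, and for $\ell=m$ this is negligible. What actually works---and what the paper does---is to take \emph{many} (at least $n^{\Omega(1)}$) node-disjoint paths or subtrees of length exactly $t=(1/10)\log n/\log\log n$; with $p^\ast=1/\log^4 n$ each such piece is realised with probability $\ge n^{-4/10}$, and independence across disjoint pieces then gives one success \whp.

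Second, your three subcases under ``$M_m(w)<C\log n$ for all $w$'' are not exhaustive. Suppose every $\mathcal{D}_v$ is a point mass, and the functional graph is a forest of short trees (say, each of depth $<t$). Then $H$ has no long path, no cycles, and the distributions are not diffuse, so none of your three subcases applies. The paper closes this by counting: it shows that if few nodes have out-degree $0$ in the likely-edge graph and few cycles are present, then the pruned forest $\Glogt$ has at most $O(n/\log n)$ roots covering $n-1$ nodes, which forces either $\sqrt{n}$ trees of size $\ge t$ or one tree of size $\ge\sqrt{n}/2$; in either case one can carve out $n^{\Omega(1)}$ node-disjoint subtrees of size exactly $t$ and run the independence argument above. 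Conversely, if many nodes have out-degree $0$ in the likely-edge graph (the paper's Case~2), a delicate layering of the mass of each $\mathcal{D}_v$ into geometric buckets is used to show the balls-into-bins lower bound still bites despite non-uniformity; your appeal to ``sufficiently flat'' is where this work hides. Your final paragraph correctly anticipates that exhaustiveness is the crux, but the long-path claim is not just incomplete---it is incorrect and must be replaced by the many-short-pieces argument.
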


Most parts of our analysis are concerned with showing that the lower bound in \cref{thm:lower-bound-main} holds if $\mathcal{G}$ is the complete graph.
Therefore, the following notation will be defined with this constraint in mind. We focus on an arbitrary but fixed destination-based failover protocol $\mathcal{P} = \{\Dist{v}{\mathcal{F}_v}{d} ~|~ v,d \in V, \mathcal{F}_v \subseteq V \setminus \{v\}\}$ and also fix the destination node $d \in V$.
We will construct a set of failures $\mathcal{F}$ that induces a high load \whp by only failing edges of the form $(v,d)$, i.e., edges incident to the destination. This set of failures will have size $|\mathcal{F}| \leq \varepsilon \cdot n/\log n$, where $\varepsilon > 0$ is an arbitrary small constant.
In this setting $\mathcal{F}_v$ -- the set of unreachable neighbors of each node $v$ -- must either be $\{d\}$ or $\emptyset$ for any node $v$.
We then abbreviate $\alpha(v, \{d\},d)$ as $\alpha(v)$, which is the node to which $v$ forwards its load in case the link $(v,d)$ is failed.
Similarly, we abbreviate the corresponding probability distribution $\mathcal{D}(v, \{d\},d)$ as $\mathcal{D}_v$ and define $f_v$ to denote the probability density function (PDF) of $\mathcal{D}_v$.

\begin{definition}[Load Graphs]
\label{def:glog}
\label{def:glogt}
	The following directed graphs lie at the core of our analysis.
	\begin{enumerate}
		\item $\Galpha  = (V \setminus \{d\},\Ealpha)$ where $\Ealpha = \{ (v , \alpha(v, \mathcal{F}_v, d)) ~|~ v \in V \setminus \{d\} \}$ 
		\item  $\Glog = (V \setminus \{d\}, E_{\log})$ a directed graph with and  $E_{\log} := \{ (v,w) ~|~ v \in V \setminus \{d\} \land  f_v(w) > 1/ \log^4 n\}$
		\item $\Glogt$ which is constructed from $\Glog$ by removing edges in the following way:
		\begin{enumerate}
			\item First, remove arbitrary (outgoing) edges from nodes $v$ with $\outdeg(v) > 1$ until every node has degree $\leq 1$.
			\item Second, break any remaining cycle by removing an arbitrary edge from each cycle.
		\end{enumerate}
	\end{enumerate}
\end{definition}

The graph $\Galpha$, given a set of failures $\mathcal{F}$ and destination $d$, describes the path that flows take.  In order to fulfill our goal of creating a high load at some node $v$, we will make sure that some node $v$ is reached by many nodes in $\Galpha$. Note that this graph is a random variable as the entries in $\alpha$ follow distributions.

\begin{observation}
\label{obs:alpha}
\label{obs:load}
	 $\mathcal{L}(v) = \infty$ iff $v$ lies on a cycle in $\Galpha$. Otherwise $\mathcal{L}(v)$ is equal to the number of nodes $w \in V$ such that a path from $w$ to $v$ exists in $\Galpha$.
\end{observation}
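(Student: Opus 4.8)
The plan is to treat $\Galpha$ as a \emph{functional graph}: by its very definition every vertex $v \in V \setminus \{d\}$ has exactly one outgoing edge, namely $(v, \alpha(v,\mathcal{F}_v,d))$. Once the (random) routing table $\alpha$ has been realised, forwarding is entirely deterministic, so the flow emitted by a node $w \in V \setminus \{d\}$ follows the unique maximal directed walk of $\Galpha$ that starts at $w$, i.e.\ $w \to \alpha(w) \to \alpha(\alpha(w)) \to \cdots$; call this the \emph{trajectory} of $w$. Since the randomness only decides which graph $\Galpha$ materialises and not the dynamics inside a fixed $\Galpha$, it suffices to argue for an arbitrary realisation.

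First I would settle the finite/infinite dichotomy. Because $V \setminus \{d\}$ is finite and each vertex has out-degree exactly $1$, the trajectory of $w$ either reaches $d$ — in which case the packet is delivered and the walk terminates — or it never reaches $d$, and then by pigeonhole it must eventually revisit a vertex; from that point on it traverses some directed cycle $C$ of $\Galpha$ forever. This is exactly what the model calls a flow traveling in a forwarding cycle, so every vertex of $C$ has infinite load. In particular, if $v$ lies on a directed cycle $C$ of $\Galpha$, then the trajectory of $v$ itself runs once around $C$ and back to $v$, so $v$ carries a cyclic flow and $\mathcal{L}(v) = \infty$. Conversely, since trajectories are uniquely determined, the only way a flow can be stuck in a cycle through $v$ is for $v$ to belong to a directed cycle of $\Galpha$. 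This proves the first sentence.

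It remains to count the load when $v$ lies on no directed cycle of $\Galpha$. A flow crosses $v$ if and only if $v$ appears on its trajectory, which by definition of the trajectory happens exactly when there is a directed path from $w$ to $v$ in $\Galpha$ (the trivial path handling $w = v$, i.e.\ the flow $v$ emits itself). As $v$ is on no cycle, every trajectory passes through $v$ at most once, so each such flow contributes precisely $1$ to $\mathcal{L}(v)$. Hence $\mathcal{L}(v)$ equals the number of vertices $w$ of $\Galpha$ from which $v$ is reachable; since $d$ is not a vertex of $\Galpha$ and emits no flow, this coincides with the count of $w \in V$ in the statement, and it is at most $n-1 < \infty$, consistent with $\mathcal{L}(v) \neq \infty$.

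I do not expect a genuine obstacle here — the statement is essentially a reformulation of the forwarding dynamics in graph-theoretic language. The only points requiring care are the edge cases: the self-loop $(v,v)$ invoked when $\mathcal{F}_v = \Gamma(v)$, where $\alpha(v,\Gamma(v),d) = v$ turns $\{v\}$ into a length-one cycle and correctly yields $\mathcal{L}(v) = \infty$; and the ``dangling'' edges of the form $(v,d) \in \Ealpha$ (recall $d \notin V \setminus \{d\}$), which never lie on a cycle and merely terminate a trajectory at the destination. Spelling these out is the bulk of the (short) argument.
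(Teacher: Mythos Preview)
Your argument is correct. The paper does not actually supply a proof of this observation: it is stated as immediate from the definitions of $\Galpha$ and of load, and then used directly in later lemmas. Your write-up simply makes explicit the functional-graph reasoning the paper leaves implicit, and the edge cases you flag (the self-loop when $\mathcal{F}_v = \Gamma(v)$, and edges into $d$) are handled consistently with the model.
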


The graph $\Glog$ allows us to capture whether the protocol $\mathcal{P}$ contains many failover edges that may be predicted by the adversary. Note, if the edge $(v,w) \in E_{\log}$ and $(v,d)$ are failed, then $v$ forwards its flows to $w$ with probability larger than $1/\log^4 n$.  Finally, $\Glogt$ is just a subgraph of $\Glog$, which does not contain any cycles and simplifies our analysis in some cases. These graphs are related to $\Galpha$ in the following way.

\begin{observation}
\label{obs:edge-alpha}
\label{obs:edge-alpha-t}
\label{obs:pdf}
	If $\mathcal{F}_v = \{d\}$, then $(v,w) \in \Galpha$ with probability $f_v(w)$. This probability is independent of other edges $(s,t)$ with $s \neq v$ being in $\Galpha$. In case $(v,w) \in \Glog$ (or $(v,w) \in \Glogt$) it follows that $f_v(w) > 1/\log^4 n$. 
\end{observation}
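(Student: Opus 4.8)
The plan is to verify each of the three assertions directly from the definitions in the model section and in \cref{def:glog}, since they are all essentially bookkeeping about the per-node random draws rather than anything requiring a probabilistic argument.

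For the first assertion, I would observe that when $\mathcal{F}_v = \{d\}$ the unique outgoing edge of $v$ in $\Ealpha$ is $(v,\alpha(v,\mathcal{F}_v,d)) = (v,\alpha(v))$, where by our abbreviation $\alpha(v)$ is drawn from $\mathcal{D}_v = \mathcal{D}(v,\{d\},d)$. Hence the event ``$(v,w)\in\Galpha$'' is exactly the event ``$\alpha(v) = w$'', which by the definition of $f_v$ as the PDF of $\mathcal{D}_v$ has probability $f_v(w)$. If $w \notin (\Gamma(v)\setminus\{d\})\cup\{v\}$ then $f_v(w)=0$ and both sides vanish, so the claim holds in that degenerate case as well.

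For the independence claim, I would invoke the modeling assumption that the routing-table entries are sampled independently across nodes: for $s \neq v$ the draw $\alpha(s,\mathcal{F}_s,d)$ is independent of $\alpha(v,\mathcal{F}_v,d)$. Since whether $(s,t)\in\Ealpha$ is a function of $\alpha(s,\mathcal{F}_s,d)$ alone, while whether $(v,w)\in\Ealpha$ is a function of $\alpha(v,\mathcal{F}_v,d)$ alone, the event $(v,w)\in\Galpha$ is independent of every event $(s,t)\in\Galpha$ with $s\neq v$, and in fact of the entire joint configuration of the outgoing edges of the nodes in $V\setminus\{v\}$. Finally, for the last assertion, $(v,w)\in\Glog$ means by the definition of $E_{\log}$ that $f_v(w) > 1/\log^4 n$; and since $\Glogt$ arises from $\Glog$ only by deleting edges (steps (a) and (b) of \cref{def:glogt}), the edge set of $\Glogt$ is contained in $E_{\log}$, so $(v,w)\in\Glogt$ likewise yields $f_v(w) > 1/\log^4 n$.

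There is no genuine obstacle here; the only point that warrants care is making the ``independent draws across nodes'' assumption explicit, since the subsequent analysis (in particular the balls-into-bins style arguments sketched in the introduction) relies on this observation to treat the outgoing edges from distinct source nodes as independent trials.
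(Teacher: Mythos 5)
Your proposal is correct and matches the paper's treatment: the paper states this as an observation with no proof, treating it exactly as you do — an immediate unwinding of the definitions ($\alpha(v)$ drawn from $\mathcal{D}_v$ with PDF $f_v$, entries sampled independently across nodes, and $E_{\log}$ defined by the threshold $f_v(w) > 1/\log^4 n$ with $\Glogt$ a subgraph of $\Glog$). Making the cross-node independence assumption explicit is a reasonable touch, but there is no substantive difference from the paper's (implicit) argument.
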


Intuitively, if $\Glog$ contains many edges, then we are in a setting close to determinstic failover protocols. By carefully failing edges of the form $(v,d)$, we have a good chance to make them appear in $\Galpha$ and create a node $v$ which is reached by many other nodes.
One final definition involves the natural definition of a reverse tree:
it is reversed in the sense that all edges are oriented towards the root.

\begin{definition}[Reverse Tree]
\label{def:trees}
   We call a directed graph $\mathcal{G}$ \emph{reverse tree} iff 
   \begin{enumerate}
       \item there is a node $r$ in $\mathcal{G}$ with $\outdeg(r) = 0$ that can be reached from all nodes in $\mathcal{G}$, and 
       \item every node $v \neq r$ in $\mathcal{G}$ has $\outdeg(v) = 1$.
   \end{enumerate}
   We call $r$ the \emph{reverse root} of $\mathcal{G}$. Furthermore, we call a graph $\mathcal{G}'$  \emph{reverse subtree} of $\mathcal{G}$ iff $\mathcal{G}'$ is both, a reverse tree and a subgraph of $\mathcal{G}$.
\end{definition}

Note, from the construction of $\Glogt$ it follows that it is a reverse forest.
Let the sets $\Rootset$ and $\Rootsett$ contain the nodes of out-degree $0$ in $\Glog$ and $\Glogt$, respectively.
\cref{obs:edge-alpha} implies for nodes in $\Rootset$ that we cannot easily predict their forwarding targets. Additionally, when constructing $\Glogt$ (see \cref{def:glogt}) in the first step, not a single node has its out-degree modified to $0$.  This can only happen in the second step, where exactly one node turns into a reverse root. Therefore, the difference $|\Rootsett| - |\Rootset|$ is equal to the number of cycles that were removed in this second step. 

\paragraph{Analysis Outline:} In the following subsections of the analysis we focus on the complete graph.
Depending on the structure of $\Glog$ and $\Glogt$ we split our analysis into 3 cases. 
As this graphs are inferred from $\mathcal{P}$, this can also be seen as a distinction between different types of routing protocols. In \cref{sec:lb-complete-case1}, we consider the case $|\Rootsett| - |\Rootset| \geq \sqrt{n}$, which intuitively corresponds to the case where $\mathcal{P}$ is prone to produce forwarding loops. In the second case (\cref{sec:lb-complete-case2}), we consider $|\Rootset| \geq \varepsilon n/\log n$, which implies that there are many nodes of degree $0$ in $\Glog$. Such nodes do not have a preferred forwarding target in case their link to $d$ is failed. They behave similarly to nodes that forward their flows to neighbors selected uniformly at random.
In the last case (\cref{sec:lb-complete-case3}), we consider $|\Rootset| \leq \varepsilon n/\log n$. In this case most nodes have at least one out-going edge in $\Glog$, which can be exploited by the adversary. In any of the three cases, we show that, by failing at most $\BigO{n/ \log n}$ edges, a  load of $(1/10) \cdot \log n / \log \log n$ is accumulated at some node in the network w.h.p.
Finally, we give the proof of \cref{thm:lower-bound-main} in \cref{sec:lower-theorem-proof}.

\subsection{\texorpdfstring{Analysis Case 1: $|\Rootsett| - |\Rootset| \geq \sqrt{n}$}{Analysis Case 1}}
\label{sec:lb-complete-case1}
Recall, the condition of this case implies that $\Glog$ contains at least $n^{\varepsilon}$ many cycles. The idea is, to fail the edge $(v,d)$ for many nodes $v$ that lie on such a cycle. Then, either the whole cycle or at least a long path of nodes lying on such a cycle appears in $\Galpha$ \whp and causes high load. We present the detailed proof in \cref{sec:details-lower-bound} on page \pageref{proof:lb-case1}. 

\begin{restatable}{lemma}{lbcaseone}
\label{lem:lb-case1}
	There exists a set of failures $\mathcal{F} \subseteq \{ (v,d) ~|~ $v$ \text{ lies on a cycle in } \Glog\}$ with $|\mathcal{F}| = \sqrt{n} \cdot (1/10) \cdot \log n / \log \log n$ such that some node $v$ that lies on a cycle has $\mathcal{L}(v) \geq (1/10) \cdot \log n / \log \log n$ \whp.
\end{restatable}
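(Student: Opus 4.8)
The plan is to use the $\ge\sqrt n$ vertex-disjoint cycles supplied by the case hypothesis as $\ge\sqrt n$ essentially independent trials: from each cycle I would carve out a small ``target structure'' and fail exactly the edges $(v,d)$ for the tails $v$ of that structure, so that with probability $n^{-2/5-o(1)}$ the structure survives into $\Galpha$ and forces some node to overflow; a union bound over the trials then gives the statement \whp. Throughout set $\tau:=(1/10)\cdot\log n/\log\log n$ and $t:=\lceil\tau\rceil$.

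First I would extract the cycles. After the first step in the construction of $\Glogt$ (\cref{def:glogt}), every node has out-degree at most one, so that intermediate graph is a functional graph and its directed cycles are pairwise vertex-disjoint; each node $v$ on such a cycle has a unique out-neighbour there, say $\sigma(v)$, and since that edge also belongs to $\Glog$ we get $f_v(\sigma(v))>1/\log^4 n$ from \cref{obs:edge-alpha}. The second step of the construction deletes one edge from each of these cycles, and, as observed right after \cref{def:trees}, this is exactly why $|\Rootsett|-|\Rootset|$ equals the \emph{number} of such cycles. Hence the case assumption $|\Rootsett|-|\Rootset|\ge\sqrt n$ yields pairwise vertex-disjoint cycles $C_1,\dots,C_m$ of $\Glog$ with $m\ge\sqrt n$ and lengths $\ell_1,\dots,\ell_m\ge 1$.

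Next I would set up the failures and the success events. For each $j$, if $\ell_j\le t$ let the target structure $T_j$ be all of $C_j$, and if $\ell_j>t$ let $T_j$ be a directed path $v_0\to v_1\to\cdots\to v_t$ on $t+1$ distinct vertices of $C_j$; either way $T_j$ has at most $t$ edges. Let $H_j$ be the set of tails of those edges and put $\mathcal F:=\{(v,d):v\in\bigcup_j H_j\}$. By vertex-disjointness $|\mathcal F|\le m t\le\lceil\tau\rceil\sqrt n$, which is $\BigO{n/\log n}$ and matches the stated size up to rounding, and every failed edge has the form $(v,d)$ with $v$ on a cycle of $\Glog$, as required. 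Since $\mathcal F_v=\{d\}$ for every $v\in\bigcup_j H_j$, \cref{obs:edge-alpha} gives that the event $E_j:=\{\text{all edges of }T_j\text{ appear in }\Galpha\}$ has probability $\prod_{v\in H_j}f_v(\sigma(v))>(1/\log^4 n)^t$, and, using \cref{obs:edge-alpha} together with the disjointness of the $C_j$, the events $E_1,\dots,E_m$ are mutually independent. Moreover $E_j$ forces high load by \cref{obs:load}: if $\ell_j\le t$ then all of $C_j$ lies on a cycle of $\Galpha$ and has load $\infty$, and if $\ell_j>t$ then $v_t$ is reachable in $\Galpha$ from the $t$ distinct vertices $v_0,\dots,v_{t-1}$, so $\mathcal L(v_t)\ge t\ge\tau$.

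Finally, a union bound. With $q:=(1/\log^4 n)^t$ and $t\le\tau+1$ one gets $\log q=-4t\log\log n\ge -(2/5)\log n-4\log\log n$, hence $q\ge n^{-2/5}/\polylog n$ and $q\sqrt n\ge n^{1/10}/\polylog n\to\infty$. By independence $\Pr\!\bigl[\bigcap_{j}\overline{E_j}\bigr]\le(1-q)^{m}\le e^{-q\sqrt n}=n^{-\omega(1)}$, so \whp some $E_j$ occurs and thus some node has load at least $(1/10)\cdot\log n/\log\log n$. I expect the only delicate point to be this last estimate: since $q$ shrinks polynomially in $n$, the constant in the target load must be kept strictly below $1/2$ (it is $1/10$ here) so that $q\sqrt n$ still diverges, and one must use the genuine value $q=n^{-2/5-o(1)}$ — a cruder bound such as $q\ge n^{-1/2}$ would only yield a constant failure probability.
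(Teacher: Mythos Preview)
Your proof is correct and essentially identical to the paper's: both extract $\sqrt n$ vertex-disjoint cycles from the case hypothesis, carve out a path of length $\lceil\tau\rceil$ (or the whole cycle if shorter) from each, fail the corresponding edges to $d$, show each structure survives into $\Galpha$ with probability $\ge n^{-2/5-o(1)}$, and use independence across cycles to conclude some structure survives \whp\ (the paper invokes Chernoff where you use the direct bound $(1-q)^m\le e^{-qm}$, but this is cosmetic). One trivial fix: the inequality $mt\le\lceil\tau\rceil\sqrt n$ needs $m=\sqrt n$, so first restrict to exactly $\sqrt n$ of the $m\ge\sqrt n$ available cycles, as the paper does explicitly.
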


\subsection{\texorpdfstring{Analysis Case 2: $|\Rootset| \geq \varepsilon n/\log n$}{Analysis Case 2}}
\label{sec:lb-complete-case2}
In this setting, many nodes $v$ have out-degree $0$ in $\Glog$. In case the link $(v,d)$ of such a node is failed, it is hard to predict the failover edge $(v,\alpha(v))$ as these nodes have multiple potential forwarding targets. However, this can be exploited as there must be a set of nodes which are potential forwarding targets of many nodes in $\Rootset$. Similarly as in the analysis of a balls-into-bins process, we deduce that, \whp, there is one such node that receives load from $\Omega(\log n \cdot \log \log n)$ nodes in $\Rootset$. 
To simplify our analysis, we let $\Rootset'$ be an arbitrary but fixed subset of $\Rootset$ with size exactly $\varepsilon n/ \log n$. The proof for the following statement is given in \cref{sec:details-lower-bound} on page \pageref{proof:case2-1}. Note that, if the second statement of the following lemma holds, we are already done. 

\begin{restatable}{lemma}{lbcaseonetwo}
\label{lem:case-2-1}
	For $\mathcal{F} = \{ (v,d) ~|~ v \in \Rootset'\}$ one of the following statements holds:
	\begin{enumerate}
		\item In expectation, at least $n^{1/8}$ many nodes $w \in V$ have each at least $(1/10) \cdot  \log  n/ \log \log n$ incident edges that originate from $V_R'$.
		\item \Whp, there exists a node $w$ with $\mathcal{L}(w) = \log^2 n (1 - o(1))$.
	\end{enumerate}
\end{restatable}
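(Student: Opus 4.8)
\emph{Proof plan.} With $\mathcal F=\{(v,d)\mid v\in\Rootset'\}$ as in the statement, every $v\in\Rootset'$ has $\mathcal F_v=\{d\}$, so it forwards its flow to $\alpha(v)\sim\mathcal D_v$; by \cref{obs:edge-alpha} these choices are mutually independent, and since $v\in\Rootset'\subseteq\Rootset$ has out-degree $0$ in $\Glog$ we have $f_v(w)\le 1/\log^4 n$ for every $w$. For a node $w$ I write $X_w:=|\{v\in\Rootset':\alpha(v)=w\}|$ for the number of edges of $\Galpha$ that enter $w$ and originate in $\Rootset'$; thus $X_w=\sum_{v\in\Rootset'}B_v$ with independent $B_v\sim\Bin(1,f_v(w))$, with mean $\mu_w:=\sum_{v\in\Rootset'}f_v(w)$, and $\sum_w\mu_w=|\Rootset'|=\varepsilon n/\log n=:m$. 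Set $k:=\lceil(1/10)\log n/\log\log n\rceil$, and note that $\alpha(v)=w$ gives a $v\to w$ path in $\Galpha$, so by \cref{obs:load} $\mathcal L(w)\ge X_w$. The plan is a dichotomy on $\max_w\mu_w$, with threshold $\log^2 n$.

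If $\max_w\mu_w\ge\log^2 n$, I would take $w^\ast$ attaining the maximum and apply a Chernoff bound (\cref{thm:chernoff}) to $X_{w^\ast}$ with relative deviation $(\log n)^{-1/4}$; since $\Ex{X_{w^\ast}}=\mu_{w^\ast}\ge\log^2 n$ the failure probability is $n^{-\omega(1)}$, so, \whp, $X_{w^\ast}\ge\mu_{w^\ast}(1-o(1))\ge\log^2 n\,(1-o(1))$, hence $\mathcal L(w^\ast)\ge\log^2 n\,(1-o(1))$ --- the second alternative.

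Otherwise $\mu_w<\log^2 n$ for all $w$, and the target is $\Ex{Y}\ge n^{1/8}$ for $Y:=|\{w:X_w\ge k\}|$, i.e.\ $\sum_w\Pr[X_w\ge k]\ge n^{1/8}$. First I would discard the bins with $\mu_w<\tau:=\varepsilon/(3\log n)$: there are at most $n$ of them, so together they hold at most $n\tau=m/3$ of the total mass, and the ``good'' bins $G=\{w:\tau\le\mu_w<\log^2 n\}$ hold mass $\ge 2m/3$. The heart of the argument is the following lower bound, robust for \emph{all} $w\in G$:
\[
\Pr[X_w\ge k]\ \ge\ \Pr[X_w=k]\ =\ \left(\prod_{v\in\Rootset'}(1-f_v(w))\right)e_k\!\left(\left\{\frac{f_v(w)}{1-f_v(w)}\right\}_{v}\right)\ \ge\ \frac{e^{-\mu_w}\,\mu_w^{\,k}}{4\,k!},
\]
where $e_k$ denotes the $k$-th elementary symmetric polynomial. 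Here $\prod_v(1-f_v(w))\ge e^{-\mu_w-o(1)}\ge\tfrac12 e^{-\mu_w}$ since $\sum_v f_v(w)^2\le\mu_w\max_v f_v(w)=o(1)$, and $e_k(\{f_v(w)/(1-f_v(w))\})\ge e_k(\{f_v(w)\})\ge\tfrac1{k!}\mu_w^{\,k}(1-o(1))$, the last step coming from expanding $(\sum_v f_v(w))^k=k!\,e_k(\{f_v(w)\})+R$ and bounding the repeated-index remainder by $R\le\tfrac{k^2}2(\max_v f_v(w))\,\mu_w^{\,k-1}$, whose relative weight $k^2\max_v f_v(w)/(2\mu_w)\le 3k^2/(2\varepsilon\log^3 n)$ is $o(1)$. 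Then I would split $G=A\cup B$ at $\mu_w=2k$, with $A=\{w\in G:\mu_w\ge 2k\}$ and $B=G\setminus A$; one of them holds mass $\ge m/3$. If $A$ does, then $|A|\ge m/(3\log^2 n)=\varepsilon n/(3\log^3 n)$ and each $w\in A$ has $\Pr[X_w\ge k]\ge\tfrac12$ (Chernoff, as $k\le\mu_w/2$), so $\Ex{Y}\ge\varepsilon n/(6\log^3 n)\gg n^{1/8}$. If $B$ does, then for $w\in B$ the display gives $\Pr[X_w\ge k]\ge e^{-2k}\mu_w^{\,k}/(4k!)\ge e^{-2k}\tau^{\,k-1}\mu_w/(4k!)$, hence $\Ex{Y}\ge\frac{e^{-2k}\tau^{\,k-1}}{4k!}\sum_{w\in B}\mu_w\ge\frac{e^{-2k}\tau^{\,k-1}m}{12\,k!}$; substituting $\tau$, $m$ and the value of $k$ and taking logarithms yields $\Ex{Y}\ge n^{\,0.8-o(1)}\gg n^{1/8}$ --- the first alternative.

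The step I expect to be the main obstacle is the lower bound on $\Pr[X_w\ge k]$ in the regime $\mu_w=\Theta(1/\log n)$: there $\Pr[X_w\ge k]$ is only $n^{-\Theta(1)}$, so any Le Cam-type total-variation approximation of $X_w$ by a Poisson variable (whose error is $\gtrsim\max_v f_v(w)$ and never negative) is useless term by term. Keeping the exact product--elementary-symmetric identity for $\Pr[X_w=k]$ and estimating $e_k$ by hand circumvents this, but only because $\max_v f_v(w)\le 1/\log^4 n$ --- the very property defining $\Rootset\supseteq\Rootset'$; the same light-ball bound is also what makes throwing away the sub-$\tau$ bins harmless. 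The remaining ingredients are Chernoff bounds and bookkeeping.
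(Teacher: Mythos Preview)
Your proof is correct, and it shares with the paper the same top-level dichotomy on $\max_w\mu_w$ at threshold $\log^2 n$; the high-mean case (your second alternative) is handled identically via Chernoff.

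Where you diverge is in the low-mean case. The paper, after isolating the good bins $\mathcal R=\{w:\mu_w>\varepsilon/(2\log n)\}$ (and showing $|\mathcal R|>n^{7/8}$ by the same mass-counting you use), does \emph{not} lower-bound $\Pr[X_w\ge k]$ via the Poisson-like expression. Instead, for each fixed $w\in\mathcal R$ it buckets the contributing nodes by the magnitude of $f_v(w)$ into dyadic-in-$\log n$ ranges $\Rootset'(i)=\{v:f_v(w)\in(1/\log^{5+i}n,\,1/\log^{4+i}n]\}$, finds by pigeonhole some $j<\log n/\log\log n$ with $|\Rootset'(j)|>\log^j n$, and then minorizes $X_w$ by a \emph{homogeneous} binomial $\Bin(\log^j n,\,1/\log^{4+j}n)$; a direct binomial tail computation then gives $\Pr[X_w\ge k]\ge n^{-6/10}$, hence $\Ex{Y}\ge n^{7/8-6/10}=n^{11/40}>n^{1/8}$.

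Your route replaces this combinatorial bucketing by a purely analytic estimate: the exact product/elementary-symmetric identity for $\Pr[X_w=k]$ together with the bound $e_k(\{f_v\})\ge(1-o(1))\mu_w^k/k!$ (valid precisely because $\max_v f_v(w)\le 1/\log^4 n$, the defining property of $\Rootset$). This avoids the pigeonhole step, handles all $\mu_w$ uniformly, and in fact yields a stronger exponent ($n^{0.8-o(1)}$ versus the paper's $n^{11/40}$). The paper's bucketing, on the other hand, reduces everything to a single homogeneous binomial and is arguably easier to verify without the symmetric-function machinery. Both arguments ultimately rest on the same two facts: the light-tail property $f_v(w)\le 1/\log^4 n$ and the total mass $\sum_w\mu_w=\varepsilon n/\log n$.
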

\vspace{0.2cm}
The following statement can be shown with the help of standard-techniques, namely the \emph{method of bounded differences} (see  \cref{thm:mcdiarmid} on page  \pageref{thm:mcdiarmid}-- c.f \cite{MRT12}). The detailed proof is given in \cref{sec:details-lower-bound} on page \pageref{proof:case2-2}.

\begin{restatable}{lemma}{lbcasetwotwo}
\label{lem:case2-2}
	 Let $\mathcal{F} = \{ (v,d) ~|~ v \in \Rootset'\}$ and assume that the first statement of \cref{lem:case-2-1} holds. Then, \whp, there exists a node $v$ such that $\mathcal{L}(v) > (1/10) \cdot  \log n / \log \log n$.
\end{restatable}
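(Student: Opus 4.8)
The plan is to view the forwarding of the flows that originate at $\Rootset'$ as a balls-into-bins experiment and to show that $\whp$ some bin is hit at least $t := (1/10)\cdot \log n/\log\log n$ times. Since $\mathcal{F} = \{(v,d)\mid v\in\Rootset'\}$, every $v\in\Rootset'$ has $\mathcal{F}_v=\{d\}$, so by \cref{obs:edge-alpha} the targets $\alpha(v)=\alpha(v,\{d\},d)$, $v\in\Rootset'$, are mutually independent, with $\Pr[\alpha(v)=w]=f_v(w)$ over $w\in V\setminus\{d\}$. Regarding each $v\in\Rootset'$ as a ball thrown into bin $\alpha(v)$, let $X_w:=\lvert\{v\in\Rootset'\mid\alpha(v)=w\}\rvert$ be the load of bin $w$. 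Each such ball is a distinct flow passing through $w$, so $\mathcal{L}(w)\ge X_w$, and counting in addition the flow originating at $w$ (or noting that $\mathcal{L}(w)=\infty$ when $\alpha(w)=w$) gives $\mathcal{L}(w)\ge X_w+1$. Hence it suffices to prove that $\whp$ there is a bin $w$ with $X_w\ge t$.

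Let $N:=\lvert\{w\in V\setminus\{d\}\mid X_w\ge t\}\rvert$; by the assumption that the first statement of \cref{lem:case-2-1} holds, $\Ex{N}\ge n^{1/8}$. The key observation is that the bin loads $(X_w)_w$ are negatively associated, being a sum of independent single-ball indicator vectors (each a one-trial multinomial, hence negatively associated); applying the nondecreasing functions $x\mapsto\mathbb{1}[x\ge t]$ to the individual coordinates preserves negative association, so the indicators $(\mathbb{1}[X_w\ge t])_w$ are negatively associated as well. In particular $\Ex{\mathbb{1}[X_w\ge t]\,\mathbb{1}[X_{w'}\ge t]}\le\Ex{\mathbb{1}[X_w\ge t]}\Ex{\mathbb{1}[X_{w'}\ge t]}$ for $w\ne w'$, whence
\[
  \Ex{N^2}=\Ex{N}+\sum_{w\ne w'}\Ex{\mathbb{1}[X_w\ge t]\,\mathbb{1}[X_{w'}\ge t]}\le\Ex{N}+\Ex{N}^2 .
\]
The second-moment method ($\Pr[N\ge 1]\ge\Ex{N}^2/\Ex{N^2}$ for the non-negative integer variable $N$) then yields $\Pr[N\ge 1]\ge\Ex{N}/(1+\Ex{N})\ge 1-n^{-1/8}$. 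Thus $\whp$ some bin $w$ satisfies $X_w\ge t$, and therefore $\mathcal{L}(w)\ge X_w+1>t$, proving the lemma.

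I expect the main obstacle to be turning the rather weak guarantee $\Ex{N}\ge n^{1/8}$ into a $\whp$ statement. A reverse-Markov bound only gives $\Pr[N\ge1]=n^{-\Omega(1)}$, which is far too weak, and the obvious bounded-differences estimate for $N$ -- which is $1$-Lipschitz in each of the $\lvert\Rootset'\rvert=\Theta(n/\log n)$ independent choices -- concentrates $N$ only on the scale $\Theta(\sqrt{n/\log n})$, dwarfing the guaranteed mean $n^{1/8}$. The way around this is precisely the negative correlation of the heavy-bin events inherent to balls-into-bins: ``many bins heavy in expectation'' together with pairwise negative correlation forces at least one bin to be heavy with overwhelming probability. (Alternatively, a one-sided concentration bound for self-bounding/configuration functions, giving $\Pr[N=0]\le e^{-\Omega(\Ex{N})}$, would serve the same purpose in place of the second-moment computation.)
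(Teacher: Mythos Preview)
Your proof is correct and takes a genuinely different route from the paper. The paper argues via the method of bounded differences (McDiarmid): the popular-node count $N$ is $1$-Lipschitz in each of the $|\Rootset'|=\varepsilon n/\log n$ independent choices $\alpha(v)$, and one then concludes $N\ge1$ from concentration around $\Ex{N}$. As you observe, this concentrates $N$ only on scale $\sqrt{n/\log n}$; to make that work the paper's proof actually invokes $\Ex{N}>n^{7/8}$ rather than the $n^{1/8}$ appearing in the stated first alternative of \cref{lem:case-2-1}. Your negative-association plus second-moment argument is more economical: it uses only pairwise negative correlation of the heavy-bin indicators, giving $\Pr[N=0]\le 1/(1+\Ex{N})\le n^{-1/8}$ directly from the stated hypothesis. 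In short, the paper trades a stronger expectation bound for a black-box concentration inequality, while your approach exploits the specific balls-into-bins structure and gets by with the weaker hypothesis.
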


\subsection{\texorpdfstring{Analysis Case 3: $|\Rootset| <  \varepsilon n/ \log n$}{Analysis Case 3}}
\label{sec:lb-complete-case3}

We assume throughout this section that the condition $|\Rootsett| - |\Rootset| \geq \sqrt{n}$ does not hold as that case was already analysed in \cref{sec:lb-complete-case1}. This, however, implies that $|\Rootsett| < |\Rootset|  +  \sqrt{n} < 2 \varepsilon n/ \log n$. Recall \cref{def:trees} and  that $| \Rootsett|$ is the number of reverse roots; or in other words, the number of reverse trees in the forest $\Glogt$. 
The idea behind this section is simple. If, for the nodes of some reverse subtree in $\Glogt$ of size $(1/10) \cdot \log n \cdot \log \log n$, we fail the edges incident to destination $d$, then this whole subtree will appear in $\Galpha$ with probability $\log^{-(4/10) \log n / \log \log n} n = n^{-4/10}$. By \cref{obs:load} the root of this tree will then receive $(1/10) \cdot \log n / \log \log n$ load. The main challenge is to construct a large enough set of independent trees such that at least one of them appears in $\Galpha$ \whp.
This is where the following counting argument comes into play.
\begin{observation}
\label{obs:case3}
	If  $|\Rootset| <  \varepsilon n/ \log n$ and $|\Rootsett| - |\Rootset| < \sqrt{n}$ then $\Glogt$ must contain one of the following
	\begin{enumerate}
		\item $\sqrt{n}$ disjoint reverse trees of size $> (1/10) \cdot \log n / \log \log n$ each, or
		\item one reverse tree of size > $\sqrt{n} / 2$.
	\end{enumerate}
\end{observation}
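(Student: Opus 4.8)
The plan is a pure counting (pigeonhole) argument, splitting on whether statement~(2) already holds. Recall that $\Glogt$ is a reverse forest on the vertex set $V \setminus \{d\}$, hence it has exactly $n-1$ vertices partitioned into $|\Rootsett|$ reverse trees, one per reverse root. From the discussion preceding the observation we already have $|\Rootsett| < |\Rootset| + \sqrt{n} < 2\varepsilon n/\log n$ for $n$ large, using the Case-3 hypothesis $|\Rootset| < \varepsilon n/\log n$, the negation of Case~1 ($|\Rootsett|-|\Rootset| < \sqrt n$), and $\sqrt n = o(n/\log n)$. If some reverse tree of $\Glogt$ has more than $\sqrt n/2$ vertices, statement~(2) holds and we are done; so from here on I would assume every reverse tree of $\Glogt$ has at most $\sqrt n/2$ vertices and aim to establish statement~(1).

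Next I would set $T := (1/10)\cdot \log n/\log\log n$ and call a reverse tree of $\Glogt$ \emph{small} if it has at most $T$ vertices and \emph{large} otherwise. Since there are only $|\Rootsett| < 2\varepsilon n/\log n$ trees in total, the small trees jointly contain at most $2\varepsilon n/\log n \cdot T = (\varepsilon/5)\cdot n/\log\log n = o(n)$ vertices. Therefore the large trees jointly contain at least $n-1-o(n) \ge n/2$ vertices for $n$ sufficiently large. Because each large tree has at most $\sqrt n/2$ vertices by our standing assumption, there must be at least $(n/2)/(\sqrt n/2) = \sqrt n$ large trees. These large trees are pairwise vertex-disjoint (they are distinct connected components of $\Glogt$), each is a reverse tree (a component of a reverse forest, hence a reverse subtree of $\Glogt$), and each has more than $T = (1/10)\cdot \log n/\log\log n$ vertices. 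That is exactly statement~(1), completing the argument.

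There is no genuinely hard step here — the whole thing is elementary pigeonhole. The only points that require a little care, and which I would spell out explicitly, are: (i) justifying the bound $|\Rootsett| < 2\varepsilon n/\log n$, which as noted follows from the negation of Case~1 together with the Case-3 hypothesis and the fact that $\sqrt n$ is negligible compared to $n/\log n$; and (ii) confirming that the $o(n)$ vertex loss from the small trees is indeed asymptotically negligible, so that the large trees retain at least (say) half of the $n-1$ vertices for all large $n$. Both are routine, so I expect the proof to be short.
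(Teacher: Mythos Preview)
Your proof is correct and is essentially the same pigeonhole argument as the paper's. The only cosmetic difference is that the paper argues by contradiction (assume neither (1) nor (2) and bound the total number of vertices of $\Glogt$ below $n-1$), whereas you argue directly (assume not~(2) and deduce~(1)); the underlying counting is identical.
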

\begin{proof}
The proof follows by a counting argument. Assume both statements do not hold. Then, the  number of nodes $\Glogt$ contains can be upper-bounded by
\[
	 \left( \frac{2\varepsilon n}{ \log n} - \sqrt{n} \right) \cdot (1/10) \frac{\log n}{\log \log n} +  \sqrt{n} \cdot \frac{\sqrt{n}}{2} < n-1.
\]
The first product reflects that all but $\sqrt{n}$ trees have size at most $(1/10) \cdot \log n / \log \log n$. The second product reflects the worst-case of each of these at most $\sqrt{n}$  remaining trees having size $\sqrt{n} / 2$.
The above inequality chain leads to a contradiction as $\Glogt$ contains $n-1$ nodes.
\end{proof}

We now present a lemma for both of the cases in \cref{obs:case3}, each achieving the lower bound in \cref{thm:lower-bound-main}. The proofs follow the ideas sketched at the start of this section.
In case of \cref{lem:case3-2}, the tree of size $\geq \sqrt{n}/2$ needs to be split into $\sqrt{n} / \polylog n$ node-disjoint subtrees of size $(1/10) \cdot \log n / \log \log n$.
The proofs are given in \cref{sec:details-lower-bound}.

\begin{restatable}{lemma}{lemcasethreeone}
\label{lem:case3-1}
	Assume there are $\sqrt{n}$ reverse trees in $\Glogt$ of size at least $(1/10) \cdot \log n / \log \log n$ each. Then, there exists a failure set $\mathcal{F} \subseteq \{ (v,d) ~|~ v \text{ lies on a tree in $\Glogt$}\}$ with $|\mathcal{F}| = \sqrt{n} \cdot (1/10)  \cdot \log n / \log \log n$ such that a node $v$ has $\mathcal{L}(v) > (1/10)  \cdot \log n / \log \log n$.
\end{restatable}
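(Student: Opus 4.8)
The plan is to extract from each of the $\sqrt n$ given reverse trees one node-disjoint reverse subtree of a carefully chosen small size $t$, fail the edge to $d$ from every node lying in one of these subtrees, and argue that with overwhelming probability at least one of the subtrees materializes \emph{in its entirety} inside $\Galpha$ — which by \cref{obs:load} pins load $\ge t$ onto its reverse root. Concretely, fix $t := \lfloor (1/10)\log n/\log\log n\rfloor + 1$, so that $(1/10)\log n/\log\log n < t \le (1/10)\log n/\log\log n + 1$. For the extraction I would use that in any reverse tree with at least $t$ nodes and reverse root $r$, a breadth-first exploration from $r$ that traverses the tree's edges backwards (from the reverse root toward the leaves), halted once exactly $t$ nodes have been reached, produces a connected reverse subtree $\mathcal{T}$ of size $t$ with reverse root $r$: every visited non-root node retains its unique $\Glogt$-out-edge, and that edge necessarily points to an already-visited node. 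Doing this in each of the $\sqrt n$ trees yields node-disjoint reverse subtrees $\mathcal{T}_1,\dots,\mathcal{T}_{\sqrt n}$ with reverse roots $u_1,\dots,u_{\sqrt n}$. I would then set $\mathcal{F} := \{(v,d) : v\in\mathcal{T}_i \text{ for some }i\}$; since the $\mathcal{T}_i$ are disjoint subgraphs of trees of $\Glogt$, this $\mathcal{F}$ is incident only to $d$, lies in the set allowed by the statement, and has size $\sqrt n \cdot t = \sqrt n\cdot (1/10)\log n/\log\log n\,(1+o(1)) = O(n/\log n)$, well within the budget of \cref{thm:lower-bound-main}.

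The core is a short probabilistic argument. Once $\mathcal{F}$ is failed, each $v\in\mathcal{T}_i$ has $\mathcal{F}_v=\{d\}$, because every failed edge has the form $(v',d)$. Hence by \cref{obs:edge-alpha}, for each $v\in\mathcal{T}_i$ its tree edge $(v,w)$ (with $w$ the tree-parent of $v$, so $(v,w)\in\Glogt$ and $f_v(w)>1/\log^4 n$) lies in $\Galpha$ with probability $>1/\log^4 n$, and these $t-1$ events are mutually independent, being determined by the independent forwarding choices of distinct nodes. Letting $A_i$ denote the event ``all $t-1$ tree edges of $\mathcal{T}_i$ lie in $\Galpha$'', and using $t-1\le (1/10)\log n/\log\log n$ together with $\log^{-(4/10)\log n/\log\log n}n = n^{-2/5}$, we get
\[
\Pr[A_i] \;>\; \log^{-4(t-1)} n \;\ge\; n^{-2/5}.
\]
Node-disjointness of the $\mathcal{T}_i$ makes $A_1,\dots,A_{\sqrt n}$ independent, so
\[
\Pr\!\Big[\,\textstyle\bigcap_{i=1}^{\sqrt n}\overline{A_i}\,\Big] \;\le\; \big(1-n^{-2/5}\big)^{\sqrt n} \;\le\; \exp\!\big(-n^{1/10}\big) \;=\; n^{-\omega(1)}.
\]
Thus \whp some $A_i$ occurs, and on $A_i$ every node of $\mathcal{T}_i$ reaches $u_i$ in $\Galpha$ along tree edges; by \cref{obs:load} this gives $\mathcal{L}(u_i)=\infty$ if $u_i$ lies on a cycle of $\Galpha$, and otherwise $\mathcal{L}(u_i)\ge|\mathcal{T}_i| = t > (1/10)\log n/\log\log n$ — in either case the conclusion holds.

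The one point requiring care is the calibration between subtree size and number of trials. Requiring an \emph{entire} size-$t$ reverse subtree to appear in $\Galpha$ is an event of probability $\Theta(\log^{-4(t-1)} n)$, already inverse-polynomial in $n$ once $t = \Theta(\log n/\log\log n)$; so on one hand $t$ cannot be pushed higher (hence a single subtree certifies only $\Theta(\log n/\log\log n)$ load), and on the other hand one subtree alone is nowhere near a \whp statement. The constant $1/10$ is chosen precisely so that this probability equals $n^{-2/5}$, i.e.\ polynomially \emph{larger} than $n^{-1/2}$, and the resulting slack of $n^{1/10}$ in the exponent is exactly what lets $\sqrt n$ independent attempts drive the overall failure probability below every inverse polynomial. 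Everything else — existence of the size-$t$ subtrees, the identity $\mathcal{F}_v=\{d\}$, the per-edge lower bound $1/\log^4 n$, and independence across distinct nodes — is immediate from \cref{def:glogt} and \cref{obs:edge-alpha,obs:load}.
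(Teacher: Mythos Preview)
Your proof is correct and follows essentially the same approach as the paper: extract a size-$\Theta(\log n/\log\log n)$ reverse subtree from each of the $\sqrt n$ disjoint trees, fail the edges to $d$, and argue that some subtree appears in $\Galpha$ with high probability. The only cosmetic difference is that the paper invokes a Chernoff bound for the final step, whereas you use the cleaner direct product bound $(1-n^{-2/5})^{\sqrt n}\le\exp(-n^{1/10})$; your explicit BFS construction and handling of the cycle case at $u_i$ are also slightly more careful than the paper's write-up.
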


\begin{restatable}{lemma}{lemcasethreetwo}
\label{lem:case3-2}
	Assume there is a reverse tree $\mathcal{T}_R$  of size $\sqrt{n} / 2$ in $\Glogt$. Then, there exists a set of failures $\mathcal{F} \subseteq \{ (v,d) ~|~ v \text{ lies in } \mathcal{T}_R\}$ with $|\mathcal{F}| \leq \sqrt{n} \cdot (1/10) \cdot  \log n / \log \log n$ such that a node $v$ of $\mathcal{T}_R$ has $\mathcal{L}(v) > (1/10) \cdot  \log n / \log \log n$ \whp
\end{restatable}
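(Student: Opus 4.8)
The plan is to distinguish two cases according to whether $\mathcal{T}_R$ contains a node of very large in-degree. Throughout, set $\sigma := \lfloor (1/10)\log n/\log\log n\rfloor + 1$, so that $\sigma > (1/10)\log n/\log\log n$ while $\sigma - 1 \le (1/10)\log n/\log\log n$, and view $\mathcal{T}_R$ as rooted at its reverse root, so that the \emph{children} of a node are its in-neighbours. \emph{High-degree case: some node $w$ of $\mathcal{T}_R$ has at least $\log^5 n$ children.} I would pick a set $C$ of exactly $\lceil\log^5 n\rceil$ such children and fail $\mathcal{F} := \{(v,d) : v\in C\}$, which is within the allowed budget since $|\mathcal{F}| = o(\sqrt n)$. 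For every $v\in C$ we have $\mathcal{F}_v = \{d\}$ and $(v,w)\in\Glogt$, so by \cref{obs:edge-alpha} the edge $(v,w)$ belongs to $\Galpha$ with probability $f_v(w) > 1/\log^4 n$, independently over $v\in C$. Hence the number $Z$ of such edges present in $\Galpha$ has $\Ex{Z} > \log n$, so $Z > (1/2)\log n$ \whp by a Chernoff bound, and since each such edge is a length-one path ending at $w$ we get $\mathcal{L}(w)\ge Z > (1/10)\log n/\log\log n$ \whp, with $w\in\mathcal{T}_R$.

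\emph{Bounded-degree case: every node of $\mathcal{T}_R$ has fewer than $\Delta:=\log^5 n$ children}, hence $\mathcal{T}_R$ has maximum degree at most $\Delta$. Here I would greedily carve $\mathcal{T}_R$ into node-disjoint reverse subtrees of size exactly $\sigma$: as long as the forest of not-yet-used nodes contains a component with at least $\sigma$ nodes, root such a component at its top node, descend to a deepest node $u$ whose (fully unused) subtree still has $\ge\sigma$ nodes — so every child of $u$ carries an unused subtree of size $<\sigma$ — and grow a connected subtree $T_i\ni u$ inside $u$'s subtree, always appending a node whose parent is already present, until $|T_i|=\sigma$; then delete $T_i$. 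Each such $T_i$ is a reverse subtree of $\mathcal{T}_R\subseteq\Glogt$ with reverse root $u$. Because $u$'s subtree had fewer than $1+\Delta(\sigma-1)<\Delta\sigma$ nodes, deleting $T_i$ ``kills'' fewer than $\Delta\sigma$ further nodes (the $<(\Delta-1)\sigma$ leftover below $u$, now in components of size $<\sigma$, together with the part above $u$ whenever that too has fewer than $\sigma$ nodes). Thus if $N$ subtrees are produced then at most $N\Delta\sigma$ nodes are killed, so $|\mathcal{T}_R|<N\sigma(1+\Delta)$ and $N > |\mathcal{T}_R|/(2\sigma\Delta) = \Omega\big(\sqrt n/(\sigma\log^5 n)\big) = \sqrt n/\polylog n$.

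To finish the bounded-degree case I would fail $\mathcal{F}:=\{(v,d) : v\in\bigcup_{i\le N}T_i\}$; then $|\mathcal{F}| = N\sigma\le|\mathcal{T}_R|=\sqrt n/2$, within budget, and $\mathcal{F}\subseteq\{(v,d):v\in\mathcal{T}_R\}$. For each $i$ all $\sigma-1$ edges of $T_i$ lie in $\Glogt$, so by \cref{obs:edge-alpha} the event that all edges of $T_i$ appear in $\Galpha$ has probability at least $(1/\log^4 n)^{\sigma-1}\ge\log^{-(4/10)\log n/\log\log n}n = n^{-4/10}$, and these events are mutually independent because the $T_i$ are node-disjoint. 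Hence the probability that no $T_i$ is fully present is at most $(1-n^{-4/10})^N\le\exp(-N n^{-4/10}) = \exp(-n^{\Omega(1)})$, so \whp some $T_i$ lies entirely in $\Galpha$, and by \cref{obs:load} its reverse root then has load at least $|T_i|=\sigma>(1/10)\log n/\log\log n$.

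The main obstacle is the high-degree case itself: a node of $\mathcal{T}_R$ may have up to $\Theta(\sqrt n)$ children — a star centred at the reverse root being the extreme example — so $\mathcal{T}_R$ cannot in general be split into many small reverse subtrees, which is precisely what forces the dichotomy above. Even within the bounded-degree regime the delicate point is the ``waste'' accounting, namely arguing that carving the subtrees out from the bottom discards only an $O(\Delta)$ factor more nodes than it keeps, so that $\sqrt n/\polylog n$ subtrees really are obtained — comfortably more than the roughly $n^{2/5}$ the final probabilistic step requires.
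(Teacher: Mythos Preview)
Your proposal is correct and follows essentially the same approach as the paper: the same case split on whether some node of $\mathcal{T}_R$ has in-degree at least $\log^5 n$, the same Chernoff argument in the high-degree case, and in the bounded-degree case the same scheme of carving out $\sqrt{n}/\polylog n$ node-disjoint reverse subtrees of size $(1/10)\log n/\log\log n$ and arguing that each appears in $\Galpha$ with probability $\ge n^{-2/5}$ independently. The only difference is in how the subtrees are extracted: the paper iterates a top-down BFS cut (its \cref{obs:tree-cut}), removing the first $(1/10)\log n/\log\log n$ nodes in BFS order from the root and discarding any leftover components smaller than that threshold, whereas you work bottom-up from a deepest node whose subtree is still large enough; both methods lose at most a $\polylog n$ factor to waste and yield the same count.
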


\subsection{\texorpdfstring{Proof of \cref{thm:lower-bound-main}}{Proof of Theorem}}
\label{sec:lower-theorem-proof}

    In \cref{sec:lb-complete-case1,sec:lb-complete-case2,sec:lb-complete-case3} we considered the complete graph $\mathcal{G}$ together with a fixed destination-based protocol $\mathcal{P}$ and all-to-one destination $d$. In this setting, we constructed the graphs $\Glog, \Glogt$ and split our analysis into three cases, depending on the structure of these graphs. In each case, we establish that the theorem's result w.r.t. $\mathcal{G}$ holds:
    \begin{enumerate}
        \item Case 1: If the case in \cref{sec:lb-complete-case1} occurs, then the result immediately follows from \cref{lem:lb-case1}.
        \item Case 2: If we are in the case of \cref{sec:lb-complete-case2}, then either the second statement of \cref{lem:case-2-1} holds and the result follows, otherwise the first statement holds and \cref{lem:case2-2} leads to the desired result.
        \item Case 3: This case was covered in \cref{sec:lb-complete-case3}, and further splits into two sub-cases as indicated in \cref{obs:case3}. In both sub-cases, the result follows as stated in \cref{lem:case3-1,lem:case3-2}, respectively. 
    \end{enumerate}
The proof for general undirected graphs $\mathcal{G} = (V,E)$ with $|V| = n$ follows from \cref{lem:general-graph-extension}, which we state in \cref{sec:details-lower-bound} on page  \pageref{lem:general-graph-extension}. The basic idea is that, for any protocol $\mathcal{P}$ operating in $\mathcal{G}$, one can construct an equivalent protocol $\mathcal{P}_K$ that operates in the clique $K_n$ (equivalent in the sense that the path flows take is the same in both graphs). We then use the statement of \cref{thm:lower-bound-main}, which we already established for complete graphs, to deduce that a set of failures $\mathcal{F}^{(K)}$ exists that induces a high load in $K_n$. The same set of failures (excluding some edges which may not exist in $\mathcal{G}$) also leads to a high load in $\mathcal{G}$ when employing $\mathcal{P}$. \qedhere

\section{Interval Routing in the Bipartite Graph}
\label{sec:bipartite}

In the following section, we will construct an efficient local failover protocol for the complete bipartite graph $G = (V \cup W, E)$. Here the set of nodes $V \cup W$ consists of two sets, where $|V| = |W| = n$ and edges are drawn such that each node $v \in V$ is connected to every $w \in W$ and vice versa.

	To employ our routing protocol, we further assume that the nodes in  both, $V$ and $W$, are partitioned into $K := C \log n$ sets, where $C = \Theta(1)$ is an arbitrary value larger $4$. That is, $V = V(0) \cup V(2) \cup ... \cup V(K-1)$ and $W = W(0) \cup W(2) ... \cup  W(K-1)$, where we assume that all these partitions have size $I := n/K = n/ (C \log n)$ (assume $C \log n$ divides $n$). We propose the following local routing protocol, which is resilient to  $\Omega(n / \log n)$ edge failures.
	\begin{definition}[Bipartite Interval Routing]
		We define the routing protocol $\mathcal{P}_{B}$, induced by the following distributions when routing towards some node $d \in W$ 
		\begin{itemize}
			\item For $v \in V(i)$ we set $\Dist{v}{\mathcal{F}_v}{d} = \Uniform(\{d\})$ if $d \not \in \mathcal{F}_v$, otherwise $\Dist{v}{\mathcal{F}_v}{d} = \Uniform( W(i) \setminus \mathcal{F}_v)$.
			\item For $w \in W(i)$ with $w \neq d$ we set $\Dist{w}{F_w}{d} = \Uniform(V((i+1) \mod K) \setminus \mathcal{F}_w)$.
		\end{itemize}
	\end{definition}

	 Note that this protocol is inspired by the \emph{Interval} routing protocol of \cite{icnp19} which is constrained to complete graphs. Intuitively, a packet with source in the set $V(i)$ follows the partitions $V(i) \rightarrow W(i) \rightarrow V(i+1) \rightarrow W(i+1) ... $ until reaching a node $v \in V$ such that $(v,d)$ is not failed. Therefore, the only way for flows to end up in a cycle is by travelling through all $2K$ intervals, which is very unlikely. We may also refer to this alternation between layers $V$ and $W$ of a packet as "ping-pong" in the remainder of the paper. 

	\begin{theorem}
	\label{thm:bipartite}
		Let $G = (V \cup W, E)$ be a complete bipartite graph with $|V| = |W| = n$. Let the routing protocol $\mathcal{P}_{B}$ be employed, configured with $C > 4$, and  all-to-one routing towards some destination $d \in W$ be performed.  Assume the set of failures $\mathcal{F}$ fulfills for every $i$ with $0 \leq i < K$ that 
		\begin{enumerate}
		    \item  $\forall w \in W: |\{ v \in V(i) ~|~ w \in \mathcal{F}_v\}| \leq I/3$, and 
		    \item $\forall v \in V:  |\{ w \in W(i) ~|~ v\in \mathcal{F}_w\}| \leq I/3$.
		\end{enumerate}
		Then, with probability at least $1-3n^{-(C-1)}$, every node $u \in V \cup W$ with $u \neq d$ has $\mathcal{L}(u)  = \BigO{\log n \cdot \log \log n}$, even if $\mathcal{F}$ is constructed with knowledge of $\mathcal{P}_B$ and $d$.
	\end{theorem}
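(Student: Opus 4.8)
The plan is to fix an arbitrary failure set $\mathcal F$ obeying Conditions~1 and~2 --- since the estimate will hold for \emph{every} such $\mathcal F$ with high probability over the routing randomness, it then holds against an adversary who knows $\mathcal P_B$ and $d$ --- and to pass to the functional-graph description of the routing. Call $v\in V(i)$ \emph{blocked} if $d\in\mathcal F_v$, write $\mathrm{nx}(v)\in W(i)\setminus\mathcal F_v$ for the random forwarding target of a blocked $v$, $\mathrm{nx}(w)\in V((i+1)\bmod K)\setminus\mathcal F_w$ for that of $w\in W(i)\setminus\{d\}$, and let every non-blocked $v\in V$ forward straight to $d$. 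Because $\mathcal P_B$ is destination-based, a flow from a source $s$ deterministically follows the walk $s,\mathrm{nx}(s),\mathrm{nx}^2(s),\dots$ until it hits $d$ or repeats a node, so --- in the absence of cycles --- $\mathcal L(u)$ is exactly the size of the reverse subtree rooted at $u$ in the forest induced by $\mathrm{nx}$. I would record three facts, each immediate from the conditions: the values $\mathrm{nx}(\cdot)$ are mutually independent; each blocked $v\in V(i)$, resp.\ each $w\in W(i)$, draws uniformly from a set of size $\ge I-I/3=2I/3$ (by Condition~2, resp.\ Condition~1); and, taking $w=d$ in Condition~1, every $V(i)$ has at most $I/3$ blocked nodes.

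First I would show cycles are unlikely. A cycle of $\mathrm{nx}$ raises the $V$-partition index by one each round, hence has length exactly $2K$ and uses one blocked node of every $V(i)$. Exposing a flow's walk round by round: given that it has reached some $v\in V(j)$ with no prior repeat, its next $V$-node is uniform in a set of $\ge 2I/3$ nodes of which $\le I/3$ are blocked, so with probability $\ge\tfrac12$ it is non-blocked and the flow terminates there; and the walk cannot repeat inside $K$ rounds since its $V$-indices are distinct. Thus a fixed flow cycles with probability $\le 2^{-K}=n^{-C}$, and a union bound over the $<2n$ sources gives acyclicity with probability $\ge 1-2n^{1-C}$.

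Next, conditioning on acyclicity, I would bound a single reverse subtree. Fix $u\ne d$ and explore its reverse subtree in BFS layers $\mathrm{Lv}_0=\{u\},\mathrm{Lv}_1,\dots$; since walks never revisit a partition, successive layers sit in successive partitions and the tree has depth $\le 2K$. If $\mathrm{Lv}_\ell\subseteq V(j)$ then $\mathrm{Lv}_{\ell+1}\subseteq W(j-1)$ and $X_{\ell+1}:=|\mathrm{Lv}_{\ell+1}|=\sum_{w\in W(j-1)}\mathbf{1}[\mathrm{nx}(w)\in\mathrm{Lv}_\ell]$ is a sum of $I$ independent indicators of probability $\le|\mathrm{Lv}_\ell|/(2I/3)$, so $\mathbb E[X_{\ell+1}\mid X_\ell]\le\tfrac32X_\ell$; if $\mathrm{Lv}_\ell\subseteq W(j)$ then $\mathrm{Lv}_{\ell+1}$ lies among the $\le I/3$ blocked nodes of $V(j)$, so $\mathbb E[X_{\ell+1}\mid X_\ell]\le\tfrac12X_\ell$. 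Since $V$- and $W$-layers alternate, the conditional mean is multiplied by at most $\tfrac32\cdot\tfrac12=\tfrac34<1$ over any two layers, so the layer-size process is stochastically dominated by a subcritical two-type branching process (faithfully so while the layers remain $o(I)$, which is checked a posteriori), whose total progeny has an exponential tail. Hence $\Pr[\mathcal L(u)=\sum_\ell X_\ell>t]\le e^{-\Omega(t)}$; for $t=c\log n$ with a suitable absolute $c$ this is below $n^{-C}/(2n)$, comfortably inside the claimed $O(\log n\log\log n)$ bound.

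A union bound over the $<2n$ nodes $u\ne d$, combined with the previous step, then gives that with probability $\ge 1-3n^{1-C}$ no cycle occurs and every node has load $O(\log n\log\log n)$; as $\mathcal F$ was an arbitrary admissible failure set, this is \cref{thm:bipartite}. The hard part will be the concentration in the third step: a layer-by-layer Chernoff bound loses a constant factor of slack per layer and so blows up over the $\Theta(\log n)$ layers, hence the subcriticality must be exploited \emph{globally} on the whole reverse subtree --- e.g.\ through a generating-function or optional-stopping estimate for the exploration. Lesser care is needed to handle the two-type bookkeeping, to confirm the layers never approach size $I$ (so the binomial offspring are genuinely captured by the branching idealization), and to keep the third step's exploration stochastically independent of the cycle argument --- which works because the $\mathrm{nx}(\cdot)$ are independent and the two arguments reveal disjoint subsets of them.
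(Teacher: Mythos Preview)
Your skeleton is the paper's: first the acyclicity argument (the paper's \cref{lem:bip-noloops}), then fix a node and expose its reverse tree in layers, obtaining the alternating conditional-mean bounds $3/2$ and $1/2$ (the paper's \cref{obs:bip-martingale}). Where you diverge is the concentration step. The paper proves a general Markov-chain aggregation lemma (\cref{thm:markovchain}) that, for any chain with $X_0=O(r)$ and two-step contraction, gives $\sum_{i\le r}X_i=O(r\log r)$; applied with $r=2K=\Theta(\log n)$ this yields the stated $O(\log n\log\log n)$. Your route via subcritical branching is more direct here and in fact sharper: the backward MGF recursion $\theta_\ell=\lambda+\mu_\ell(e^{\theta_{\ell+1}}-1)$, $\mu_\ell\in\{3/2,1/2\}$, stabilises at $\theta_0=O(\lambda)$ because the two-step contraction $\mu_\ell\mu_{\ell+1}=3/4<1$ beats the linearisation error, so $\Pr[\sum_\ell X_\ell>t]\le e^{-\Omega(t)}$ and the load is actually $O(\log n)$. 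The paper's lemma buys reusability (it is invoked again in the Clos analysis with $X_0$ not equal to $1$), while your argument buys a tighter constant-free tail in this bipartite instance.

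Two small caveats. First, your literal claim of stochastic domination by a two-type branching process is not quite right: $X_{\ell+1}\mid X_\ell=x$ is (at most) $\operatorname{Bin}(N,3x/(2I))$, which is \emph{not} in general stochastically below $\operatorname{Bin}(Nx,3/(2I))$ (try $N=x=2$, $p=1/4$). The fix is exactly the generating-function route you already flag; no branching coupling is needed, and the ``layers remain $o(I)$'' check is unnecessary since the MGF bound $\mathbb E[e^{\lambda X_{\ell+1}}\mid X_\ell]\le \exp(\mu_\ell X_\ell(e^\lambda-1))$ holds unconditionally. Second, your final remark that the cycle argument and the reverse-tree exploration ``reveal disjoint subsets'' of the $\mathrm{nx}(\cdot)$ values is false (a source lying in $u$'s reverse tree has its $\mathrm{nx}$ exposed in both). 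This does not matter: do not condition, just bound $\Pr[\text{cycle}]$ and $\Pr[\text{some tree large}]$ separately and union-bound, exactly as the paper does.
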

	
	Intuitively, the constraint on $\mathcal{F}$ states that at most a $(1/3)$ fraction of nodes in the same interval may have failed edges incident to the same node. The constraint is, for example, easily fulfilled in case only $I/3 = n/(3C \log n) = \Theta(n / \log n)$ edges are failed in total. This implies that the load induced by the protocol approaches the lower bound in \cref{thm:lower-bound-main} up to only a $\text{polyloglog } n$ factor. In any deterministic protocol, a load of $\Omega(n / \log n)$ could be created in this setting \cite{opodis13shoot}. Additionally, the simple randomized protocol, which forwards the packets between nodes of $V$ and $W$ which are selected uniformly at random until a node $v \in V$ is reached such that $(v,d)$ is not failed, is prone to cycles. By failing $\BigO{n/\log n}$ arbitrary edges between nodes in $V$ and $d$, at least one flow will travel from such a node $v \in V$ to some $w \in W$ and back to $v$ with probability $\geq 1/\polylog n$. This creates a forwarding loop of length $2$ and prevents some flows from reaching destination $d$. Our interval protocol is hybrid in the sense that nodes forward their packet uniformly at random according to pre-determined partitions. This allows it to keep the network load low while also avoiding forwarding loops \whp.
	
	\subsection{Analysis of the Bipartite Interval Protocol}
	
	We consider a fixed destination node $d$ together with a set of failures $\mathcal{F}$ that fulfills the requirements of \cref{thm:bipartite}. To make our analysis more readable, we assume that all partitions $V(i)$ and $W(j)$ have exactly the same size. Furthermore, we denote by $\alpha(v)$ the (random) node that $v$ forwards packets towards destination $d$ when following $\mathcal{P}_B$. Before starting with the proof of the theorem, we show the following important statement, which implies that, \whp, no flows travel in a cycle until they reach the destination $d$. Packets "ping-pong" between nodes in $V$ and $W$ until they reach the destination. Due to the restrictions on the failure set in \cref{thm:bipartite} it follows that each time a packet lands on some node $v \in V$, there is a constant probability that the link $(v,d)$ is not failed. It follows that, \whp, the packet reaches $d$ after $K = \Theta(\log n)$ alternations between $V$ and $W$. A detailed proof is given in \cref{sec:details-bipartite} on page  \pageref{proof:bipartite-loop}.
	
	\begin{restatable}{lemma}{bipartitenoloop}
	\label{obs:bip-noloops} \label{lem:bip-noloops}
		Any packet starting at some node $u \in V\cup W$ will reach destination $d$ in less than $2K$ hops with probability at least $1-n^{-C}$.
	\end{restatable}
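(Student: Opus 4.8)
The plan is to track a single packet as it "ping-pongs" between the layers $V$ and $W$, and to show that within $K$ round-trips it lands on some node $v \in V$ whose edge $(v,d)$ survives, at which point it is delivered. Fix the starting node $u$; without loss of generality assume the packet is currently at some node in $W$ (if it starts in $V$ we first check whether its direct edge to $d$ is up, and otherwise it moves to $W$ in one hop). By the definition of $\mathcal{P}_B$, a packet sitting at a node of $W(i)$ is forwarded to a uniformly random surviving neighbor in $V((i+1)\bmod K)$, and a packet at a node of $V(j)$ with $(v,d)$ failed is forwarded to a uniformly random surviving neighbor in $W(j)$. Thus the sequence of interval indices visited is deterministic — $i, i{+}1, i{+}2, \dots$ modulo $K$ — and the packet can only fail to be delivered within the first $K$ steps in $V$ if at \emph{every} such visit it happens to land on a node whose edge to $d$ is failed.

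The key step is to lower-bound, for each visit to an interval $V(j)$, the probability that the packet reaches a node $v \in V(j)$ with $(v,d) \notin \mathcal{F}$, conditioned on the entire history so far. The packet arrives at $V(j)$ from some node $w \in W(j-1)$ and is routed to $\Uniform(V(j) \setminus \mathcal{F}_w)$. By hypothesis~(1) of \cref{thm:bipartite}, at most $I/3$ nodes of $V(j)$ are in $\mathcal{F}_w$, so the packet picks uniformly among at least $2I/3$ nodes of $V(j)$. By hypothesis~(2) of \cref{thm:bipartite} applied with the destination $d$ in the role of ``$v$'', at most $I/3$ nodes of $V(j)$ have their edge to $d$ failed. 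Hence at least $2I/3 - I/3 = I/3$ of the (at most $I$, at least $2I/3$) candidates are ``good'' nodes, i.e. nodes $v$ with $(v,d)$ up. Therefore, independently of the past, the packet is delivered at this visit with probability at least $(I/3)/I = 1/3$. (One must be slightly careful that the conditioning is on the node $w$ it came from and on which edges incident to earlier-visited nodes were revealed, but since the random choice at $w$ is an independent fresh uniform draw over $V(j)\setminus\mathcal{F}_w$, and the set of good nodes in $V(j)$ is determined by $\mathcal{F}$ alone, the bound holds conditionally.)

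Given this, the probability that the packet survives (is \emph{not} delivered) through all $K$ of its visits to the $V$-layer is at most $(2/3)^K = (2/3)^{C\log n} \le n^{-C}$, using $\ln(3/2) > 1$ so that $(2/3)^{C\log n} = e^{-C\log n \cdot \ln(3/2)} \le e^{-C\log n} = n^{-C}$. Each such ``round'' (a hop from $V$ to $W$ and back to $V$) costs two hops, and possibly one extra initial hop if $u \in W$ or if $u\in V$ with its direct edge failed, so conditioned on being delivered within these $K$ rounds the packet takes at most $2K$ hops; hence with probability at least $1 - n^{-C}$ the packet reaches $d$ in fewer than $2K$ hops. The main obstacle, and the only genuinely delicate point, is making the conditional independence argument rigorous: one has to argue that revealing the packet's trajectory and the failure status of edges touched so far does not bias the fresh uniform choice made at the current $W$-node away from the good nodes of the next $V$-interval — this is where it matters that each node's forwarding distribution is an independent uniform draw and that the ``good set'' in each interval depends only on the adversary's fixed $\mathcal{F}$, not on the random routing table.
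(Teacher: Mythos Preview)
Your overall strategy matches the paper's: track the packet through its visits to $V$-intervals and show that at each visit it has a constant probability of landing on a node with a live edge to $d$. However, there is a genuine arithmetic gap that breaks the conclusion.

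First, a minor mislabeling: to bound the number of nodes in $V(j)$ whose edge to $d$ is failed, you should invoke hypothesis~(1) of \cref{thm:bipartite} with $w=d\in W$, not hypothesis~(2); hypothesis~(2) concerns counts inside $W(i)$ and does not apply with $d$ in the role of $v$ since $d\notin V$. The bound $I/3$ you use is still correct, just cited from the wrong item.

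More seriously, your success probability of $1/3$ per $V$-visit is too loose to yield the stated $1-n^{-C}$, and the inequality you invoke to close the gap is false: $\ln(3/2)\approx 0.405$, not $>1$. With $(2/3)^{K}=(2/3)^{C\log n}$ you only get $n^{-C\log_2(3/2)}\approx n^{-0.585\,C}$, strictly weaker than $n^{-C}$. The paper fixes this by bounding the \emph{bad} probability instead: the number of bad candidates (nodes in $V(j)$ with $(v,d)$ failed) is at most $I/3$, while the number of candidates $|V(j)\setminus\mathcal F_w|$ is at least $2I/3$, so the probability of landing on a bad node is at most $(I/3)/(2I/3)=1/2$. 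This gives $(1/2)^{K}=(1/2)^{C\log n}=n^{-C}$ exactly. Your lower bound on the good probability, $(I/3)/I$, wastes information because you divide a lower bound on the numerator by an \emph{upper} bound on the denominator; switching to the complementary event tightens both sides simultaneously.
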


	The other important ingredient in the proof of \cref{thm:bipartite} is the following technical statement about Markov chains. A proof for this statement is given in \cref{sec:details-bipartite} on page \pageref{proof:markov}. It exploits that, in expectation, the chain drifts towards $0$ with every two further elements.
	
	\begin{restatable}[Markov Chain Aggregation]{theorem}{markovtheorem}
        \label{lem:markov} \label{thm:markovchain}
        Let $\{X_i\}_{i\geq 0}$ be a Markov chain over state space $\mathbb{N}_0$ and $\phi, \psi >0$ be constants with $\phi \cdot \psi < 1$. Let the following be fulfilled for every $i > 0$:
	\begin{enumerate}
		\item $X_{i}$ can be modeled by a sum of Poisson trials that only depends on $X_{i-1}$ \label{drift:item1}
		\item $\Ex{X_{2i + 1}} \leq X_{2i} \cdot \phi $ \label{drift:item2}
		\item $\Ex{X_{2i}} \leq X_{2i-1} \cdot \psi $ \label{drift:item3}
	\end{enumerate}
	Then, there exists a constant $C_{\phi \psi}>1$, such that for any fixed $r > C_{\phi \psi}$ it holds that $\sum_{i=0}^{r} X_i = \BigO{\log(r) \cdot r}$ with probability at least $1-2\exp(-3r)$ as long as $X_0 = \BigO{r}$.
    \end{restatable}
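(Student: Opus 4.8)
The plan is to control the exponential moment $\Ex{e^{\theta S}}$ of $S := \sum_{i=0}^{r} X_i$ for a well-chosen \emph{constant} $\theta = \theta(\phi,\psi) > 0$, and then apply Markov's inequality to $e^{\theta S}$. This route in fact yields the stronger bound $S = \BigO{r}$ (with a constant depending on $\phi,\psi$ and on the constant hidden in $X_0 = \BigO{r}$), which a fortiori gives $S = \BigO{r \log r}$.

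The starting point is a conditional moment-generating-function estimate. Writing $\beta_i := \phi$ for odd $i$ and $\beta_i := \psi$ for even $i$, the Poisson-trials assumption together with the drift hypotheses gives, for every $\theta \geq 0$,
\[
  \Ex{e^{\theta X_i} \,\middle|\, X_{i-1}} \;\leq\; \exp\!\bigl((e^{\theta}-1)\,\beta_i\, X_{i-1}\bigr),
\]
since, conditioned on $X_{i-1}$, $X_i$ is a sum of independent Bernoulli variables whose parameters sum to $\Ex{X_i \mid X_{i-1}} \leq \beta_i X_{i-1}$, and $1 + p(e^\theta - 1) \leq e^{p(e^\theta-1)}$. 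I would then \emph{peel} the factors $e^{\theta X_r}, e^{\theta X_{r-1}}, \dots, e^{\theta X_1}$ off the product $e^{\theta S}$ one at a time, each time invoking the Markov property (``$X_i$ only depends on $X_{i-1}$'') and the displayed bound. Defining $\theta_r := \theta$ and, recursively downward, $\theta_{k} := \theta + (e^{\theta_{k+1}} - 1)\,\beta_{k+1}$, a term-by-term check shows the coefficient of $X_k$ after peeling $X_r,\dots,X_{k+1}$ is exactly $\theta_k$, and hence
\[
  \Ex{e^{\theta S}} \;\leq\; e^{\theta_0 X_0} \;\leq\; \exp(\theta_0 \cdot c_0 r),
\]
where $c_0$ is the constant with $X_0 \leq c_0 r$.

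The crux — and the main obstacle — is showing the peeled parameters $\theta_k$ stay $\BigO{\theta}$ rather than blowing up. This is delicate precisely because a \emph{single} step of the chain may increase it in expectation (we only assume $\phi\psi < 1$, not $\phi < 1$ and $\psi < 1$), so the recursion for $\theta_k$ must be analysed \emph{two steps at a time}: consecutive indices have opposite parity, hence $\beta_{k+1}\beta_{k+2} \leq \phi\psi$. Concretely, I would fix $\epsilon_0 > 0$ small enough that $\rho := (1+\epsilon_0)^2 \phi\psi < 1$, and a threshold $x^\ast = x^\ast(\phi,\psi) > 0$ with $(e^x - 1)/x \leq 1 + \epsilon_0$ for $0 \leq x \leq x^\ast$; setting $M := 1 + (1+\epsilon_0)\max(\phi,\psi)$, $K^\ast := M/(1-\rho)$ and $\theta^\ast := x^\ast/K^\ast$, a downward induction — substitute the recursion twice to obtain $\theta_k \leq \theta M + \rho\,\theta_{k+2}$, which stays self-consistent because $\theta K^\ast \leq x^\ast$ keeps every linearisation valid — shows $\theta_k \leq \theta K^\ast$ for all $k$ whenever $\theta \leq \theta^\ast$.

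Plugging $\theta = \theta^\ast$ into Markov's inequality then gives, for every $a$,
\[
  \Pr[S \geq a] \;\leq\; e^{-\theta^\ast a}\,\Ex{e^{\theta^\ast S}} \;\leq\; \exp\!\bigl(-\theta^\ast(a - K^\ast c_0 r)\bigr),
\]
so taking $a := \bigl(K^\ast c_0 + 3/\theta^\ast\bigr) r = \BigO{r}$ makes the right-hand side at most $e^{-3r} \leq 2 e^{-3r}$. This proves $\sum_{i=0}^r X_i = \BigO{r} = \BigO{r \log r}$ with probability at least $1 - 2e^{-3r}$; the constant $C_{\phi\psi}$ may be taken to be any fixed constant larger than $1$ (its only role is to make $\log r$ meaningful and to let the $O(\cdot)$ swallow lower-order terms). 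For contrast, a naive union bound over events $\{X_i \leq B_i\}$ does not work: no step-by-step deterministic envelope $B_i$ is simultaneously tight enough to sum to $\BigO{r\log r}$ and loose enough to be respected at every single step, exactly because of these one-step increases — which is what motivates the exponential-moment peeling above.
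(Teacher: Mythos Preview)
Your argument is correct and in fact tighter than the paper's: you obtain $\sum_{i=0}^r X_i = \BigO{r}$, while the paper only gets $\BigO{r \log r}$. The routes are genuinely different. The paper partitions the state space into geometric intervals $I_j = [C\gamma^j, C\gamma^{j+1}]$ with $\gamma = 1/\sqrt{\phi\psi}$ and counts how often the even- and odd-indexed subsequences visit each $I_j$, distinguishing ``increasing-or-remaining'' from ``decreasing'' entries and bounding each with separate Chernoff applications; summing an $\BigO{r}$ contribution from each of the $\log_\gamma r$ relevant intervals is exactly where the extra $\log r$ factor appears. Your exponential-moment peeling avoids this: the downward recursion $\theta_k = \theta + (e^{\theta_{k+1}}-1)\beta_{k+1}$, analysed two steps at a time to exploit $\beta_{k+1}\beta_{k+2} = \phi\psi < 1$, keeps every $\theta_k$ uniformly bounded by $K^\ast\theta$ with no logarithmic loss. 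Notably, substituting your bound into the paper's bipartite analysis (\cref{thm:bipartite}, where the lemma is invoked with $r = 2K = \Theta(\log n)$) would shave the $\log\log n$ factor there, giving maximum load $\BigO{\log n}$ rather than $\BigO{\log n \cdot \log\log n}$.
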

	
	\begin{proof}[Proof of \cref{thm:bipartite}]
		We let $V_G := \{ v ~|~ v \in V \land (v,d) \not \in \mathcal{F}\}$ denote the set of $\emph{good}$ nodes in $V$ that may forward incoming packets directly to destination $d$. Note that each  flow that eventually reaches $d$, does so over some $v \in V_G$. Therefore, in case no flow traverses a cycle, it follows that the node with maximum load will be some $v \in V_G$. In the following we will consider one such fixed node $r \in V_G$. W.l.o.g. we assume that this node lies in $V(K-1)$ such that we can avoid modulo operations. We define $L_j$ to be the set of nodes whose load $r$ receives within exactly $j$ hops. Clearly $L_0 = \{r\}$, and according to the definition of $\mathcal{P}_{B}$ it must hold that $L_1 \subseteq W(K -2)$, $L_2 \subseteq V(K -2)$, $L_3 \subseteq W(K -3)$ and so forth.  Our goal is to bound the values $|L_j|$  which allows us to determine the load that $r$ receives. To that end, we initially assume that the routing entries $\alpha(v)$ of any node $v$ have not yet been uncovered. Observe that, in order to determine, for example, $L_1$ it suffices to uncover the entries of nodes in $W(K -2)$ and check which nodes $w \in W(K -2)$ have $\alpha(w)  = r$. To determine $L_2$, we then uncover entries in $V(K -2)$ and check the number of nodes $v$ in this partition having $\alpha(v) \in L_1$.  A repetition of this approach step-by-step yields the following intermediate result, which we show in \cref{sec:bipartite} on page \pageref{proof:bipartite-markov}.
		\begin{restatable}{observation}{bipartitemarkov}
		\label{obs:bip-martingale}
			The sequence $\{|L_i|\}_{i = 0}^{2K}$ forms a Markov chain with $|L_0| = 1$. Additionally, for $i> 0$ it holds that 
			\begin{enumerate}
				\item $|L_i|$ can be modeled by a sum of Poisson trials depending only on $|L_{i-1}|$
				\item $\Ex{|L_{2i + 1}|} \leq  (3/2) \cdot  |L_{2i}|$ \label{obs:item-bip-odd}
				\item $\Ex{|L_{2i}|} \leq  (1/2) \cdot |L_{2i - 1}|$  \label{obs:item-bip-even}
			\end{enumerate}
		\end{restatable}
		
    \vspace{0.2cm}
	Next we make use of \cref{lem:bip-noloops}. Its statement, together with a union bound application, implies that $\emph{no}$ packet originating from any node travels more than $2K$ hops with probability at least $\geq 1 - |V \cup W| n^{-C} \geq 1-2n^{-(C-1)}$. This implies $|L_{i}| = 0$ for $i \geq 2K$. Hence, our fixed node $r \in V_G$ receives in total $\sum_{i=0}^{2K} |L_i|$ load \whp.  As the sequence $\{ |L_i|\}_{i \geq 0}^{2K}$ is a martingale that follows the properties described in \cref{obs:bip-martingale}, we may apply the Markov chain result \cref{lem:markov} for $r = 2K$. It implies that  $\sum_{i=0}^{2K} |L_i|  = \BigO{\log n \cdot \log \log n}$ with probability at least $1- 2n^{-6C}$. Hence, a union bound application yields that for $\emph{any}$ node $r \in V_G$, we have with probability at least $1- n \cdot (2n^{-(C-1)} - 2n^{-6C}) > 1 - 3n^{-(C-1)}$ that $\mathcal{L}(r) = \BigO{\log n \cdot \log \log n}$.
	As \emph{no} packet starting at any node $V \cup W$ travels in a cycle, the node with maximum load (excluding $d$) must be some node $r \in V_G$ and  \cref{thm:bipartite} follows.
	\end{proof}

\subsection{Lower Bound for the Bipartite Graph}
\label{sec:bipartite-lower-bound}
In the following, we present a different lower bound variant. It also holds in settings where nodes in $W$ do not contribute one initial flow in the all-to-one routing process. However, it only guarantees high load in expectation as opposed to the high probability guarantee of \cref{thm:lower-bound-main}. We will make use of this version in the analysis of the Clos topology. The proof is given in \cref{sec:details-bipartite} on page \pageref{proof:bipartite-lower-bound}.
	
\begin{restatable}{lemma}{bipartitelowerbound}
\label{lem:bipartite-lower-bound}
	Let $G = (V \cup W, E)$ be a complete bipartite graph with $|V| = |W| = n$ and assume that the nodes in $V$ each initiate one flow towards some node $d \in W$. Then, for any local destination-based failover protocol $\mathcal{P}$, there exists a set of failures $\mathcal{F}$ of size $ | \mathcal{F}| \leq \varepsilon \cdot n/\log n$, $\varepsilon > 0$ arbitrary constant, such that, in expectation, the number of nodes with load $\Omega(\log n / \log \log n)$ is at least one.
\end{restatable}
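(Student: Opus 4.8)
The plan is to mirror the three-case analysis of \cref{thm:lower-bound-main}, specialised to the bipartite structure, while exploiting that here we only need a bound in expectation and can therefore replace the concentration and bounded-differences machinery of \cref{sec:lower-bound} by plain first-moment estimates. As in the clique case, I would only fail edges of the form $(v,d)$ with $v \in V$, so that the only nodes with a non-empty local failure set are the failed $v$'s, each with $\mathcal{F}_v = \{d\}$. Write $\alpha(v) \sim \mathcal{D}_v := \Dist{v}{\{d\}}{d}$ for the random failover target in $W \setminus \{d\}$ of such a $v$, with PDF $f_v$, and $\beta(w) \sim \Dist{w}{\emptyset}{d}$ for the forwarding target in $V$ of a node $w \in W \setminus \{d\}$ (using the standard convention that a node delivers directly whenever the destination is an available neighbour, so an unfailed $V$-node simply forwards to $d$). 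The resulting flow graph on $V \cup W$ then has edge $v \to \alpha(v)$ for a failed $v$, $v \to d$ for an unfailed $v$, and $w \to \beta(w)$ for every $w \neq d$; all of these are mutually independent draws, and by the destination-based rule the load of a node equals the number of $V$-originators whose path in this graph runs through it (or $\infty$ on a cycle). I then split on the structure of the predictable-edge graph obtained, exactly as in \cref{def:glog}, by keeping only edges carrying probability more than $1/\log^4 n$.

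If at least $\varepsilon n / \log n$ nodes $v \in V$ are \emph{spread}, i.e.\ have $\max_w f_v(w) \le 1/\log^4 n$, I fail $(v,d)$ for $\varepsilon n / \log n$ of them. The number of flows hitting a fixed $w \in W \setminus \{d\}$ directly is $\sum_v \mathbb{1}[\alpha(v) = w]$, a sum of independent Bernoulli trials with individual means at most $1/\log^4 n$ and total mean $\varepsilon n / \log n$, so a generalised balls-into-bins first-moment estimate gives $\sum_{w \in W \setminus \{d\}} \Pr[\mathcal{L}(w) \ge c \log n / \log \log n] \ge 1$ for a small enough constant $c > 0$ --- in the genuinely spread-out sub-case this is the standard count that $\Theta(n / \log n)$ balls in $n$ bins fill at least one bin to level $\ge c \log n / \log \log n$ in expectation (the at-most-$1/\log^4 n$ atoms only help), and in the complementary clustered sub-case many $v$ share a common heavy target, which is then hit $\omega(\log n / \log \log n)$ times in expectation. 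This is exactly the expected count the lemma asks for; the same estimate with the roles of $V$ and $W$ exchanged covers the situation in which the $W$-side forwarding is itself spread and no clustering occurs on the $V$-side.

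Otherwise $(1 - o(1)) n$ nodes $v \in V$ have a preferred target $w_v$ with $f_v(w_v) > 1/\log^4 n$, and a counting argument in the spirit of \cref{obs:case3} yields either $\Omega(n / \polylog n)$ node-disjoint alternating paths $v_0 \to w_0 \to v_1 \to w_1 \to \cdots \to v_t$ of length $\Theta(\log n / \log \log n)$ all of whose edges are predictable (when such a path reaches a $W$-node whose forwarding is itself spread one truncates there, using that $\beta(w)$ then lands on an unfailed $V$-node with probability $1 - o(1)$, so the accumulated flow is still handed in bulk to one node), or $\Omega(n / \polylog n)$ node-disjoint predictable cycles of that length. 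In the first case I fail $(v_i, d)$ for the interior nodes of each path, at most $\varepsilon n / \log n$ failures in total; each path then materialises in the flow graph with probability at least $(1/\log^4 n)^{\BigO{t}} = n^{-\BigO{1}}$, and when it does its endpoint carries load $\ge t = \Omega(\log n / \log \log n)$. Node-disjointness makes these events independent, so the expected number of high-load nodes is at least $\Omega(n / \polylog n) \cdot n^{-\BigO{1}}$, which exceeds $1$ once the hidden constant in $t$ is chosen small enough; the cycle case is identical, producing infinite-load nodes. Assembling the cases yields the lemma.

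The step I expect to be the real obstacle is that, unlike in \cref{thm:lower-bound-main}, the adversary cannot shape the $W$-side forwarding rules --- a node $w \in W$ always uses its fixed no-failure entry $\Dist{w}{\emptyset}{d}$ --- so a protocol that deliberately randomises every $w$'s single outgoing edge makes long predictable alternating chains hard to build. The escape is the observation that a $W$-node receiving many flows is already a win, since all of those flows leave through the \emph{same} edge $w \to \beta(w)$ and hence either $w$ or $\beta(w)$ inherits the combined load; making this interact correctly with the counting argument of \cref{obs:case3}, while keeping every opportunity set of size $\Omega(n / \polylog n)$ so that the expected count clears $1$ rather than merely $1 - o(1)$, is the delicate part.
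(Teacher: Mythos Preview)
Your plan and the paper's proof take genuinely different routes, and the paper's is much simpler. The paper does \emph{not} analyse the protocol at all: it fixes the routing table, observes that from any $v_1\in V$ the ``what-if'' alternating path $(v_1,w_1,v_2,w_2,\dots,v_m,w_m)$ with $m=(1/3)\log n/\log\log n$ is completely determined by $\alpha(\cdot,\{d\},d)$ and $\alpha(\cdot,\emptyset,d)$, and then chooses $\mathcal{F}\subseteq\{(v,d):v\in V\}$ of size $\varepsilon n/\log n$ \emph{uniformly at random}. For each fixed $v_1$, the event $\{v_1,\dots,v_m\}\subseteq\{v:(v,d)\in\mathcal{F}\}$ has probability at least $(\varepsilon/(2\log n))^{m}=n^{-1/3-o(1)}$, and when it occurs the last node on the path has load $\ge m$. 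Linearity of expectation over all $v_1\in V$ gives $\ge n^{2/3-o(1)}$ such sources in expectation, hence at least one high-load node in expectation; the probabilistic method then produces a deterministic $\mathcal{F}$. The point is that an expectation-only target lets you randomise the adversary's choice and average over both sources of randomness, so no case split on ``predictable'' versus ``spread'' protocols is needed.

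Your three-case adaptation, by contrast, has a real gap exactly where you flag it. In the non-spread case you need long alternating chains whose $W$-edges are also predictable, but the adversary never touches $\mathcal{F}_w$, so $\beta(w)\sim\Dist{w}{\emptyset}{d}$ is fixed once and for all and can be genuinely diffuse for every $w$. Your truncation idea (stop the chain at the first spread $W$-node and hand the accumulated flow to $\beta(w)$) does not work as stated: the truncation may happen after a single hop, the handed-off bundle is then just one flow, and you have not shown that enough of the $\Omega(n/\polylog n)$ chains survive to length $\Theta(\log n/\log\log n)$ before truncation. The counting argument ``in the spirit of \cref{obs:case3}'' therefore does not go through, because that observation relies on \emph{both} directions of the $\Glog$ edges being present, which here they need not be. The paper's random-failure trick is precisely what lets you ignore whether the $W$-side is predictable: you never try to guess $\beta(w)$, you only need the $V$-nodes that $\beta$ happens to hit to lie in your random failure set.
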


\section{Efficient Protocol for the Clos Topology}
\label{sec:fat-tree}

\newcommand{\bl}[1]{\ensuremath{\text{B}\left(#1\right)}}
\newcommand{\cl}[2]{\ensuremath{\text{C}\left(#1,#2\right)}}
\newcommand{\vcl}[2]{\ensuremath{\text{VC}\left(#1,#2\right)}}

\newcommand{\cli}[3]{\ensuremath{\text{C}\left(#1,#2,#3\right)}}
\newcommand{\vcli}[3]{\ensuremath{\text{VC}\left(#1,#2,#3\right)}}

\newcommand{\tree}[1]{\ensuremath{\text{T}\left(#1\right)}}

\newcommand{\prefix}[2]{\ensuremath{#1|#2}}
\newcommand{\sset}{\ensuremath{\mathbb{S}}_L}

\newcommand{\emptyseq}{\ensuremath{\emptyset}}

\subsection{Topology Description}
\label{sec:clos-definition}

The Clos topology we consider comes with two parameters $k$, the degree of each node in the network, and $L+1$, $L \geq 1$, the number of levels in the network (cf.~also~\cref{fig:fattree-example}). It is constructed as follows.
On level $0$, there are $(k/2)^L$ many nodes and each level $\ell$, $1 \leq \ell \leq L$, consists of $2 (k/2)^L$ many nodes. We assume the nodes in each level to be numbered, starting with $1$. All nodes are then partitioned into \emph{blocks}. We denote such a block by $\bl{S}$, where $S$ is a sequence from the set $\sset$. This set contains all sequences $S = (s_1, s_2, ... ,s_\ell)$ of length $0 \leq \ell \leq L$, where the $s_i$ are integers subject to $s_1 \in [1,k]$ and $s_i \in [1,k/2]$, $i > 1$. The nodes in level $\ell$ are contained in blocks $\bl{S}$ with $S \in \sset \land |S| = \ell$. Each such block contains $(k/2)^{L - \ell}$ many consecutive nodes of level $\ell$. In level $0$ there is only a single block. In case $\ell > 1$ and $S = (s_1, s_2, ... ,s_\ell)$ the block $\bl{S}$ contains the nodes $[o + 1, o + (k/2)^{L-\ell}]$, where $o = (s_1 - 1) \cdot (k/2)^{L-1} + (s_2 - 1) \cdot (k/2)^{L-2} + \dots + (s_{\ell}-1) \cdot (k/2)^{L-\ell}$. 

In the following, we will denote the concatenation operator by $\circ$ and call the blocks $\bl{S \circ i}$, $i \in [1,k/2]$, \emph{children} of $\bl{S}$ (the block in level $\ell = 0$ has $k$ children) and vice-versa $\bl{S}$ the \emph{parent} of the blocks $\bl{S \circ i}$. Edges are only drawn between blocks that have a parent-child relationship. This can be seen in  \cref{fig:fattree-example}, where the blocks are visualized as blue boxes (the block at the top is $\bl{\emptyset}$). We then denote by $\tree{S}$ the subgraph containing all blocks $\bl{S'}$ such that $S$ is a prefix of $S'$. For such a fat-tree $\tree{S}$, we say that it is \emph{rooted} in $\bl{S}$. Note, when compressing each block to a single node and drawing an edge for each parent-child relationship, then the resulting graph becomes a tree such that $\bl{S'''}$ is a successor of $\bl{S''}$ iff $S'''$ is a prefix of $S''$.
In order to describe how edges are drawn, we also define \emph{clusters} such that every block $\bl{S}$ is partitioned into clusters. Each cluster $\cl{S}{i}$, $i\geq 1$, contains the first $i \cdot (k/2)$ consecutive nodes of $\bl{S}$.
Edges are inserted by constructing complete bipartite subgraphs. For all clusters $\cl{S}{i}$, we draw edges from every node in the cluster to the $i$-th node in each of the children of $\bl{S}$ (and vice-versa). We call this set of nodes in the children \emph{vertical cluster} $\vcl{S}{i}$. 
In \cref{fig:fixed-level} we illustrate how these edges are drawn.

\begin{figure}
    \centering
    \scalebox{0.8}{
        \input{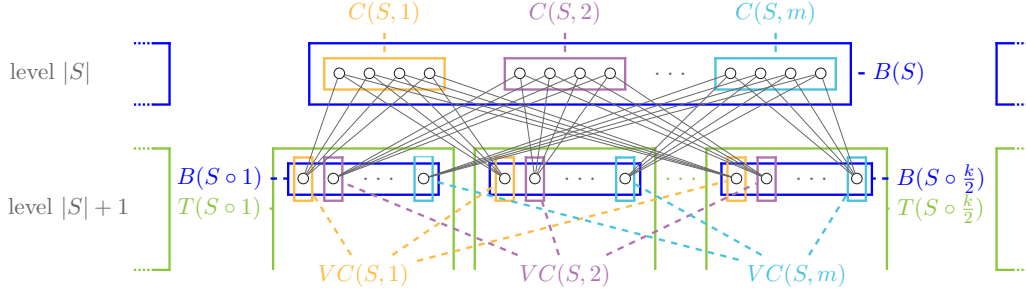}
    }
    \caption{Links between a fixed block $\bl{S}$ and its children. Here $m$ denotes the number of clusters in block $\bl{S}$.} \label{fig:fixed-level}
\end{figure}

Note, from the point-of-view of a fixed node $v$ in some level $\ell$, it resides in exactly one cluster of some block $\bl{S}$ with $|S| = \ell$. Furthermore, in case of $\ell > 0$, it also lies in exactly one vertical cluster, the cluster $\vcl{S'}{i'}$ where $\bl{S'}$ is the parent of $\bl{S}$.

\subsection{Routing Protocol} 

In the following section, we describe how the interval routing protocol of \cref{sec:bipartite} can be adapted to the Clos topology. Note that we only consider topologies with a constant amount of layers, i.e., $L = \Theta(1) > 1$.
To enable interval routing, we employ an additional layer of granularity. That is, we partition each cluster and vertical cluster into $K := (4+L) \log k$ consecutive intervals of size  $I := k/( (8+2L) \log k) = \Theta(k / \log k)$. We denote the $j$-th such interval, $j \geq 0$, of each cluster $\cl{S}{i}$, and the vertical cluster $\vcl{S}{i}$ by $\cli{S}{i}{j}$ or $\vcli{S}{i}{j}$, respectively.  A slight exception to this occurs at level $0$ which only consists of a single block $\bl{\emptyset}$. Here $\emptyseq$ is used to denote the sequence of length $0$. As $\bl{\emptyseq}$ has $k$ children instead of $k/2$, there are $k$ nodes in each cluster $\vcl{\emptyseq}{i}$. These vertical clusters are also split into $K$ many intervals, each containing $2 \cdot I$ nodes.

We focus on all-to-one routing towards some destination $d$ which resides on level $L$ (servers are typically located at the bottom of the Clos topology~\cite{al2008scalable}).  The primary tool to route packets towards $d$ is the sequence $S_d$, which we define as the sequence $S \in \sset$ with length $|S| = L$ that fulfills $\bl{S} = \{d\}$. Note that for each node on level $L$ such a sequence must exist (in \cref{fig:fattree-example} this is visualized as in the bottom layer each node is contained in its own block). We denote by $\prefix{S_d}{i}$, $0\leq i \leq L$ the length $i$ prefix sequence of $S_{d}$. Furthermore, we let $d_{i,j}$ denote the $j$-th node in the block $\bl{\prefix{S_d}{i}}$.
The routing protocol follows the definitions of a local failover protocol given in \cref{sec:local-failover} and equips each node with fitting distributions.

\newcommand{\ruleone}{\hyperref[def:interval-routing]{R1}\xspace}
\newcommand{\ruletwo}{\hyperref[def:interval-routing]{R2}\xspace}

\begin{definition}[Clos Interval Routing]
\label{def:interval-routing}
    Let $v$ be a node $v \in \cli{S}{i}{j}$. Protocol $\mathcal{P}_C$ equips $v$ with the following  distributions to enable routing towards $d$.
    
    \begin{enumerate}
        \item[(R1)] $S$ is \emph{not} a prefix of $S_d$. Let $\bl{S^P}$ denote the parent of $v$'s block $\bl{S}$. Then $v \in \vcli{S^P}{i'}{j'}$ for some $i', j'\geq 1$ and 
        \[
            \Dist{v}{\mathcal{F}_v} {d} = \Uniform \Big( \cli{S^P}{i'}{(j'+1) \mod K} \setminus \mathcal{F}_v  \Big)
        \]
        \item[(R2)] $S$ is a length-$s$ prefix of $S_d$. If $d_{s+1,i} \not \in \mathcal{F}_v$, set $\Dist{v}{\mathcal{F}_v}{d} = \Uniform(\{ d_{s+1,i} \})$. Else, set 
        \[
            \Dist{v}{\mathcal{F}_v}{d} = \Uniform \Big( \vcli{S}{i}{j} \setminus \mathcal{F}_v \Big)
        \]
    \end{enumerate}
\end{definition}

The basic idea behind the routing protocol is to send a packets with destination $d$ from child to parent blocks until they reaches a block $\bl{S}$ such that $S$ is prefix of $S_d$ (\ruleone). Assume now that, after reaching this block $\bl{S}$, the packet lies on a node $v_1$ in interval $\cli{S}{i}{j}$ and $S$ is a length $|S| = s$ prefix of $S_d$. As $v_1 \in \cl{S}{i}$, it is connected to $d_{{s+1},i}$, which lies in $\vcl{S}{i}$. After forwarding the packet to this node, it would then reside on a node in $\bl{\prefix{S_d}{s+1}}$. Note that this block's sequence matches the destination for one more element. However, in case $d_{s+1,i}$ cannot be reached, the only link from $v$ into $\bl{\prefix{S_d}{s+1}}$ is unreachable. In such a case, it is forwarded to some $w_1 \in \vcli{S}{i}{j}$ instead (\ruletwo). As $w_1$ lies on a block $\bl{S'}$, which is a child of $\bl{S}$ that has some sequence $S' \neq \prefix{S_d}{s+1}$, the packet is forwarded according to \ruleone in the next step. Afterwards, it will again lie on a node $v_2$ in $\cl{S}{i}$. However, this time in the interval $\cli{S}{i}{j+1}$. In the next step, the packet is again attempted to be forwarded to $d_{s+1,i}$. Otherwise it is forwarded to $\vcli{S'}{i}{j+2}$ and the procedure repeats. Intuitively, the packet "ping-pongs" between layers $|S|$ and $|S|+1$ until it manages to reach $d_{s+1,i}$, similar as in the protocol for the complete bipartite graph of \cref{sec:bipartite}. As the forwarding partners are chosen u.a.r., it is unlikely for the packet to hit a node with failed link to $d_{s+1,i}$ in each of $\Omega(\log k)$ alternations, and it will eventually hit $d_{s+1,i}$. A visualization of this idea is given in \cref{fig:ping-pong}. We also invite the reader to familiarize her- or himself with the more detailed example we prepared in \cref{sec:clos-example} starting on page \pageref{sec:clos-example}.

\begin{figure}
    \centering
    \scalebox{0.8}{
        \tikzset{
    rect/.style n args={4}{
            draw=none,
            very thick,
            rectangle,
            minimum width=2cm,
            minimum height=0.5cm,
            append after command={
                \pgfextra{%
                    \pgfkeysgetvalue{/pgf/outer xsep}{\oxsep}
                    \pgfkeysgetvalue{/pgf/outer ysep}{\oysep}
                    \def\arg@one{#1}
                    \def\arg@two{#2}
                    \def\arg@three{#3}
                    \def\arg@four{#4}
                    \begin{pgfinterruptpath}
                        \ifx\\#1\\\else
                            \draw[draw,#1] ([xshift=-\oxsep]\tikzlastnode.south east) edge ([xshift=-\oxsep]\tikzlastnode.north east); 
                        \fi\ifx\\#2\\\else
                            \draw[draw,#2] ([yshift=-\oysep]\tikzlastnode.north east) edge ([xshift=0\ifx\arg@three\@empty+\pgflinewidth\fi,yshift=-\oysep]\tikzlastnode.north west); 
                        \fi\ifx\\#3\\\else
                            \draw[draw,#3] ([xshift=\oxsep]\tikzlastnode.north west) edge ([xshift=\oxsep]\tikzlastnode.south west); 
                        \fi\ifx\\#4\\\else
                            \draw[draw,#4] (\tikzlastnode.south east) edge (\tikzlastnode.south west); 
                        \fi
                    \end{pgfinterruptpath}
                }
            }
        },
    txt/.style={
        draw=none,
        rectangle
    },
    switch/.style={
        draw=black!55,
        circle,
        thick,
        inner sep=0pt,
        minimum size=0.175cm
    }
}
\begin{tikzpicture}[very thick]
    \def\RowWidth{0.75}
    \def\ColWidth{2}

    \def\CA{0*\ColWidth}
    \def\CB{1*\ColWidth}
    \def\CC{2*\ColWidth}
    \def\CD{3*\ColWidth}
    \def\CE{4*\ColWidth}
    \def\CF{5*\ColWidth}
    \def\CG{6*\ColWidth}
    \def\CH{7*\ColWidth}
    \def\CI{8*\ColWidth}
   
    \def\RA{0*\RowWidth}
    \def\RB{-1*\RowWidth}
    \def\RC{-2*\RowWidth}
    \def\RD{-3*\RowWidth}
    \def\RE{-4*\RowWidth}
   
    \node [txt, text=black] at (\CA-\ColWidth/10,\RB) {level $|S|\;\;\;\;\;\;$};
   
    \node [txt, text=blue] at (\CC, \RA-\RowWidth/5) {\small$C(S,i,j)$};
    \node [txt, text=blue] at (\CD, \RA-\RowWidth/5) {\small$C(S,i,j+1)$};
    \node [txt, text=blue] at (\CE, \RA-\RowWidth/5) {\small$C(S,i,j+2)$};
    \node [txt, text=blue] at (\CI, \RB)             {$C(S,i)\;\;$};
   
    \node [rect={SkyBlue}   {blue} {blue} {blue}] at (\CB, \RB) {};
    \node [rect={SkyBlue}   {blue} {}          {blue}] at (\CC, \RB) {};
    \node [rect={SkyBlue}   {blue} {}          {blue}] at (\CD, \RB) {};
    \node [rect={SkyBlue}   {blue} {}          {blue}] at (\CE, \RB) {};
    \node [rect={SkyBlue}   {blue} {}          {blue}] at (\CF, \RB) {};
    \node [rect={SkyBlue}   {blue} {}          {blue}] at (\CG, \RB) {};
    \node [rect={blue} {blue} {}          {blue}] at (\CH, \RB) {};
   
    \node [switch, draw=MidnightBlue!70!black!100] at (\CC, \RB) (v1) {};
    \node [switch, draw=MidnightBlue!70!black!100] at (\CD, \RB) (v2) {};
    \node [switch, draw=black!60] at (\CE, \RB) (v3) {};
   
    \node [txt, text=MidnightBlue!70!black!100] at (\CC+\ColWidth/6, \RB+\RowWidth/10) {$v_1$};
    \node [txt, text=MidnightBlue!70!black!100] at (\CD+\ColWidth/6, \RB+\RowWidth/10) {$v_2$};
    \node [txt, text=MidnightBlue!70!black!100] at (\CE+\ColWidth/6, \RB+\RowWidth/10) {$v_3$};
   
   
    \node [txt, text=black] at (\CA-\ColWidth/10, \RD) {level $|S| + 1$};
   
    \node [txt, text=blue] at (\CC, \RE+\RowWidth/5) {\small $VC(S,i,j)$};
    \node [txt, text=blue] at (\CD, \RE+\RowWidth/5) {\small $VC(S,i,j+1)$};
    \node [txt, text=blue] at (\CI, \RD)             {\small $VC(S,i)$};

    \node [txt, text=black!60]  at (\CG, \RE+\RowWidth/5) {$d_{|S|+1,i}$};
   
    \node [rect={SkyBlue}   {blue} {blue} {blue}] at (\CB, \RD) {};
    \node [rect={SkyBlue}   {blue} {}          {blue}] at (\CC, \RD) {};
    \node [rect={SkyBlue}   {blue} {}          {blue}] at (\CD, \RD) {};
    \node [rect={SkyBlue}   {blue} {}          {blue}] at (\CE, \RD) {};
    \node [rect={SkyBlue}   {blue} {}          {blue}] at (\CF, \RD) {};
    \node [rect={SkyBlue}   {blue} {}          {blue}] at (\CG, \RD) {};
    \node [rect={blue} {blue} {}          {blue}] at (\CH, \RD) {};
   
    \node [switch, draw=MidnightBlue!70!black!100] at (\CC, \RD) (w1) {};
    \node [switch, draw=MidnightBlue!70!black!100] at (\CD, \RD) (w2) {};
    \node [switch, draw=MidnightBlue!70!black!100] at (\CG, \RD) (w3) {};
   
    \node [txt, text=MidnightBlue!70!black!100] at (\CC+\ColWidth/6, \RD-\RowWidth/10) {$w_1$};
    \node [txt, text=MidnightBlue!70!black!100] at (\CD+\ColWidth/6, \RD-\RowWidth/10) {$w_2$};

    \draw[->, dashed, thick, draw=purple] (v1) -- (w3);
    \draw[->, dashed, thick, draw=purple] (v2) -- (w3);
   
    \draw[->, thick, draw=MidnightBlue!70!black!100] (v1) -- (w1);
    \draw[->, thick, draw=MidnightBlue!70!black!100] (w1) -- (v2);
    \draw[->, thick, draw=MidnightBlue!70!black!100] (v2) -- (w2);
    \draw[->, thick, draw=MidnightBlue!70!black!100] (w2) -- (v3);
    \draw[->, thick, draw=purple]   (v3) -- (w3);
   
    \draw[<->, thick, draw=black!60] (\CH+\ColWidth*3/8, \RB - \RowWidth/2.3) -- (\CH+\ColWidth*3/8, \RD + \RowWidth/2.3);
    \node [txt, text=black!60] at (\CI-\ColWidth/6, \RC) {Bipartite};

\end{tikzpicture}
    }
    \caption{"Ping-Pong" of packet starting at $v_1$ in a block $\bl{S}$ where $S$ is prefix of $S_d$} \label{fig:ping-pong}
\end{figure}

\begin{theorem}
\label{thm:a-tree-main}
		Let $G = (V,E)$ be a Clos topology with degree $k$ and $L+1$ levels for some constant $L > 1$. Consider the routing protocol $\mathcal{P}_C$ and assume all-to-one routing towards some destination $d$ on level $L$. 
		Assume the adversary chooses its set of failures $\mathcal{F}$ such that the following holds for every triple $(S,i,j)$ where $S \in \sset$ with $0 \leq |S| \leq L-1$, $1 \leq i \leq L - |S|$ and $0 \leq j < K$:
		\begin{enumerate}
		    \item  $\forall w \in \vcl{S}{i}:$  $| \{ v \in \cli{S}{i}{j} ~|~ w \in \mathcal{F}_v\}| \leq I/3$
		    \item  $\forall v \in \cl{S}{i}:$   $| \{ w \in \vcli{S}{i}{j} ~|~ v \in \mathcal{F}_w\}| \leq I/3$
		\end{enumerate}
Then, with probability $1-\BigO{k^{-4}}$, every node $u \in V \setminus \{ d\}$ has $\mathcal{L}(u)  = \BigO{k^{L-1} \cdot \log n \log \log n}$, even if the adversary knows $\mathcal{P}_C$ and $d$.
\end{theorem}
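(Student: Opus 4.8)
\emph{Overview.} The plan is to reduce the analysis to $L = \BigO{1}$ copies of the bipartite Interval process of \cref{sec:bipartite}. Recall that a flow aimed at $d$ first climbs (rule \ruleone) to the block $\bl{\prefix{S_d}{s^*}}$, where $s^*$ is the length of the longest common prefix of its origin's sequence with $S_d$, and then, for $\ell = s^*, s^*+1, \dots, L-1$ in turn, ``ping-pongs'' between levels $\ell$ and $\ell+1$ inside $\bl{\prefix{S_d}{\ell}}$ (rules \ruletwo/\ruleone) until it reaches the unique node $d_{\ell+1,i}$ that carries it one level closer to $d$. Restricted to one cluster $\cli{\prefix{S_d}{\ell}}{i}{\cdot}$ and its vertical cluster $\vcli{\prefix{S_d}{\ell}}{i}{\cdot}$, this is exactly the bipartite process under the failure constraints assumed in \cref{thm:a-tree-main}, so the machinery of \cref{sec:bipartite} applies phase by phase.

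\emph{Step 1: loop-freeness.} First I would adapt \cref{lem:bip-noloops}: in each alternation the packet lands on a uniformly random node of the next interval, which, by the failure constraints, has an intact link to its descent target $d_{\ell+1,i}$ with probability at least $2/3$; hence a fixed flow fails to descend a level within $K = (4+L)\log k$ alternations with probability at most $3^{-K}$, and a union bound over all $\BigO{k^L}$ flows and the $L$ levels shows that with probability $1-\BigO{k^{-4}}$ every flow reaches $d$ in $\BigO{\log k}$ hops, so in particular no flow lies on a cycle. Condition on this event.

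\emph{Step 2: how many flows feed each cluster.} Next I would bound, by induction along the nested subtrees $\tree{\prefix{S_d}{0}} \supset \tree{\prefix{S_d}{1}} \supset \dots \supset \tree{\prefix{S_d}{L}} = \{d\}$, the number $N_\ell$ of flows that ever traverse any single cluster at level $\ell$. Since every flow that enters the level-$\ell$ ping-pong of a cluster $i$ eventually leaves $\bl{\prefix{S_d}{\ell}}$ through the one node $d_{\ell+1,i}$, the flows feeding a level-$\ell$ cluster are exactly those that descended from the $k/2$ level-$(\ell-1)$ clusters whose descent targets lie in it, together with the flows that start inside $\tree{\prefix{S_d}{\ell}}$ with longest common prefix exactly $\ell$ and climb up. The latter number $\BigO{k^{L-\ell}}$ and land on the $\Theta(k^{L-\ell-1})$ clusters of $\bl{\prefix{S_d}{\ell}}$ essentially uniformly at random --- a \ruleone step places a packet uniformly inside a cluster --- so by a nested balls-into-bins argument each cluster receives only $\BigO{k}$ of them, in ``clumps'' of size $\BigO{\polylog k}$, \whp. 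Unrolling the recursion $N_\ell \le (k/2)\cdot N_{\ell-1} + \BigO{k}$ (with $N_0 = \BigO{k}$) gives $N_\ell = \BigO{k^{\ell+1}}$, so the number of flows that enter the level-$\ell$ ping-pong at any single node is $\BigO{k^{\ell}}$ (only $\BigO{\polylog k}$ when $\ell = 0$), \whp.

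\emph{Step 3: per-node load, and the main obstacle.} Finally, fix any node $u$ participating in the level-$\ell$ ping-pong. Exactly as in \cref{obs:bip-martingale}, the sizes of the sets ``nodes from which a flow reaches $u$ in exactly $j$ ping-pong hops'' form a Poisson-trial Markov chain drifting towards $0$ by a factor $3/4$ every two steps (the factor $1/2$ on the cluster side uses that at most $I/3$ of its $I$ nodes have a failed descent link; the factor $3/2$ on the vertical-cluster side comes from the uniform \ruleone choice over at least $2I/3$ candidates), so \cref{thm:markovchain} with $\phi = 3/2$, $\psi = 1/2$, $r = 2K$, together with Step 1, bounds the total number of elements of these sets by $\BigO{\log k \log\log k}$ with probability $1 - \BigO{k^{-\Omega(1)}}$. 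Weighting each such node by the $\BigO{k^{\ell}}$ flows it injects (Step 2) gives $\mathcal{L}(u) = \BigO{k^{\ell}\cdot\log k\log\log k} \le \BigO{k^{L-1}\log n\log\log n}$, using $\log n = \Theta(\log k)$ and $L > 1$; the off-path ``pure climbing'' nodes carry only $\BigO{\polylog k}$ flows by one further balls-into-bins estimate, and a union bound over all nodes finishes the proof. The hard part is that Steps 2 and 3 are \emph{not} independent: the number of flows a cluster receives from below and the internal ping-pong of that same cluster depend on the same failure set and on overlapping routing entries $\alpha(\cdot)$, so neither Chernoff bounds nor the method of bounded differences applies directly. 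I would resolve this by processing the subtrees $\tree{\prefix{S_d}{\ell}}$ in order and, inside each, revealing the entries $\alpha(\cdot)$ along the reverse search from $u$ interleaved with just enough higher-level entries to pin down the incoming flow counts, applying the Poisson-trial concentration conditionally at each stage.
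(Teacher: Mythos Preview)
Your proposal is correct and follows essentially the same architecture as the paper: an induction along the nested subtrees $\tree{\prefix{S_d}{\ell}}$ with staggered uncovering of routing entries (the paper calls this ``Staggered Load Calculation'' and packages it as \cref{lem:a-tree-induction}), together with the bipartite ping-pong analysis via \cref{thm:markovchain}. Your identification of the dependency between Steps~2 and~3 as the main obstacle, and your resolution by revealing $\alpha(\cdot)$ progressively per subtree, is exactly the mechanism the paper uses.

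One organizational difference worth noting: the paper separates concerns more sharply than you do. Its inductive lemma only establishes the coarse per-node bound $\mathcal{L}(v) = \BigO{k^\ell}$ at $\bl{\prefix{S_d}{\ell}}$ (without any $\log k \log\log k$ factor), using just \cref{obs:a-tree-1} and \cref{obs:a-tree-hops} at each step; the Markov-chain machinery of \cref{thm:markovchain} is invoked only \emph{once}, at the very last level $\ell = L-1$, by appealing to \cref{thm:bipartite} on the final bipartite piece $\tree{\prefix{S_d}{L-1}}$ with each node seeded by $\BigO{k^{L-1}}$ flows. Your Step~3 instead applies the Markov chain at every level, which also works but is slightly more effort. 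The paper's version avoids re-running the fine-grained ping-pong analysis at intermediate levels because the coarse bound $\BigO{k^\ell}$ already suffices there (the $\log k \log\log k$ amplification at level $\ell < L-1$ is absorbed by the next factor of $k$). Either packaging yields the same final bound.
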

While the requirement on the failure set $\mathcal{F}$ may seem restrictive at first, it simply states that in every interval at most $I/3 = \Theta(k / \log k)$ many nodes may have failed edges to the same node. Note that $I/3$ failures from nodes of the same interval are simultaneously allowed to many different nodes.

\subsection{\texorpdfstring{Analysis of \cref{thm:a-tree-main}}{Analysis}}
\label{sec:clos-analysis}

Throughout this proof, we consider the destination $d$ as well as the set of failed edges placed by the adversary $\mathcal{F}$ to be fixed (we assume this set to adhere to the requirements of \cref{thm:a-tree-main}). As described in \cref{sec:local-failover}, each node $v$ draws its routing entry $\alpha(v,\mathcal{F}_v, d)$ from $\Dist{v}{\mathcal{F}_v}{d}$ which is specified in  \cref{def:interval-routing}. As we consider the set of failures $\mathcal{F}$ as well as $d$ to be fixed, we use the abbreviation $\alpha(v) := \alpha(v, \mathcal{F}_v,d)$.

\paragraph{Staggered Load Calculation:}
 In the following, we will not immediately uncover all entries $\alpha(v)$ required to determine the load $\mathcal{L}(v)$ some node receives. Instead, we will uncover these entries step-by-step. To that end, we extend our notion of \emph{load} defined in \cref{sec:local-failover}, to also apply in cases where some entries are still left covered. Flows that arrive at a node $v$ with a still covered entry $\alpha(v)$, are assumed to be \emph{stopped} and only contribute to load of nodes that lie on the path the flows takes to reach $v$. As soon as the entry of $v$ is uncovered, all stopped flows continue to flow until they either reach $d$, or hit another node with a covered entry. It is easy to see that by increasing the number of uncovered entries, the load at any node can only increase, and, after uncovering all entries of nodes $v \neq d$, we end up with the  notion of load defined in \cref{sec:local-failover}. This staggered uncovering of entries allows us to develop a bound on the load step-by-step and helps us circumvent dependencies of the traffic flow in different parts of the topology.

\begin{restatable}{lemma}{atreeinduction}
\label{lem:a-tree-induction}
    Let $\ell$ be an integer in $[0,L-1]$. Then, after uncovering all entries besides those of nodes in $\tree{\prefix{S_d}{\ell}}$, the following holds with probability $\geq 1 - 4\ell \cdot k^{-4}$:
    \begin{enumerate}
        \item no flow travels in a cycle, i.e., $\mathcal{L}(v) < \infty$ for all nodes $v$ \label{item:a-tree-1}
        \item flows of nodes with uncovered entries are stopped at some node $v \in \bl{\prefix{S_d}{\ell}}$ \label{item:a-tree-2}
        \item all nodes, including those in $\bl{\prefix{S_d}{\ell}}$, have load $\BigO{k^{\ell}}$ \label{item:a-tree-3}
    \end{enumerate}
\end{restatable}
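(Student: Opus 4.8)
The natural approach is induction on $\ell$. The base case $\ell=0$ is immediate: here $\tree{\prefix{S_d}{0}}=G$, so \emph{no} routing entry has been uncovered, no flow moves, every node carries exactly its own flow and so has load $1=\BigO{k^0}$, there is no cycle, and statement~\ref{item:a-tree-2} is vacuous since no node has an uncovered entry; the bound holds with probability $1=1-4\cdot 0\cdot k^{-4}$.

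For the inductive step I would fix $\ell\le L-2$, condition on the good event $\mathcal{E}_\ell$ of probability $\ge 1-4\ell k^{-4}$ supplied by the hypothesis, and pass to $\ell+1$. Moving from the $\ell$-uncovered to the $(\ell+1)$-uncovered configuration uncovers exactly the entries of the nodes in $\tree{\prefix{S_d}{\ell}}\setminus\tree{\prefix{S_d}{\ell+1}}$: the block $\bl{\prefix{S_d}{\ell}}$ itself, together with the off-path subtrees $\tree{\prefix{S_d}{\ell}\circ i}$ for $i$ with $\prefix{S_d}{\ell}\circ i\neq\prefix{S_d}{\ell+1}$. The flows set in motion are then (i) those stopped at $\bl{\prefix{S_d}{\ell}}$ under $\mathcal{E}_\ell$, (ii) those originating inside $\bl{\prefix{S_d}{\ell}}$, and (iii) those originating in the off-path subtrees, which by rule \ruleone travel monotonically upward in level until they also reach $\bl{\prefix{S_d}{\ell}}$. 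From a node $v\in\cli{\prefix{S_d}{\ell}}{i}{j}$, rule \ruletwo either sends the flow to $d_{\ell+1,i}\in\bl{\prefix{S_d}{\ell+1}}$ — whose entry is still covered, so the flow stops there — or, if that link is failed, into a uniformly random node of $\vcli{\prefix{S_d}{\ell}}{i}{j}$, which then lies in an off-path child and, by rule \ruleone, returns to $\bl{\prefix{S_d}{\ell}}$ one interval further along. This is exactly the \emph{ping-pong} of \cref{sec:bipartite}, with $\cl{\prefix{S_d}{\ell}}{i}$, $\vcl{\prefix{S_d}{\ell}}{i}$, $d_{\ell+1,i}$ in the roles of $V$, $W$, $d$, and the interval constraint on $\mathcal{F}$ in \cref{thm:a-tree-main} is precisely the hypothesis needed to invoke the bipartite results.

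Statements~\ref{item:a-tree-1} and~\ref{item:a-tree-2} then follow quickly. Upward routing by rule \ruleone inside an off-path subtree is level-monotone and loop-free, so a flow can cycle only while ping-ponging at $\bl{\prefix{S_d}{\ell}}$; applying \cref{lem:bip-noloops} to each of the $\BigO{k^{L-1}}$ cluster-bipartite sub-instances nested under $\bl{\prefix{S_d}{\ell}}$ — with the constant $C=4+L$ fixed in \cref{def:interval-routing} — and a union bound over these and the flows in each gives that, with probability $1-\BigO{k^{-4}}$, every such flow escapes into $\bl{\prefix{S_d}{\ell+1}}$ within $2K$ hops; this is statement~\ref{item:a-tree-1}. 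Since a flow can enter $\tree{\prefix{S_d}{\ell+1}}$ only through the block $\bl{\prefix{S_d}{\ell+1}}$, whose entries remain covered, it is stopped there, giving statement~\ref{item:a-tree-2}.

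The crux is statement~\ref{item:a-tree-3}. For a node $r=d_{\ell+1,i}\in\bl{\prefix{S_d}{\ell+1}}$ it is a counting argument: every flow ending at $r$ first passes through the cluster $\cl{\prefix{S_d}{\ell}}{i}$, and the number of flows ever entering that cluster is at most its size times the per-node bound $\BigO{k^\ell}$ from $\mathcal{E}_\ell$, plus a $\BigO{k}$ contribution funnelled up (roughly uniformly over the clusters) from the off-path subtrees — in total $\BigO{k^{\ell+1}}$. For an intermediate node $u$, in a cluster of $\bl{\prefix{S_d}{\ell}}$ or in an off-path child at level $\ell+1$, the same counting gives $\Ex{\mathcal{L}(u)}=\BigO{k^\ell\log k}$, the factor $1/I=\Theta(\log k/k)$ being the probability that a fixed flow reaching the relevant interval lands on $u$. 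Turning this into a high-probability bound is the main obstacle: the events that a given flow reaches this interval are not independent, because once two flows meet at a node they are forwarded identically thereafter, so a Chernoff bound or the method of bounded differences cannot be applied off the shelf. I would resolve this exactly as in the analysis of \cref{sec:bipartite}: uncover the entries $\alpha(\cdot)$ in a prescribed order — layer by layer, and within a layer interval by interval — so that at each step the new contribution to $\mathcal{L}(u)$ is a sum of (weighted) Poisson trials with already-fixed parameters, as in the analyses underpinning \cref{thm:bipartite} and \cref{thm:markovchain}, to which concentration then applies. The generous slack between $\BigO{k^\ell\log k}$ and the target $\BigO{k^{\ell+1}}$ (a factor $\Theta(k/\log k)$) means only a crude tail bound is needed here and the $\log k$ is absorbed, giving $\mathcal{L}(u)=\BigO{k^{\ell+1}}$ for every node. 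A union bound over the $\BigO{k^{L-1}}$ sub-instances costs $\BigO{k^{-4}}$; together with the loop-freeness failure probability and $\mathcal{E}_\ell$ this yields total failure probability at most $4(\ell+1)k^{-4}$, completing the induction.
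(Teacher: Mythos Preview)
Your proposal follows the paper's approach closely: induction on $\ell$, the same trivial base case, uncovering the entries of $\tree{\prefix{S_d}{\ell}}\setminus\tree{\prefix{S_d}{\ell+1}}$ in stages, identifying the off-path subtrees (the paper calls them $\mathcal{T}_\ell$) that funnel load upward via rule~\ruleone into the bipartite sub-instances $\cl{\prefix{S_d}{\ell}}{i}\cup\vcl{\prefix{S_d}{\ell}}{i}$, and invoking the bipartite no-loop argument for statements~\ref{item:a-tree-1} and~\ref{item:a-tree-2}. The $\BigO{k}$ bound on what each vertical cluster receives from its off-path subtree is handled in the paper by a separate layer-by-layer Chernoff argument (\cref{obs:a-tree-1}), which you only allude to (``roughly uniformly over the clusters''); this is a genuine sub-lemma and deserves to be stated as such.

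Where you diverge is in what you flag as ``the main obstacle'': bounding the load at intermediate ping-pong nodes $u$. You propose to control $\mathcal{L}(u)$ via staggered uncovering and concentration, worrying about the dependencies created by merged flows. This is unnecessary. Precisely \emph{because} routing is destination-based, once flows meet at $u$ they travel together thereafter, and (conditional on the no-loop event you already established) every flow in the bipartite sub-instance indexed by $i$ terminates at $d_{\ell+1,i}$. Hence $\mathcal{L}(u)\le\mathcal{L}(d_{\ell+1,i})$ deterministically, and the counting bound you already gave for $d_{\ell+1,i}$ covers every intermediate node at once. The paper does exactly this: it bounds load only at the nodes $d_{\ell+1,i}$ and deduces statement~\ref{item:a-tree-3} directly. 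Your detour through concentration for intermediate nodes is not incorrect, just superfluous --- the dependency you worry about is in fact the mechanism that makes the bound trivial.
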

\begin{proof}[Sketch of Proof]
    The proof uses induction over the levels $\ell = 0$ to $L-1$ in the following way. In each step of the induction we uncover the edges in $\tree{\prefix{S_d}{\ell}} \setminus \tree{\prefix{S_d}{\ell+1}}$. 
    In the induction hypothesis, we assume that the lemma holds up to some 
    $\ell \in [0,L-2]$ and in the induction step, we show that the statement also holds for $\ell+1$. The base case (i.e., before we uncover any entries) trivially holds. 
    
    In order to perform the induction step, we use a two-step approach. First, we uncover the edges in $\mathcal{T}_\ell = \{ \tree{\prefix{S_d}{\ell} \circ i} ~|~ 1 \leq i \leq k/2 \land \prefix{S_d}{\ell} \circ i \neq \prefix{S_d}{\ell+1} \}$ (note that if $\ell = 0$, then $i$ is in the range $1, \dots , k$). As an example, if $\ell=0$ and $d$ is the last vertex in level $3$ in \cref{fig:fattree-example}, then the set $\mathcal{T}_\ell$ contains the subtrees rooted in the first three blocks of level $1$. If $\ell=1$ ($d$ remains the same node in \cref{fig:fattree-example}), then $\mathcal{T}_\ell$ is the subtree rooted in the $7$th block of level $2$. After uncovering the edges in $\mathcal{T}_\ell$, we show that every vertical cluster in the blocks on level $\ell+1$ in $\mathcal{T}_\ell$ contain $\BigO{k}$ load.
    
    In the second step, we uncover the edges between levels $\ell$ and $\ell+1$ in $\tree{\prefix{S_d}{\ell}}$. As a result, we obtain that the statement holds for $\ell+1$. This second step heavily uses the properties of the failover routing algorithm in complete bipartite graphs. The full proof is given in \cref{sec:details-fat-tree} on page \pageref{proof:a-tree-induction}.
\end{proof}

\begin{proof}[Proof of \cref{thm:a-tree-main}]
    Let $S_{L-1} := \prefix{S_d}{L-1}$. We start with an application of \cref{lem:a-tree-induction} for $\ell = L-1$. When uncovering all entries except those in $\tree{S_{L-1}}$, this implies that the flows of all nodes outside $\tree{S_{L-1}}$ enter $\tree{S_{L-1}}$ at its root $\bl{S_{L-1}}$ without causing load higher than $O(k^{L-1})$ \whp. 
    Note that the root $\bl{S_{L-1}}$ of $\tree{S_{L-1}}$ consists of $k/2$ nodes and therefore only a single cluster. More precisely the following holds.

    \begin{observation}
    \label{obs:last-tree}
    $\tree{S_{L-1}}$ is a complete bipartite graph consisting of the clusters $\cl{S_{L-1}}{1}$ and $\vcl{S_{L-1}}{1}$. These clusters each have size $k/2$ and $d \in \vcl{S_{L-1}}{1}$.
    \end{observation}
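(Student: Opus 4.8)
The plan is to prove the observation by directly unpacking the definitions of blocks, clusters, and vertical clusters from \cref{sec:clos-definition}, specialised to the bottom two levels $L-1$ and $L$; there is no probabilistic or combinatorial content here, only bookkeeping against the topology definition.

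First I would pin down the structure of the root block. Since $S_{L-1}=\prefix{S_d}{L-1}$ has length $L-1$, by the block definition $\bl{S_{L-1}}$ consists of $(k/2)^{L-(L-1)}=k/2$ consecutive nodes of level $L-1$. Recalling that the clusters of a block are the nested prefixes $\cl{S}{i}$ of size $i\cdot(k/2)$, a block of exactly $k/2$ nodes admits only the single cluster $\cl{S_{L-1}}{1}$, and that cluster equals all of $\bl{S_{L-1}}$.

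Next I would identify the children and show they are singletons. By construction $\bl{S_{L-1}}$ (which is on a level $\geq 1$, as $L>1$) has the $k/2$ children $\bl{S_{L-1}\circ i}$, $i\in[1,k/2]$, each on level $L$; a level-$L$ block has a length-$L$ sequence and hence contains $(k/2)^{L-L}=1$ node. Because the topology has levels $0,\dots,L$, level $L$ is the bottom level, so $\tree{S_{L-1}}$ contains exactly $\bl{S_{L-1}}$ together with these $k/2$ singleton children — that is, $k/2$ nodes on level $L-1$ and $k/2$ nodes on level $L$, and nothing below. Applying the edge rule to the unique cluster $\cl{S_{L-1}}{1}$: edges run from every node of the cluster to the $i$-th node of each child, and the union of those targets is by definition $\vcl{S_{L-1}}{1}$; since each child has exactly one node, $\vcl{S_{L-1}}{1}$ is precisely the set of all $k/2$ level-$L$ nodes of $\tree{S_{L-1}}$. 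The same rule states that these edges form a complete bipartite subgraph between $\cl{S_{L-1}}{1}$ and $\vcl{S_{L-1}}{1}$; as those two clusters jointly exhaust the vertex set of $\tree{S_{L-1}}$ and both have size $k/2$, the claimed bipartite structure and the size equalities follow.

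Finally, for $d\in\vcl{S_{L-1}}{1}$: by the definition of $S_d$ we have $\bl{S_d}=\{d\}$ with $|S_d|=L$, so $S_d=S_{L-1}\circ s_L$ for some $s_L\in[1,k/2]$, whence $\bl{S_d}$ is one of the children of $\bl{S_{L-1}}$; as that child is the singleton $\{d\}$, its unique (hence first) node $d$ lies in $\vcl{S_{L-1}}{1}$. I do not anticipate a genuine obstacle — the only point meriting a moment's care is confirming that $\tree{S_{L-1}}$ has no vertices below level $L$, which is immediate from the topology having exactly $L+1$ levels, so that $\cl{S_{L-1}}{1}$ and $\vcl{S_{L-1}}{1}$ really do cover the whole subtree.
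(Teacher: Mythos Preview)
Your proposal is correct and follows the same approach as the paper, which treats this observation as immediate from the definitions and provides no separate proof beyond the remark that $\bl{S_{L-1}}$ has $k/2$ nodes and hence a single cluster. Your write-up simply makes explicit the bookkeeping the paper leaves implicit.
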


    This enables us to apply results from the bipartite graph section (\cref{sec:bipartite}). As $S_{L_1}$ is a prefix of $S_d$, all flows starting from nodes in $\cl{S_{L-1}}{1} \cup \vcl{S_{L-1}}{1}$ will "ping-pong" between the clusters until $d$ is reached (see \cref{fig:ping-pong}).
    Note that the path taken by the packets in $\tree{S_{L-1}}$ according to $\mathcal{P}_C$ is exactly the same as if  the nodes in $\tree{S_{L-1}}$ would follow the bipartite routing protocol $\mathcal{P}_B$ instead (described in \cref{sec:bipartite} with $C = (4+ L)$). The main result of that section, \cref{thm:bipartite}, then implies that at most $\BigO{\log k \cdot \log \log k}$ load is created \whp. However that result assumes that each node in $\tree{S_{L-1}}$ starts with only $1$ flow, while in our case up to $O(k^{L-1})$ flows start from a single node in $\cl{S}{1}$ as soon as the entries in $\tree{S_{L-1}}$ are uncovered. Thus, we obtain a maximum load of $O(k^{L-1} \cdot \log k \log \log k)$.
\end{proof}

\subsection{Lower Bound for the Clos Topology}

Managing load under the all-to-one traffic pattern is inherently challenging even in highly-connected Clos topologies. To illustrate this, we construct a simple congestion lower bound of $\Omega(k^{L-1})$, which holds even in the absence of link failures and does not rely on our notion of local failover routing. Assume that all-to-one routing towards a node $d$ in level $L$ is performed. This destination node $d$ is incident to only $k/2$ many nodes, all of which lie in level $L-1$. All flows need to travel over one of these $k/2$ nodes to reach $d$. As the Clos topology contains $\Omega(k^{L})$ many nodes in total and each node sends a flow towards $d$, it follows that one of the $k/2$ many neighbors of $d$ must accumulate $\Omega(k^{L-1})$ flows.

\paragraph{Refined Lower Bound}
In the remainder of this section, we present an improved lower bound that is targeted towards a class of  routing protocols that exhibit the following properties.

\begin{definition}[Fairly Balanced and Shortest Path Routing]
\label{def:fairly-balanced}
    Let $\mathcal{P}$ be a local failover protocol operating in the Clos topology, assume $\mathcal{F} = \emptyset$, i.e., no edges are failed, and assume that all-to-one routing towards any arbitrary destination $d$ on level $L$ is performed. We call $\mathcal{P}$ a \emph{fairly balanced} protocol if, \whp, it holds that $\mathcal{L}(v) = \Theta(k^\ell)$ for  $v \in \bl{\prefix{S_d}{\ell}}, 1 \leq \ell \leq L$, and $\mathcal{L}(v) = \BigO{\polylog k}$ otherwise.
    
    \noindent Furthermore, we say $\mathcal{P}$ is a \emph{shortest path} protocol, or $\mathcal{P}$ \emph{forwards over shortest paths} if the following holds for every node $v \in V \setminus \{d\}$. Let $\mathcal{F}$ be a set of failures (which might be empty) such that $|\mathcal{F}_v| \leq I / 3 = \BigO{k / \log k}$. Then, the routing entry $\alpha(v, \mathcal{F}_v, d)$ must always lie on a shortest path to $d$ in the graph $(V,E \setminus \mathcal{F}_v)$. 
\end{definition}

While assuming these properties limits the generality of the following lower bound, they are natural and realized by the standard equal-cost multipath  protocol ECMP~\cite{singh2015jupiter,kabbani2014flowbender} which also underlies the widely-used routing protocols OSPF.\footnote{Specifically, ECMP balances flows across shortest paths by default, and upon a failure, locally re-hashes to redistribute flows across the remaining shortest paths to the destination.} Intuitively, fairly balanced means that the load that has to enter some block at a particular level of the Clos topology is "fairly" balanced among the nodes of the particular block. That is, each such node receives the same load up to constant factors. Note that protocols, which do not exhibit the fairly balanced property, seem unnatural as they may generate load situations in which some nodes are heavily affected by flows while others (on the shortest path from a level $0$-node to the destination) remain idle. 

However, note that ECMP protocol may generate cycles with probability $1/\polylog n$ if the number of failures is $\Omega(k/\log k)$. 
For protocols that exhibit above properties, one can construct a load lower bound of $\Omega(k^{L-1} \log n/\log \log n)$. To achieve this bound, the adversary only fails edges in $\tree{\prefix{S_d}{L-1}}$, which is a complete bipartite graph (see \cref{obs:last-tree}) consisting of $k$ nodes partitioned into $\cl{\prefix{S_d}{L-1}}{1}$ and $\vcl{\prefix{S_d}{L-1}}{1}$. As we only consider fairly balanced protocols it follows that each node in $\cl{\prefix{S_d}{L-1}}{1}$ (the one partition of the complete bipartite graph) receives a load of $\Theta(k^{L-1})$. 
Due to shortest path routing, no load will ever leave this bipartite graph again. Then, a set of edges incident to the destination can be failed such that a load of $\Omega(k^{L-1} \log k/\log \log k)$ is generated. The corresponding results can be found in \cref{lem:bipartite-lower-bound} of \cref{sec:bipartite-lower-bound}.
We then show in \cref{obs:a-tree-fairly-stable} that our routing protocol $\mathcal{P}_C$ indeed fulfills the properties in \cref{def:fairly-balanced}. This implies that the result in \cref{thm:a-tree-main} is tight up to a $\polyloglog$ factor. The proofs are in \cref{sec:details-fat-tree}. 

\begin{restatable}{lemma}{atreelowerbound}
\label{lem:a-tree-lower-bound}
    Let  $\mathcal{P}$ by a fairly balanced protocol that operates in a Clos topology with $L = \Theta(1)>1$ layers and only forwards over shortest paths to some destination $d$ on level $L$. Then, there exists a set of failures $\mathcal{F}$ with $|\mathcal{F}| \leq I / 3 = \BigO{k / \log k}$ such that, in expectation, at least one node $v \neq d$ has $\mathcal{L}(v) = \Omega(k^{L-1} \log k / \log \log k)$. 
 \end{restatable}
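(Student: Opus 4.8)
The plan is to aim the adversary's entire failure budget at the bipartite block directly above $d$ and to reduce the claim to the bipartite lower bound \cref{lem:bipartite-lower-bound}, amplified by the $\Theta(k^{L-1})$ flows that a fairly balanced protocol is forced to funnel through that block. By \cref{obs:last-tree}, $\mathcal{T} := \tree{\prefix{S_d}{L-1}}$ is a complete bipartite graph on the $k/2$-element sets $X := \cl{\prefix{S_d}{L-1}}{1} = \bl{\prefix{S_d}{L-1}}$ and $Y := \vcl{\prefix{S_d}{L-1}}{1}$, with $d \in Y$. I let $\mathcal{F}$ consist only of edges of $\mathcal{T}$ incident to $d$, which keeps $|\mathcal{F}_v| \le |\mathcal{F}| \le I/3$ for every node $v$, so that the shortest-path hypothesis of \cref{def:fairly-balanced} is available everywhere.

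The first, and structurally central, step is a \emph{confinement} claim: once $\mathcal{F}$ touches only edges of $\mathcal{T}$, every flow that reaches $\mathcal{T}$ stays inside $\mathcal{T}$ until it arrives at $d$, and no flow runs in a cycle. This follows from the recursive column structure of the fat-tree. Inside $\mathcal{T}$ every node $x \ne d$ lies within distance $3$ of $d$: a node $v \in X$ with $(v,d)$ failed still reaches $d$ along $v \to w \to v' \to d$ for some $v' \in X$ whose edge to $d$ survives, since at most $I/3 < k/2$ of the $k/2$ edges at any node can be failed. On the other hand, because of this column structure, any walk from $x$ to $d$ that leaves $\mathcal{T}$ must climb to level $L-2$, and a single up--down detour there returns to a node of $X$ from which nothing new is gained; so such a walk is at least two hops longer than the best walk confined to $\mathcal{T}$. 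A shortest-path protocol therefore never forwards out of $\mathcal{T}$, and the bounded distances rule out cycles. Since a flow coming from level $\le L-2$ can enter $\mathcal{T}$ only through its unique interface layer $X = \bl{\prefix{S_d}{L-1}}$, all flows reaching $d$ funnel through $X$ and thereafter remain inside $\mathcal{T}$.

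Next I quantify this traffic and invoke \cref{lem:bipartite-lower-bound}. Every node $u \notin \mathcal{T}$, and also every $w \in Y \setminus \{d\}$, satisfies $\mathcal{F}_u = \emptyset$ and hence uses exactly the routing entry it would use when $\mathcal{F} = \emptyset$; coupling the random choices, the multiset of flows \emph{arriving} at each $v \in X$ is identical to the failure-free run. In the failure-free run $\mathcal{P}$ forwards every $v \in X = \bl{\prefix{S_d}{L-1}}$ straight to $d$, so $\mathcal{L}(v)$ equals that arrival count, which the fairly balanced property pins at $\Theta(k^{L-1})$ \whp. Hence, in the run with failures, each $v \in X$ still receives $\ell_v = \Theta(k^{L-1})$ incoming flows with probability $1 - o(1)$, and by destination-basedness all of them leave $v$ along one common path $P_v$. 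I now apply \cref{lem:bipartite-lower-bound} to the complete bipartite graph $X \cup Y$ with $n := k/2$, with $\mathcal{P}$ restricted to it, and with the constant $\varepsilon$ chosen small enough that the failure set it provides obeys $|\mathcal{F}| \le \varepsilon(k/2)/\log(k/2) \le I/3$ and consists of edges incident to $d$. In its unit-flow model the lemma gives $\mathbb{E}[\,\#\{u : \mathcal{L}(u) = \Omega(\log k/\log\log k)\}\,] \ge 1$, with the load of $u$ equal to $\#\{v \in X : u \in P_v\}$. Since in the Clos execution with this same $\mathcal{F}$ the $\ell_v$ flows at each $v$ trace the very same $P_v$, on the high-probability event $\min_{v \in X} \ell_v = \Omega(k^{L-1})$ every node's Clos load is at least $\Omega(k^{L-1})$ times its unit-flow load. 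Taking expectations --- the complementary event contributing only a lower-order term, which is absorbed by applying \cref{lem:bipartite-lower-bound} with a slightly smaller constant --- yields a node $v \ne d$ with $\mathcal{L}(v) = \Omega(k^{L-1}\log k/\log\log k)$ in expectation, which is the claim.

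I expect the confinement claim, together with its no-cycle corollary, to be the main obstacle: it is elementary, but has to be derived carefully from the fat-tree's addressing scheme, checking in particular that no shortest $x \to d$ walk can ever profit from stepping up to level $L-2$, even after the edges at $d$ (and up to $I/3$ further edges per node) have been removed. Everything afterwards --- the coupling that freezes the inflow into $X$, and the destination-based multiplicative scaling of \cref{lem:bipartite-lower-bound} --- is routine, apart from the minor bookkeeping needed to pass from the high-probability statement to the in-expectation conclusion.
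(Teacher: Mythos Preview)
Your proposal follows essentially the same route as the paper: restrict to the bipartite block $\tree{\prefix{S_d}{L-1}}$, use the shortest-path assumption to confine flows there, use fair balance to load each node of $X=\bl{\prefix{S_d}{L-1}}$ with $\Theta(k^{L-1})$ incoming flows, and then invoke \cref{lem:bipartite-lower-bound} with the destination-based scaling. Two phrasing issues are worth tightening: in the confinement step, an up-step from $v\in X$ to level $L-2$ lands in $\cl{\prefix{S_d}{L-2}}{i}$ whose \emph{only} link into $\mathcal{T}$ is $v$ itself, so the detour returns to the \emph{same} $v$ (not merely ``a node of $X$''), which is what actually forces the contradiction with shortest-path routing; and your ``identical multiset of arrivals'' claim holds for the flows \emph{first} entering $X$ from outside $\mathcal{T}$ (this is what the bipartite reduction needs), not for all arrivals once bounces inside $\mathcal{T}$ begin. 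With these clarifications the argument is correct and matches the paper's proof.
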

\begin{restatable}{lemma}{atreefairlybalanced}
\label{obs:a-tree-fairly-stable}
    The protocol $\mathcal{P}_C$ defined in \cref{def:interval-routing} is a fairly balanced shortest path protocol.
\end{restatable}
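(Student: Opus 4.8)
The lemma has two independent halves. \emph{Shortest paths.} Fix $v\neq d$, let $\bl{S}$ be its block and let $\mathcal{F}$ be any failure set with $|\mathcal{F}_v|\le I/3$. The starting point is a normal form for distances in the Clos topology: since edges only join a block to its children, every path to $d$ must first ascend to the lowest common ancestor block $\bl{\prefix{S_d}{m}}$ with $m=\mathrm{LCP}(S,S_d)$ and then descend the prefix path, so $\mathrm{dist}(v,d)=|S|+L-2m$ (which equals $L-s$ when $S$ is a length-$s$ prefix of $S_d$). I would verify this and then check the two rules. For \ruleone, $S$ is not a prefix of $S_d$, so $m<|S|$ and every shortest path ascends first; a short computation (splitting on $m=|S|-1$ versus $m<|S|-1$) shows that forwarding $v$ to \emph{any} node of the parent cluster $\cl{S^P}{i'}$ stays on a shortest path, and the target set $\cli{S^P}{i'}{\cdot}\setminus\mathcal{F}_v$ is non-empty because its interval has $I$ nodes while $|\mathcal{F}_v|\le I/3<I$. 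For \ruletwo with $S$ a length-$s$ prefix of $S_d$: if $d_{s+1,i}\notin\mathcal{F}_v$ then $d_{s+1,i}$ is the first node on the canonical descending path, hence on a shortest path. The delicate sub-case is $d_{s+1,i}\in\mathcal{F}_v$; here the structural fact, read off the cluster construction, is that $(v,d_{s+1,i})$ is the \emph{unique} edge from $v$ into the subtree $\tree{\prefix{S_d}{s+1}}$ that contains $d$, and moreover no node of $\bl{\prefix{S_d}{s+1}}$ lies within $2$ hops of $v$ in $(V,E\setminus\mathcal{F}_v)$ (a short inspection of $v$'s $2$-hop neighbourhood), so the distance from $v$ to $d$ in $(V,E\setminus\mathcal{F}_v)$ is exactly $(L-s)+2$. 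Rule \ruletwo forwards $v$ to some $w\in\vcli{S}{i}{j}\setminus\mathcal{F}_v$ (non-empty again by $|\mathcal{F}_v|\le I/3$, and necessarily $w\neq d_{s+1,i}$), i.e.\ into a sibling block, and the path $v\to w\to v'\to d_{s+1,i}\to\cdots\to d$ --- with $v'\in\cl{S}{i}\setminus\{v\}$ a common neighbour of $w$ and $d_{s+1,i}$, followed by the straight descent --- has length $(L-s)+2$ and uses only edges not incident to $v$, so $w$ lies on a shortest path.

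\emph{Fairly balanced.} Now take $\mathcal{F}=\emptyset$, so \ruletwo is deterministic and \ruleone places each incoming flow uniformly at random inside an interval of size $I=\Theta(k/\log k)$. The first step is to pin down the flow pattern: the flow of a source in a block $\bl{S}$ ascends along the block path to $\bl{\prefix{S_d}{m}}$, $m=\mathrm{LCP}(S,S_d)$, and then descends the prefix path $\bl{\prefix{S_d}{m}},\dots,\bl{S_d}$; hence every node outside the prefix blocks $\bl{\prefix{S_d}{1}},\dots,\bl{\prefix{S_d}{L}}$ --- that is, a node of a non-prefix block or of the level-$0$ root block --- is touched only on an ascending leg, i.e.\ only through \ruleone. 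For such a node $v$, the flows reaching it arrive from the $I$ nodes $w'$ of a single interval of a vertical cluster one level below, each $w'$ routing all of its $\mathcal{L}(w')$ flows onto $v$ with probability $1/I$ and independently of the other $w'$; an induction on the level, from $L$ down to $0$, bounds each $\mathcal{L}(w')$ by $O(\polylog k)$ \whp (base case: level-$L$ nodes carry only their own flow), so the load at $v$ is stochastically dominated by $O(\polylog k)\cdot\Bin(I,1/I)$, which is $O(\polylog k)$ \whp by a Chernoff bound; a union bound over the $\poly(k)$ nodes settles the ``otherwise'' case. For a prefix-block node at level $\ell\ge1$ we have $v=d_{\ell,i}$ for a unique $i$, and by \ruletwo this node absorbs exactly the flows passing through cluster $\cl{\prefix{S_d}{\ell-1}}{i}$, together with an $O(\polylog k)$ contribution arriving from below via \ruleone (bounded as above). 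An induction on $\ell$ now yields $\mathcal{L}(v)=\Theta(k^\ell)$ \whp: for $\ell=1$, cluster $\cl{\emptyseq}{i}$ carries its $k/2$ own source flows plus the flows fed upward into it by the $k-1$ non-prefix children of $\bl{\emptyseq}$, whose total is $\Omega(k)$ deterministically and $O(k)$ \whp by a Chernoff bound on a sum of $k-1$ node loads that become independent once the $k-1$ subtrees are uncovered separately; for the induction step, each of the $k/2$ nodes of $\cl{\prefix{S_d}{\ell-1}}{i}$ carries load $\Theta(k^{\ell-1})$ by hypothesis, so the cluster --- and hence $v$ --- carries $\Theta(k^\ell)$ flows.

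\emph{Main obstacle.} The subtle point of the first half is the sub-case $d_{s+1,i}\in\mathcal{F}_v$, which hinges entirely on reading the Clos construction carefully: that $v$ has a unique edge into $\tree{\prefix{S_d}{s+1}}$ and that no detour shorter than $(L-s)+2$ exists. In the second half the obstacle is the dependency issue flagged in \cref{sec:clos-analysis}: the flows reaching a node from above (through the deterministic \ruletwo funnel) and from below (through \ruleone) are correlated through their shared routing history, so the Chernoff / balls-into-bins steps --- in particular the concentration of an entire cluster's load used in the prefix-node upper bound --- have to be run on a staggered uncovering of the routing entries, non-prefix subtrees bottom-up and the prefix path top-down, exactly in the style of \cref{lem:a-tree-induction}; once that ordering is fixed, the remaining estimates are routine.
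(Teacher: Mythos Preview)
Your proof plan is correct and follows essentially the same approach as the paper: both halves split along the same lines (shortest-path via the block-tree structure with the delicate $d_{s+1,i}\in\mathcal{F}_v$ sub-case, and fairly-balanced via a $\polylog k$ bound for non-prefix nodes followed by an induction on $\ell$ using $\mathcal{L}(d_{\ell,i})=\Theta(k^{\ell-1})\cdot (k/2)+\text{(lower-order)}$). Your treatment of the failed-edge sub-case is in fact more explicit than the paper's, and where you write ``Chernoff bound on a sum of $k-1$ node loads'' the paper instead invokes the level-by-level total-load argument of \cref{obs:a-tree-1}; either route works, though yours needs a bounded-differences/Hoeffding step rather than Chernoff proper since the summands are not Poisson trials.
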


\section{Technical Details}
\label{sec:details}

\subsection{Lower Bound Analysis}
\label{sec:details-lower-bound}

\lbcaseone*

\begin{proof}\label{proof:lb-case1}
	Let  $\mathcal{C}$ be a cycle in $\Glog$.  We start by showing that some node $v$ on $\mathcal{C}$ has at least $\ell := (1/10)  \log n  / \log \log n$ load with probability $\geq n^{-4/10}$.
	
    \case{1} $\mathcal{C}$ has length $\geq \ell$. Then, $\mathcal{C}$ must contain a path $(v_1, v_2 , ... , v_\ell)$ of length $\ell$. We set  $\mathcal{F}_{v_i} = \{ d\}$ for each $v_i$, $1 \leq i \leq \ell$, in other words, we fail for every node on the path the link directly connected to $d$. By \cref{obs:edge-alpha} it follows that any edge $(v_i, v_{i+1})$, $1 \leq i < \ell$, will be in $\Galpha$ with probability $> 1 / \log^4 n$. Therefore, the whole path will be in $\Galpha$ with probability at least $1/\log^{4 \cdot \ell} n = n^{-4 /10}$. This event implies that $v_\ell$ is reached by at least $\ell$ nodes in $\Galpha$ and by \cref{obs:alpha} this directly yields $\mathcal{L}(v_\ell) \geq \ell$.
	
	\case{2} $\mathcal{C}$ has length $< \ell$. Let $(v_1, v_2 , ... , v_{k-1} , v_k) = \mathcal{C}$ where $v_1 = v_k$. In this case, we set $\mathcal{F}_{v_i} = \{ d\}$ for any $v_i$ with $1 \leq i \leq k$. Again, by \cref{obs:edge-alpha} it follows that any edge $(v_i, v_{i+1})$, $1 \leq i < \ell$ will be in $\Galpha$ with probability at least $1/\log^{4k} n > 1/\log^{4\ell} = n^{-4 / 10}$. Note that such an event implies that the whole cycle $\mathcal{C}$ will be in $\Galpha$. In such a case, each node $v_i$ has $\mathcal{L}(v_i) = \infty$.
	
	As we consider the case of $|\Rootsett| - |\Rootset| \geq \sqrt{n}$ there must be at least $\sqrt{n}$ cycles in $\Glog$,  which have been resolved in step 2) of the generation of $\Glogt$ (see \cref{def:glogt}).
	At the start of this step, no node has degree larger than $1$ in $\Glogt$. It follows that these $\geq \sqrt{n}$ cycles must be node-disjoint. Therefore, each of theses cycles has \emph{independent} probability at least $n^{-4 / 10}$ to generate a node with at least load $\ell$ when spending $ \leq \ell$ edge failures. Considering a fixed subset of size $\sqrt{n}$ of such cycles and spending $\leq \ell$ edge failures per cycle, a Chernoff bound application yields that, \whp, some cycle will have a node with $\mathcal{L}(v) > \ell$.
\end{proof}

\lbcaseonetwo*
\begin{proof} \label{proof:case2-1}
	 We place edge failures such that $\mathcal{F} = \{ (v,d) ~|~ v \in \Rootset'\}$ and distinguish two cases.
	 
	\case{1} $\exists w \in V$ such that $\sum_{v \in \Rootset'} f_v(w) > \log^2 n$. Let $Z$ be the random variable that denotes  the number of edges $(v,w)$ in $\Galpha$ such that $v \in \Rootset'$. This allows us to write $Z = \sum_{v \in \Rootset'} Z_v$ where $Z_v$ is an indicator random variable with $Z_v = 1$ iff $(v,w)$ in $\Ealpha$. By \cref{obs:pdf} we have that the $Z_v$ are independent, and $\Pr[Z_v = 1] = f_v(w)$. Therefore, we may apply Chernoff bounds to bound $Z$ and, by assumption of this case, we have $\Ex{Z} > \log^2 n$. Therefore, $Z > \log^{2} n (1 - o(1))$ \whp. Hence $w$ has out-degree at least $\log^2 n (1-o(1))$ in $\Galpha$, which immediately implies that $\mathcal{L}(w) = \log^2 n (1 - o(1))$ and the second statement of the lemma follows.
	
	\case{2} $\forall w \in V$ it holds $\sum_{v \in \Rootset'} f_v(w) \leq \log^2 n$. In this case we focus on a set of nodes $\mathcal{R}$ instead of a single node.
	Let $\mathcal{R} := \{ w \in V ~|~ \sum_{v \in \Rootset'} f_v(w) > \varepsilon / (2 \log n)\}$. By a counting argument, we have $|\mathcal{R}| > n^{7/8}$. To show this, we assume the contrary. Observe that $\sum_{w \in V} f_v(w) = 1$ for any node $v \in V$, as $f_v$ is a PDF. We use this in the following inequality chain.
	\[
		\frac{\varepsilon n}{\log n} =\sum_{v \in \Rootset'}  \sum_{w \in V}  f_v(w)  = \sum_{w \in V}    \sum_{v \in \Rootset'}  f_v(w) < ( n - n^{7/8}) \cdot \frac{\varepsilon}{2\log n} + n^{7/8} \cdot \log^2 n < \frac{\varepsilon n}{\log n}.
	\]
	To derive the third step, we combined our assumption of $|\mathcal{R}| \leq n^{7/8}$ with the fact that \emph{no} node $w \in V$ has $\sum_{v \in \Rootset'} f_v(w) > \log^2 n$, which is the assumption of this case. Clearly this is a contradiction, which implies that $|\mathcal{R}| >n^{7/8}$ must hold. 
	
	Until further notice, our analysis will now consider a fixed node $w \in \mathcal{R}$.
	We know that $\sum_{v \in \Rootset'} f_v(w) > \varepsilon/(2 \log n)$, however, each value $f_v(w)$ might contribute a different amount to this sum. Another thing we know is that $f_v(w) < 1 / \log^4 n$ as $v \in \Rootset'$ are (reverse) roots in $\Glog$. To allow for a more fine-grained categorization we define for $i \in \mathbb{N}_0$
	\[ 
		\Rootset'(i) = \{ v ~|~ v \in \Rootset' \land f_v(w) \in (1/\log^{4 + i} n , 1/\log^{ 5+ i} n] \}.
	\]
	Observe that $\bigcup_{i \in \mathbb{N}_0} \Rootset'(i) = \Rootset'$. We now show that there is a partition with $|\Rootset(i)| > \log^{i} n$ for some $i < \log n / \log \log n$. This follows from a counting argument, as assuming otherwise yields that 
	\begin{align*}
		\sum_{v \in \Rootset'} f_v (w) &= \sum_{i \in \mathbb{N}_0} \sum_{v \in \Rootset'(i)} f_v(w) \\
		&< \sum_{i < \log n / \log \log n} \left( \frac{1}{\log^{4 + i } n } \cdot \log^{i} n \right)  + \sum_{i \geq \log / \log \log n} \left(  \frac{1}{\log^{4+ i} n }\cdot n \right) \\
		&< \frac{\log n}{\log \log n} \cdot \frac{1}{\log^4 n} + \BigO{\frac{1}{\log^4 n}} < \frac{\varepsilon}{2\log n}.
	\end{align*}
	Note, when bounding the second sum in the second step, we used the trivial bound $|\Rootset'(i)| \leq n$. This sequence of inequalities leads to a contradiction as $w \in \mathcal{R}$ implies $\sum_{v \in \Rootset'} f_v(w) > \varepsilon / (2\log n)$. Therefore, there must be a $j< \log n / \log \log n$ such that $|\Rootset'(j)| > \log^j n$.
Similar as in the first case, we now define for $v \in \Rootset'(j)$ the indicator random variable $Z_v$ where $Z_v = 1$ iff $(v,w) \in \Ealpha$. By \cref{obs:pdf} it follows that the $Z_v$ are independent and $\Pr[Z_v = 1] = f_v(w) > 1/\log^{4+j} n$. Therefore, $Z = \sum_{v \in \Rootset'(j)} Z_v$ is stochastically minorized by $\Bin( \log^{j} n \,,\, 1/\log^{4+j} n)$ and for  $\ell := (1/10) \cdot \log n / \log \log n$, we get
	\begin{align*}
		\Pr \left[ Z \geq \ell  \right] \geq \Pr \left[\Bin(\log^{i} n ,  1/\log^{4+i} n) \geq \ell \right] \geq \Pr \left[ \Bin(\log^i n, 1/ \log^{4+i} n) =  \ell  \right]
	\end{align*}
	\begin{align*}
		> \left(  \frac{\log^i n}{\ell} \right)^{\ell} \cdot \left( \frac{1}{\log^{4 + i}n} \right)^{\ell} \cdot \left(1 - \frac{1}{\log^4 n} \right) 
		= \left( \frac{1}{\ell \log^{4} n } \right)^{\ell} (1 - o(1))>  \frac{1}{n^{6 / 10}}
	\end{align*}
	Hence, one fixed node $w \in \mathcal{R}$ is incident to at least $(1/10) \log n / \log \log n$ nodes $v \in \Rootset'$ with probability at least $n^{-6 / 10} = n^{-6/10}$. As $|\mathcal{R}| > n^{7/8}$, the lemma's first statement follows.
\end{proof}

\lbcasetwotwo*

\begin{proof}\label{proof:case2-2}
	Let $\{v_1, v_2 , ... , v_{\varepsilon n/\log n} \} = \Rootset'$. For $v_i$ we define $X_i$ to be the random variable denoting the node such that $(v,X_i) \in \Galpha$ if $\mathcal{F}_v = \{d\}$.  That is, $X_i$ is the node that $v_i$ forwards packets to in case the direct link to $d$ is failed. Note that the random variables $X_i$ are independent. We define the function $f$, where $f(X_1, X_2 , ... , X_{\varepsilon n/\log n})$ denotes the number of nodes $w \in V$ that fulfill the following condition: there exist at least $(1/10) \log n/ \log \log n$ many edges $(v,w)$ with $v \in \Rootset'$ in $\Galpha$. In the following we call such a node \emph{popular}. In \cref{lem:case-2-1} we established that $\Ex{f} > n^{7/8}$.
	Additionally, observe that $f$ is $1$-Lipschitz. That is, $
	    |f(x_1, ..., x_i, ..., x_{\varepsilon n/\log n}) - f(x_1, ..., x_i', ..., x_{\varepsilon n/\log n})| \leq 1.$
	This is because changing the value of $x_i$ only changes the edge $(v_i, x_i)$ in $\Galpha$ to $(v_i, x_i')$. In the worst case, this can cause at most one node to become popular or un-popular respectively. This enables us to employ the \emph{method of bounded differences} (see \cref{thm:mcdiarmid} -- cf. \cite{MRT12}). This inequality states that, for any $t> 0$, 
	\[
		\Pr \Big[ |f - \Ex{f}| \geq t \Big] \leq 2\exp \left( \frac{2t^2}{\sum_{i=1}^{\varepsilon n/\log n} c_i^2} \right) = 2\exp \left( \frac{2t^2}{ \varepsilon n/ \log n} \right)
	\]
	where $c_i = 1$ for all $i$ follows as $f$ is $1$-lipschitz. Setting $t = n^{7/8}  - 2$ easily yields that, \whp, $f > 1$.
	Hence, \whp, there will be at least one popular node $w$. This node is reached by $(1/10) \cdot  \log n / \log \log n$ nodes in $\Galpha$ and from \cref{obs:load} it follows that $\mathcal{L}(w) \geq (1/10) \cdot \log n/ \log \log n$.
\end{proof}

\lemcasethreeone*

\begin{proof}\label{proof:case3-1}
	Let $\mathcal{S}$ be a subset of size $\sqrt{n}$, containing  $\sqrt{n}$ reverse trees of $\Glogt$ that have size at least $(1/10)  \cdot \log n / \log \log n$. Fix one such reverse tree $\mathcal{T}_R$ contained in $\mathcal{S}$. The basic idea is to fail edges of the form $(v,d)$ in $\mathcal{T}_R$, which causes $\mathcal{T}_R$ to appear in $\Galpha$ with a certain probability. However, in order to avoid using  more failures than necessary, we first cut out a (reverse) subtree of size $(1/10) \cdot  \log n / \log \log n$ from $\mathcal{T}_R$. We call this tree $\mathcal{T}_R'$. We let $r'$ be the reverse root of this subtree and set $\mathcal{F}_v = \{ d\}$ for every node $v \in \mathcal{T}_R'$, in other words for every $v \in \mathcal{T}_R'$ we fail the edge connection $v$ to destination $d$. As $\mathcal{T}_R'$ is a reverse tree, each node has at most one outgoing edge. Therefore, according to \cref{obs:edge-alpha-t}, each edge of $\mathcal{T}_R'$ will appear in $\Galpha$ with \emph{independent} probability $1/\log^4 n$. Hence, the whole subtree $\mathcal{T}_R'$ is contained in $\Galpha$ with probability at least
\[
	\log^{-4 \cdot  (1/10) \log n / \log \log n} n = n^{-4/10} = n^{-2/5},
\]
which implies by \cref{obs:load} that $\mathcal{L}(r') \geq (1/10) \cdot  \log n / \log \log n$. Summarizing, by failing the edges of a subtree of $\mathcal{T}_R$ we create a high load with probability at least $n^{-2/5}$. Recall, in $\mathcal{S}$ we had $\sqrt{n}$ such trees. As these trees are completely node-disjoint, this probability of $n^{-2/5}$ can be established for each of these $\sqrt{n}$ reverse trees \emph{independently}. By a Chernoff bound application it follows that, when failing $(1/10) \cdot \log n / \log \log n$ edges in each of the trees in $\mathcal{S}$, a load of $\mathcal{L}(v) > (1/10) \cdot \log  n / \log \log n$ will be created \whp.
 \end{proof}

\lemcasethreetwo*

\begin{proof} \label{proof:case3-2-details}
First consider the case of some node $w$ having $\indeg(w) \geq \log^5 n$ in $\mathcal{T}_R$. Fix this node $w$. In this case there must be a set of $\log^5 n$  nodes $v$ such that edges of form $(v,w)$ lie in $\Glogt$. We now set $\mathcal{F}_v = \{d\}$ for each of these nodes $v$. By \cref{obs:edge-alpha-t}, each of these $\log^5 n$ edges will appear in $\Galpha$ with independent probability at least $1/ \log^4 n$.  From a Chernoff bound application, it follows  that at least $\log n / 2$ of these edges will indeed appear in $\Glogt$ \whp. Therefore, $w$ is reached by at least $\log n / 2$ nodes, implying that $\mathcal{L}(w) > \log n / 2$ and the proof would be finished.

\medskip 

For the remainder of the proof we therefore assume that there is no node $w$ such that $\indeg(w) > \log^5 n$ in $\mathcal{T}_R$. We continue with the following observation, which enables us to "cut out" reverse subtrees from reverse trees.
\begin{observation}
\label{obs:tree-cut}
	Let $\mathcal{T}$ be a reverse tree consisting of $m > (1/10) \cdot  \log n / \log \log n$ nodes each with $\indeg(v) < \log^5 n$. Then, there exist subgraphs $\mathcal{T}_{cut}$ and $\mathcal{F}_{rest}$ of $\mathcal{T}$ such that the following holds.
	\begin{enumerate}
		\item $\mathcal{T}_{cut}$ is a reverse tree with $(1/10) \cdot  \log n / \log \log n$ nodes.
		\item $\mathcal{F}_{rest}$ is a forest of reverse trees each of size at least $(1/10) \cdot  \log n / \log \log n$ nodes and $\mathcal{F}_{rest}$ contains at least $m - \log^{7} n$ nodes.
		\item $\mathcal{T}_{cut}$ and $\mathcal{F}_{rest}$ do not share any nodes.
	\end{enumerate}
\end{observation}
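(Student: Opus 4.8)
The plan is to reduce everything to a routine tree‑surgery argument once one records the following structural fact: a reverse subtree of a reverse tree $\mathcal{T}$ is exactly the induced subgraph on a vertex set $A$ that is connected in the underlying undirected tree. Indeed, for connected $A$ the unique vertex $r'$ of $A$ of smallest depth lies on the parent‑path (out‑edge path) of every other vertex of $A$, so each $u\in A\setminus\{r'\}$ has its sole out‑edge inside $A$ and $r'$ is reachable from all of $A$; this makes the induced subgraph a reverse tree with reverse root $r'$ in the sense of \cref{def:trees}. I would state and justify this first, because both objects constructed below are produced by exhibiting suitable connected vertex sets. Throughout write $\ell := (1/10) \cdot \log n/\log \log n$, and for a node $v$ of $\mathcal{T}$ let $\mathcal{T}_v$ be the reverse subtree of $\mathcal{T}$ consisting of $v$ together with all nodes from which $v$ is reachable.

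First I would locate a suitable node $v$. Starting at the reverse root of $\mathcal{T}$, whose reverse subtree is all of $\mathcal{T}$ and has $m > \ell$ nodes, I repeatedly descend to a child $c$ with $|\mathcal{T}_c| \ge \ell$, stopping at the first node $v$ all of whose children $c$ satisfy $|\mathcal{T}_c| < \ell$ (the descent terminates since depth strictly increases). By construction $|\mathcal{T}_v| \ge \ell$, while on the other hand $|\mathcal{T}_v| = 1 + \sum_{c} |\mathcal{T}_c| \le 1 + \indeg_{\mathcal{T}}(v)\cdot(\ell - 1) < \log^7 n$, where the last (routine) estimate uses the hypothesis $\indeg(v) < \log^5 n$ and $\ell \le \log n$ for $n$ large.

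Then I would split into two cases according to how many nodes of $\mathcal{T}$ lie outside $\mathcal{T}_v$. If $m - |\mathcal{T}_v| \ge \ell$, I set $\mathcal{F}_{rest} := \mathcal{T}\setminus\mathcal{T}_v$: removing $\mathcal{T}_v$ (which contains no node above $v$) leaves a set of nodes still connected, since for any remaining node its parent‑path to the root cannot enter $\mathcal{T}_v$ without forcing that node into $\mathcal{T}_v$; hence $\mathcal{F}_{rest}$ is a single reverse tree with $m - |\mathcal{T}_v|$ nodes, which is both $\ge \ell$ and $\ge m - \log^7 n$. For $\mathcal{T}_{cut}$ I peel leaves off $\mathcal{T}_v$ one at a time until exactly $\ell$ nodes remain; the result stays connected, hence is a reverse subtree of $\mathcal{T}_v$ with $\ell$ nodes, and it is disjoint from $\mathcal{F}_{rest}$ since it sits inside $\mathcal{T}_v$. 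If instead $m - |\mathcal{T}_v| < \ell$, then $m < |\mathcal{T}_v| + \ell \le \log^7 n$, so I simply take $\mathcal{F}_{rest} := \emptyset$ (vacuously a forest of reverse trees, and $0 \ge m - \log^7 n$) and let $\mathcal{T}_{cut}$ be any $\ell$‑node reverse subtree of all of $\mathcal{T}$, again obtained by leaf‑peeling and available because $m > \ell$.

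The only point demanding genuine care is keeping the extracted subtree $\mathcal{T}_v$ small: this is exactly where the bounded‑in‑degree hypothesis is used, since without it a single very deep sub‑cone could absorb almost all of $\mathcal{T}$ and destroy the $\log^7 n$ wastage budget. The two accompanying structural claims — that $\mathcal{T}\setminus\mathcal{T}_v$ is connected and that leaf‑peeling preserves connectivity — are the other places to argue carefully; everything else is bookkeeping with the inequalities $\indeg(v) < \log^5 n$ and $\ell \le \log n$.
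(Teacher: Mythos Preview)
Your argument is correct and takes a genuinely different route from the paper. The paper builds $\mathcal{T}_{cut}$ by a breadth-first traversal from the reverse root $r$, collecting entire levels until exactly $\ell$ nodes are gathered; removing this top slice then shatters $\mathcal{T}$ into a forest with at most $\log^5 n \cdot \ell$ components, and small components are subsequently discarded to obtain $\mathcal{F}_{rest}$. You instead descend from the root until you find a node $v$ whose cone $\mathcal{T}_v$ has size in $[\ell,\,1+\indeg(v)(\ell-1)]$, excise this cone, and then trim $\mathcal{T}_v$ down to $\ell$ nodes by leaf-peeling.

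The main difference is where the bounded in-degree hypothesis enters and what $\mathcal{F}_{rest}$ looks like. In the paper it bounds the \emph{number of fragments} created when the BFS prefix is removed, so that discarding the sub-$\ell$ fragments costs at most $\log^6 n \cdot \ell$ nodes; in your approach it bounds the \emph{size} of the excised cone $\mathcal{T}_v$ directly. Your construction yields a cleaner $\mathcal{F}_{rest}$---either a single reverse tree or empty---whereas the paper produces a genuine forest that must be pruned. This would in fact slightly simplify the iterated application in the proof of \cref{lem:case3-2}, since one would never have to manage multiple surviving trees between iterations. Both approaches waste at most $O(\log^6 n)$ nodes per cut, comfortably inside the $\log^7 n$ budget.
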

\begin{proof}
	We start by constructing the set of nodes $V_{cut}$ which will later on induce the reverse tree $\mathcal{T}_{cut}$. Let $r$ be the root of $\mathcal{T}$, and let $L_i$, $i \in \mathbb{N}_0$,  denote the set of nodes that have distance $i$ to $r$ in $\mathcal{T}$. We start by adding $L_0 = \{ r\}$ to $V_{cut}$, followed by nodes of the set $L_1$, then $L_2$ and so forth. We include these sets until $|V_{cut}| = (1/10) \cdot  \log n / \log \log n$. Note, for the last level $L_{i^*}$ that we add to $V_{cut}$, we may only add a subset of the nodes in $L_{i^*}$ to $V_{cut}$ as we want to exactly reach $|V_{cut}| = (1/10) \cdot  \log n / \log \log n$. This subset may be chosen arbitrarily. The order of nodes that we add to $V_{cut}$ can be seen as the order in which a breath-first search would traverse the tree. We now define  $\mathcal{T}_{cut}$ to be the subgraph induced in $\mathcal{T}$ by $V_{cut}$. Due to the construction of $V_{cut}$, it follows that $\mathcal{T}_{cut}$ is a reverse tree.
	
	\medskip

	The construction of $\mathcal{F}_{rest}$ starts with a copy of $\mathcal{T}$. Then, we remove all nodes from $\mathcal{F}_{rest}$ that are in $V_{cut}$ and also all edges involving some node $v \in V_{cut}$. Observe that $\mathcal{F}_{rest}$ is now a forest of reverse trees. Additionally, for each reverse root $r'$ in $\mathcal{F}_{rest}$ there must be an edge $(r',v)$ in $\mathcal{T}$ such that $v \in V_{cut}$. As we assume that $\forall v \in V: \indeg(v) < \log^5 n$ it must also hold that any  $v \in V_{cut}$ has $\indeg(v) < \log^5 n$. Therefore, the forest $\mathcal{F}_{rest}$ has at most $\log^5 n \cdot |V_{cut}| = (1/10) \cdot  \log^6 n / \log \log n$ many roots (this number also corresponds to the amount of reverse trees in $\mathcal{F}_{rest}$). Hence, it is impossible that more than $(1/10)^2 \log^7 n / (\log \log n)^2$ nodes may lie in reverse trees of size $<(1/10) \cdot \log n / \log \log n$. In the second and final step of the construction of $\mathcal{F}_{rest}$, we remove all reverse trees of size $< (1/ 10) \cdot  \log n / \log \log n$ from $\mathcal{F}_{rest}$. As just established, the number of nodes in $\mathcal{F}_{rest}$ after this removal still lies above 
	\[
		m - |V_{cut}| - (1/100) \log^7 n / (\log \log n)^2 > m - \log^7 n.
	\]
The proof of the third statement of the observation is trivial as $\mathcal{F}_{rest}$ is constructed by removing the nodes of $\mathcal{T}_{cut}$.
\end{proof}
The idea of the remaining proof is to repeatedly apply \cref{obs:tree-cut} and create a set $\emph{Trees}$ that contains a set of node-disjoint reverse subtrees of $\mathcal{T}$. 
Suppose that in some step $i$, $0 \leq i < \sqrt{n} / \log^7 n$, we have a forest $\mathcal{F}^{(i)}$ that has at least $\sqrt{n} - i \cdot \log^7$ nodes and only contains reverse trees of size $(1/10) \cdot  \log n / \log \log n$. Then, we can pick an arbitrary reverse tree $\mathcal{T}^{(i)}$ from $\mathcal{F}^{(i)}$. Let $m_i > (1/10) \cdot \log n / \log \log n$ denote the size of this tree. We apply  \cref{obs:tree-cut} to this tree, which yields us $\mathcal{T}_{cut}^{(i)}$ and $\mathcal{F}_{rest}^{(i)}$. We now add $\mathcal{T}_{cut}^{(i)}$ to the set $\emph{Trees}$ . Additionally, we define the forest $\mathcal{F}^{(i+1)} := (\mathcal{F}^{(i)} \setminus \mathcal{T}^{(i)}) \cup \mathcal{F}_{rest}^{(i)} $ . That is, $\mathcal{F}^{(i+1)}$ is the result of first removing $\mathcal{T}^{(i)}$ from $\mathcal{F}^{(i)}$ and then adding $\mathcal{F}_{(rest)}^{(i)}$ to $\mathcal{F}^{(i)}$. Note that,  $(\mathcal{F}^{(i)} \setminus \mathcal{T}^{(i)})$ and $\mathcal{F}_{rest}^{(i)}$ do not share any nodes and both these graphs are forests consisting of trees with size $\geq (1/10) \cdot  \log n / \log \log n$.
Therefore, $\mathcal{F}^{(i+1)}$ is also a forest consisting of reverse trees of size at least $(1/10) \cdot \log n / \log \log n$ only. Furthermore, $\mathcal{F}^{(i+1)}$ has at least
\[
	(\sqrt{n} - i \log^7 n) - m_i + (m_i -\log^ 7 n) = n - (i+1) \log^7 n
\]
nodes. The first $m_i$ is the size of $\mathcal{T}^{(i)}$ and $(m_i - \log^{7} n)$ is the size of $\mathcal{F}_{rest}^{(i)}$ according to \cref{obs:tree-cut}. Summarizing, after step $i$, we cut out a subtree $\mathcal{T}_{cut}^{(i)}$ of $\mathcal{F}^{(i)}$. Additionally, we  generated a forest $\mathcal{F}^{(i+1)}$ -- which is a subgraph of $\mathcal{F}^{(i)}$ and does not contain any nodes of $\mathcal{T}_{cut}^{(i)}$ -- consisting of $n - (i+1) \log^7 n$ nodes in trees of size at least $(1/10) \cdot  \log n / \log \log n$. This way, our approach may be repeated also in step $(i+1)$ and allows us to harvest additional subtrees from $\mathcal{F}^{(i+1)}$. 

The above argument can easily be translated into an induction, where we start with $\mathcal{F}^{(0)} = \mathcal{T}_{R}$. After $i^* = \sqrt{n} / \polylog n$ many steps, the set $\emph{Trees}$  contains at least $\sqrt{n} / \polylog n$ many reverse subtrees of $\mathcal{T}_{R}$ that do not share any nodes and have size $(1/10) \cdot  \log n / \log \log n$. We now set $\mathcal{F}_{v} = \{ d\}$ for any node $v$ that lies on a subtree in $\emph{Trees}$. This requires $\sqrt{n} / \polylog n \cdot (1/10) \cdot  \log n / \log \log n$ edge failures in total. Fix now  such subtree $\mathcal{T}_{R}' \in \emph{Trees}$ . In the proof of \cref{lem:case3-1} we established that, $\mathcal{T}_{R}'$ appears in $\Galpha$ with probability at least $n^{-2/5}$. As all our subtrees in $\emph{Trees}$  do not share any nodes, a similar argument as in the proof of \cref{lem:case3-1} yields that, \whp, one of our $\sqrt{n} / \log n$ many trees in $\emph{Trees}$  will appear in $\Galpha$ \whp. The root of this reverse tree $r'$ then receives $(1/10) \cdot  \log n / \log \log n$ load.
\end{proof}

\begin{lemma}
\label{lem:general-graph-extension}
Let $\mathcal{G}$ be a graph with $n$ nodes and $\mathcal{P}$ a destination-based local failover routing protocol. Then, a protocol $\mathcal{P}_K$ that operates in the clique $K_n$ can be constructed such that the following holds: For all-to-one routing to any destination $d$ and given the same set of edge failures $\mathcal{F}$, the load distribution of nodes in $\mathcal{G}$ and $K_n$ is the same.
\end{lemma}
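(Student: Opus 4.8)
The plan is to have $\mathcal{P}_K$ simulate $\mathcal{P}$ on the clique by treating every edge of $K_n$ that is missing from $\mathcal{G}$ as if it were permanently failed. Write $\Gamma(v)$ for the $\mathcal{G}$-neighbourhood of $v$, so that $\Gamma_{K_n}(v) = V \setminus \{v\} \supseteq \Gamma(v)$. For every node $v$, destination $d$, and local failure set $F \subseteq \Gamma_{K_n}(v)$, I would set the $\mathcal{P}_K$-distribution at $v$ to be
\[
    \mathcal{D}_K(v, F, d) \;:=\; \Dist{v}{F \cap \Gamma(v)}{d},
\]
the distribution that $\mathcal{P}$ prescribes for $v$ when exactly the $\mathcal{G}$-incident edges in $F$ have failed. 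The first step is to verify that this is a legal destination-based rule on $K_n$: the distribution $\Dist{v}{F \cap \Gamma(v)}{d}$ is supported on $(\Gamma(v) \setminus F) \cup \{v\}$, and since $\Gamma(v) \setminus F \subseteq \Gamma_{K_n}(v) \setminus F$ this support is contained in $(\Gamma_{K_n}(v)\setminus F) \cup \{v\}$, as a clique rule requires; the self-loop $(v,v)$, present and unfailable in both graphs, covers the corner case $F \cap \Gamma(v) = \Gamma(v)$. So $\mathcal{P}_K$ is a well-defined local destination-based failover protocol. The one structural fact I would record here is that $\mathcal{D}_K(v,F,d)$ depends on $F$ only through $F \cap \Gamma(v)$, so that failing edges of $K_n$ at $v$ that are absent in $\mathcal{G}$ has \emph{no} effect on $v$'s behaviour under $\mathcal{P}_K$.

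Next I would compare the two processes under a common failure set $\mathcal{F} \subseteq E(\mathcal{G})$ (which is simultaneously a set of $\mathcal{G}$-edges and of $K_n$-edges) and a common destination $d$. For every $v$ the induced local failure set $\mathcal{F}_v = \{w : (v,w) \in \mathcal{F}\}$ is identical in both graphs and satisfies $\mathcal{F}_v \subseteq \Gamma(v)$, whence $\mathcal{D}_K(v, \mathcal{F}_v, d) = \Dist{v}{\mathcal{F}_v \cap \Gamma(v)}{d} = \Dist{v}{\mathcal{F}_v}{d}$. Hence, running $\mathcal{P}$ in $\mathcal{G}$ and $\mathcal{P}_K$ in $K_n$, every routing entry $\alpha(v, \mathcal{F}_v, d)$ has the same distribution, and by the model these entries are drawn independently across $v$ in both cases; consequently the random tuple $\bigl(\alpha(v,\mathcal{F}_v,d)\bigr)_{v \in V\setminus\{d\}}$ — equivalently the random graph $\Galpha$ of \cref{def:glog} — has the same law in the two settings. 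Since by \cref{obs:load} each load $\mathcal{L}(v)$ is a deterministic function of $\Galpha$ alone (the number of nodes that can reach $v$, or $\infty$ if $v$ lies on a cycle), the joint distribution of the loads coincides in $\mathcal{G}$ and in $K_n$, which is the assertion.

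For completeness, and because this is the form in which the lemma is invoked in \cref{sec:lower-theorem-proof}: given a failure set $\mathcal{F}^{(K)}$ producing a high load in $(K_n,\mathcal{P}_K)$ via \cref{thm:lower-bound-main}, the inertness of the missing-edge failures implies that $\mathcal{F} := \mathcal{F}^{(K)} \cap E(\mathcal{G})$ induces the very same loads in $(K_n,\mathcal{P}_K)$, and then the equality of load distributions transfers that high load to $(\mathcal{G},\mathcal{P})$. The argument is mostly bookkeeping; I expect the only point that needs a moment's care to be the support check that certifies $\mathcal{P}_K$ as a valid clique protocol, together with the immediate — but load-bearing — observation that failures of non-$\mathcal{G}$ edges are inert, since that is precisely what licenses discarding them when translating $\mathcal{F}^{(K)}$ back to $\mathcal{G}$.
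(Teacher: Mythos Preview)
Your proposal is correct and follows essentially the same approach as the paper: define $\mathcal{D}_K(v,F,d)=\Dist{v}{F\cap\Gamma(v)}{d}$ so that non-$\mathcal{G}$ edges are inert, then observe that under any common failure set the per-node forwarding distributions (hence $\Galpha$ and the loads) coincide. If anything, your version is slightly more careful than the paper's, since you explicitly verify the support condition making $\mathcal{P}_K$ a valid clique protocol and spell out the inertness fact that justifies replacing $\mathcal{F}^{(K)}$ by $\mathcal{F}^{(K)}\cap E(\mathcal{G})$ when transferring the lower bound back to $\mathcal{G}$.
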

\begin{proof}
    Consider a fixed input graph $\mathcal{G} = (V,E)$ and let for $v \in V$ denote by $\Gamma(v)$ the neighborhood of $v$ in $\mathcal{G}$. Let $K_n$ denote the clique consisting of the nodes in $V$. Consider a destination-based failover protocol $\mathcal{P} = \{\Dist{v}{\mathcal{F}_v}{d} ~|~ v \in V, \mathcal{F}_v \subseteq \Gamma(v), d \in V\}$ operating in $\mathcal{G}$. We will  transform this protocol into a protocol $\mathcal{P}_K = \{D_K(v,\mathcal{F}_v, d) ~|~ v \in V, \mathcal{F}_v \subseteq V \setminus \{v\}, d \in V\}$ that can be used in $K_n$ as follows. For each $v \in V$, $\mathcal{F}_v \subseteq \Gamma(v)$, $d \in V$ we construct the distribution $D_K(v,\mathcal{F}_v, \Gamma(v))$, where for any $w \in V$ we have  
    \begin{align*}
        \Pr \left[ D_K(v, \mathcal{F}_v, d) = w \right] = 
        \begin{cases}
            \Pr \left[ D(v, \mathcal{F}_v, d) = w \right] &\text{ if $(v,w)$ is in $E$ of  $\mathcal{G}$}\\
            0 &\text{ otherwise. }
        \end{cases}
    \end{align*}
    Intuitively this prevents edges that are only present in $K_n$ but not $\mathcal{G}$ from being used.
    As in $K_n$ each node has a larger neighborhood of $V \setminus \{v\}$ we need to define further distributions $D_K$ to complete $\mathcal{P}_K$. That is, for $v,d \in V$ and $\mathcal{F}_v \subseteq V \setminus \{v\}$ but $\mathcal{F}_v \not \subseteq \Gamma(v)$ we define the remaining distributions $D_K(v, \mathcal{F}_v, d) := D_K(v,\mathcal{F}_v \cap \Gamma(v), d)$.
    
    \medskip
    
    Assume now that the protocol $\mathcal{P}_K$ is employed in $K_n$ and consider a fixed set of edges $\mathcal{F}^{(K)}$ that are failed in $K_n$. We will show that, when operating $\mathcal{P}$ in $\mathcal{G}$ and given the failures $\mathcal{F} = E \cap \mathcal{F}^{(K)}$, then the path any packet takes is the same in both $K_n$ and $\mathcal{G}$.
    To that end, consider a packet with destination $d$ arriving at a node $v$ in both networks.
    The relevant entry for forwarding the packet in the routing table is $\alpha_K(v) \sim D_K(v, \mathcal{F}_v^{(K)},d)$ in the $\mathcal{K}_n$, and  $\alpha(v) \sim D(v, \mathcal{F}_v,d)$ in $\mathcal{G}$. Per  definition of $\mathcal{P}_K$ we have for $w \in V$ with $(v,w) \in E$ that
    \begin{align*}
         \Pr [ D_K(v, \mathcal{F}_v^{(K)},d) = w] &= \Pr [D_K(v, \mathcal{F}_v^{(K)} \cap \Gamma(v),d) =w]  \\
         &=\Pr[D_K(v, \mathcal{F}_v,d) = w ] = \Pr [D(v, \mathcal{F}_v, d) = w].
    \end{align*}
    And for $w \in V$ with $(v,w) \neq E$, we have $\Pr [ D_K(v, \mathcal{F}_v^{(K)},d) = w] = 0$ if $(v,w) \not \in E$. Hence, in both processes $v$ will forward the packet to any fixed node $w$ with exactly the same probability. This observation holds for any node $v \in V$. In other words, the probability for any fixed instance of the routing tables entries $\alpha_K$ and $\alpha$ is exactly the same in both processes. If all entries of $\alpha_K$ and $\alpha$ match, then any flow with destination $d$ will take exactly the same path in both $K_n$ and $\mathcal{G}$. Therefore, exactly the same load is created any node $v$.
\end{proof}

\subsection{Bipartite Graph Analysis}
\label{sec:details-bipartite}

\bipartitenoloop*
\begin{proof}\label{proof:bipartite-loop}
        Throughout the proof, we assume that the entries Additionally, we assume that all the entries $\alpha(w)$ at any node $w$ are still unknown. Each time the packet lands on a node $w$, we uncover the entry $\alpha(w)$ after which it gets forwarded to $\alpha(w)$. 
		When following our protocol, the packet starting on $u$ alternates between nodes in $V$ and $W$ until it reaches a node $r \in V$ such that $(v,d)$ is not failed, in which case the packet is forwarded immediately to $d$. Assume now that the packet currently resides on some node $w \in W(i)$. The packets next hop is decided by $\alpha(w)$. As $\alpha(w)$ is chosen u.a.r. from $\{u ~|~ u \in V(i+1) \land (w,u) \text{ not failed} \}$ it follows that $\alpha(w)$ can \emph{not} forward the packet directly to $d$ with probability at most 
		\[
			\Pr \left[\text{edge } (\alpha(w), d) \text{ is failed}\right] \leq \frac{|\{ u ~|~ u \in V(i+1) \land (u,d) \text{ failed }\}|}{|\{u ~|~ u \in V(i+1) \land (w,u) \text{ not failed} \}|} \leq \frac{I / 3}{2I / 3}  = \frac{1}{2}.
		\]
		The second step follows from the fact that at most $I/3$ edges between nodes in $V(i+1)$ and any fixed node $u$ may be failed (including the nodes $w$ and $d$).
		The above probability implies that any fixed packet reaches $d$ after $\ell$ transitions between $V$ and $W$ with probability at most $(1/2)^{\ell}$. Hence, after at most $\ell = K = C \log n$ alterations it arrives at $d$ with probability at least $1-n^{-C}$. Above approach assumes that the packet does not land on a node $s$ with an already uncovered entry $\alpha(s)$ within the first $\ell=2K$ hops. This is indeed guaranteed by the fact that the packet traverses the $K$ partitions of $V$ and $W$ in order and the amount of partitions is large enough.
\end{proof}

\markovtheorem*

\begin{proof} \label{proof:markov}
	We start by introducing some notation. For some interval $[a,b]$ and sequence $\{Y_i\}_{i \geq 0}$, we say that member $Y_i$ \emph{increases or remains} (short: i.o.r.) into $[a,b]$ iff $Y_i \leq b$ and $Y_{i+1} \in [a,b]$. Similar, we say that $Y_i$ \emph{decreases} into $[a,b]$ iff $Y_i > b$ and $Y_{i+1} \in [a,b]$. We denote by $\mathcal{H}^+_{[a,b]}(\{Y_i\}, \ell)$ the number of members $X_i$, $0\leq i\leq\ell$ that i.o.r. into $[a,b]$. Similar we denote by $\mathcal{H}^-_{[a,b]}(\{Y_i\}, \ell)$ the number of members that decrease into $[a,b]$. Finally, we denote by $\mathcal{H}_{[a,b]}(\{Y_i\}, \ell) := |\{Y_i :~ 0\leq i \leq \ell \land Y_i \in [a,b]\}| $ the number of times the interval $[a,b]$ is \emph{hit} by the first $\ell$ members of $\{Y_i\}$. Observe that this number can be bounded as follows.
	\begin{equation}
	\label{obs:hit-count}
		\mathcal{H}_{[a,b]}(\{Y_i\}, \ell) \leq \mathbf{1}(Y_0 \in [a,b]) + \mathcal{H}^+_{[a,b]}(\{Y_i\}, \ell) + \mathcal{H}^-_{[a,b]}(\{Y_i\}, \ell).
	\end{equation}
	Here $\mathbf{1}(Y_0 + [a,b])$ is an indicator which takes value $1$ iff $Y_0 \in [a,b]$, and $0$ otherwise.
	
	\medskip
	
	We now continue with the proof of the lemma. To derive the sum $\sum_{i=0}^{r} X_i$ we consider the sequences  $\{X_{2i}\}_{i \geq 0}$ and $\{ X_{2i + 1}\}_{i \geq 0}$, which contain only the even and odd elements, respectively.  This way, we can bound the sum as follows 
	\[
		\sum_{i=0}^{r} X_i \leq \sum_{i=0}^{\lceil r / 2 \rceil} X_{2i} + \sum_{i=0}^{\lceil r / 2 \rceil} X_{2i + 1}.
	\]

Additionally, we define for every positive integer $j$ the interval $I_j := [C \cdot \gamma^{j}, C \cdot \gamma^{j+1}]$ where both $C$ and $\gamma$ are constants. They are defined as follows. We set $C:=15 / (\delta^2 \cdot \min\{\psi,\phi, \psi \phi\})$ where $\delta$ is a constant subject to $0 < \delta < \min \{1, (\phi \psi)^{-1/4} - 1\}$. This $\delta$ will later on be used as multiplicative error when applying Chernoff bounds. Furthermore, we define $\gamma := 1/\sqrt{\phi \psi}$ and note that $\gamma > 1$.  The notion of $I_j$ allows us to bound, for example, the sum of even elements by counting the number of times the intervals $I_j$, $j\geq 0$, are hit. That is, 
	\[
		\sum_{i=0}^{\lceil r / 2 \rceil} X_{2i} \leq  \left( \sum_{j=0}^{\infty} \mathcal{H}_{I_j}(\{X_{2i}\}, \lceil r / 2 \rceil)  \cdot C \gamma^{j+1} \right) + C \cdot \lceil r / 2 \rceil.
	\]
	The added term $C \cdot  \lceil r /2 \rceil$ accounts for hits in the interval $[0,C)$. Combined with (\ref{obs:hit-count}) we may futher expand this to 
	\begin{equation}
	\label{eq:markov-main}
	 	\sum_{i=0}^{\lceil r / 2 \rceil} X_{2i} \leq \sum_{j = 0}^{\infty}  \left( \mathbf{1}(X_0 \in I_j) + \mathcal{H}^+_{I_j}(\{X_{2i}\}, \lceil r / 2 \rceil) + \mathcal{H}^-_{I_j}(\{X_{2i}\}, \lceil r / 2 \rceil)  \right) \cdot C \gamma^{j+1} +  C \cdot \lceil r / 2 \rceil.
	\end{equation}
	As a similar bound can also be created for the sum of odd elements and the remaining analysis is symmetric, we focus only on the sum of even elements from now on. To evaluate above bound, we start with the following observation.
	\begin{observation}
	\label{obs:increasing-hits}
		For $j \leq \log_\gamma r$ it holds with probability $> 1 - \exp(-4r)$ that 
		\[
		    \mathcal{H}^+_{I_j}(\{X_{2i}\}, \lceil r/ 2 \rceil) = \BigO{r \cdot \gamma^{-j}}.
		\]
		For $j > \log_{\gamma} r$ it holds with probability $>1-\exp(-4\gamma^j)$ that $\mathcal{H}^+_{I_j}(\{X_{2i}\}, \lceil r / 2 \rceil) = 0$.
	\end{observation}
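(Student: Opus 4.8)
The plan is to bound $\mathcal{H}^+_{I_j}(\{X_{2i}\}, \lceil r/2\rceil)$ by establishing a uniform, \emph{exponentially} small per-step bound on the probability of an i.o.r.\ event into $I_j$, and then invoking stochastic domination by a binomial together with a Chernoff bound; the two ranges $j\le\log_\gamma r$ and $j>\log_\gamma r$ then both follow from that single per-step estimate, handled slightly differently.

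I would begin by recording the two-step drift: combining \cref{drift:item2,drift:item3} of \cref{thm:markovchain} and recalling $\gamma=1/\sqrt{\phi\psi}$ gives $\Ex{X_{2i+2}\mid X_0,\dots,X_{2i}}\le\phi\psi\,X_{2i}=\gamma^{-2}X_{2i}$, so that $\Ex{X_{2i}}=\BigO{r}$ for every $i$ (using $X_0=\BigO{r}$). Now the key per-step estimate. If the member $X_{2i}$ i.o.r.s into $I_j=[C\gamma^j,C\gamma^{j+1}]$ then, by definition, $X_{2i}\le C\gamma^{j+1}$ and $X_{2i+2}\ge C\gamma^j$; hence, conditioned on $X_0,\dots,X_{2i}$, the conditional mean of $X_{2i+2}$ is at most $\gamma^{-2}C\gamma^{j+1}=C\gamma^{j-1}$, and the event $X_{2i+2}\ge C\gamma^j$ is an overshoot of this bound by the multiplicative factor $\gamma>1$. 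Since $X_{2i+1}$ given $X_{2i}$, and $X_{2i+2}$ given $X_{2i+1}$, are sums of Poisson trials (\cref{drift:item1}), I would apply Chernoff bounds twice — first to confine $X_{2i+1}\le(1+\delta)\phi X_{2i}$, then to bound $\Pr[X_{2i+2}\ge C\gamma^j]$ from that event — using the slack $\delta<\sqrt{\gamma}-1$ fixed in the theorem, so that the residual deviation of $X_{2i+2}$ is still a factor $\ge\sqrt{\gamma}>1$. (When $\phi X_{2i}$ or $\psi X_{2i+1}$ is so small that the means degenerate, the overshoot to $C\gamma^j\ge1$ instead requires large deviations at \emph{both} steps, which is even less likely.) The precise value $C=15/(\delta^2\min\{\phi,\psi,\phi\psi\})$ is chosen exactly so that every Chernoff exponent arising here is $\Omega(\gamma^{j})$ with a comfortably large implied constant, uniformly in $\gamma\in(1,\infty)$; this yields, for all $j\ge0$,
\[
  \Pr\!\left[\, X_{2i}\ \text{i.o.r.\ into}\ I_j \ \middle|\ X_0,\dots,X_{2i} \,\right]\;\le\;q_j\;:=\;\exp\!\bigl(-\Omega(\gamma^{j})\bigr).
\]

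With this in hand I would conclude. Conditioning successively on $X_0,X_1,\dots$, the indicator of ``$X_{2i}$ i.o.r.s into $I_j$'' is dominated by an independent $\Bin(1,q_j)$, so $\mathcal{H}^+_{I_j}(\{X_{2i}\},\lceil r/2\rceil)$ is stochastically dominated by a binomial with $\BigO{r}$ trials and success probability $q_j$. For $j\le\log_\gamma r$, take the target $k:=\Theta(r\gamma^{-j})$ (at least an absolute constant); the tail bound $\Pr[\text{binomial}\ge k]\le\bigl(e\cdot\BigO{r}\,q_j/k\bigr)^{k}$ together with $\BigO{r}\,q_j/k\le\gamma^{j}e^{-\Omega(\gamma^j)}\le e^{-\Omega(\gamma^j)}$ and $k\,\gamma^{j}=\Theta(r)$ gives $\Pr[\text{binomial}\ge k]\le e^{-\Omega(r)}\le e^{-4r}$, which is the claimed $\mathcal{H}^+_{I_j}=\BigO{r\gamma^{-j}}$; for the few smallest $j$ (where $r\gamma^{-j}=\Theta(r)$) the statement is trivial since $\mathcal{H}^+_{I_j}\le\lceil r/2\rceil+1$. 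For $j>\log_\gamma r$, i.e.\ $\gamma^j>r$, an i.o.r.\ into $I_j$ would force some even member to exceed $C\gamma^j\gg\Ex{X_{2m}}=\BigO{r}$; a union bound over the $\BigO{r}$ even steps, each contributing probability at most $q_j=e^{-\Omega(\gamma^j)}$, shows no such member exists with probability at least $1-\BigO{r}\,e^{-\Omega(\gamma^j)}\ge1-e^{-4\gamma^j}$, i.e.\ $\mathcal{H}^+_{I_j}=0$.

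I expect the per-step estimate to be the main obstacle. The overshoot factor in the Chernoff application is only $\gamma$, which is close to $1$ when $\phi\psi$ is close to $1$, so the Chernoff exponent is small unless the relevant Poisson mean is large; that mean along the ``dangerous'' transitions is of order $C\gamma^{j-1}$, and it is precisely the choice $C=15/(\delta^2\min\{\phi,\psi,\phi\psi\})$ — which blows up like $(\gamma-1)^{-2}$ as $\gamma\to1$ — that compensates and makes the exponent $\Omega(\gamma^j)$ with a constant large enough to survive (i) the two Chernoff applications, (ii) the union bound over $\BigO{r}$ steps, and, back in \eqref{eq:markov-main}, (iii) the union bound over the $\BigO{\log r}$ relevant intervals. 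Tracking these constants carefully — together with the separate treatment of states whose Poisson means degenerate — is where the work lies; the rest is routine.
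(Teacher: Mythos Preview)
Your proposal is correct and follows essentially the same approach as the paper: establish a uniform per-step bound $\Pr[X_{2i}\text{ i.o.r.\ into }I_j]\le 2\exp(-5\gamma^{j+1})$ via two sequential Chernoff bounds (exploiting the choice of $C$ and $\delta$), then stochastically dominate $\mathcal{H}^+_{I_j}$ by a $\Bin(r,2\exp(-5\gamma^{j+1}))$ variable and apply a binomial tail bound for $j\le\log_\gamma r$ and a union bound for $j>\log_\gamma r$. Your worry about ``degenerate means'' is unnecessary, since the upper-tail Chernoff bound in \cref{thm:chernoff} is applied with $\mu_u=C\phi\gamma^{j+1}$ (resp.\ $(1+\delta)C\phi\psi\gamma^{j+1}$) regardless of how small the true mean is.
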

	\begin{proof}
		We fix the Markov chain at step $2i$, i.e., we fix $X_{2i} = x$. We start by bounding the probability that $X_{2i}$ i.o.r. into $I_j$. To that end, we assume that $x \leq C \gamma^{j+1}$, otherwise this probability is $0$ because only elements above $I_j$ may not i.o.r. into $I_j$. The idea is to bound $X_{3i}$ by applying two Chernoff bounds in sequence. By \cref{drift:item3} and definition of $I_j$, we know that $\Ex{X_{2i+1}} \leq X_{2i} \cdot \phi \leq C \gamma^{j+1} \cdot \phi$. By \cref{drift:item1} we may apply Chernoff bounds to bound $X_{2i +1}$. This yields for the value $\delta$ we initially defined that 
	\begin{equation}
	\label{eq:markov-chernoff1}
		\Pr \left[X_{2i+1} > (1+\delta) \cdot  C  \gamma^{j+1} \cdot \phi \right] \overset{\text{Chern.}}{\leq} \exp(-C\phi  \gamma^{j+1} \delta^2 / 3 ) \leq \exp(-5\gamma^{j+1}).
	\end{equation}
	The second step follows from the definition of $C$ and $\delta$.
Therefore, we have with probability at least $1 - \exp(-5\gamma^{t+1})$  that $X_{2i+1} \leq (1+\delta)C \gamma^{j+1}\phi$. In the following, we condition on the fact that  $X_{2i+1}$ indeed follows this upper bound. In such a case, we know by \cref{drift:item2} that $\Ex{X_{3i}} \leq X_{2i+1} \cdot \psi \leq (1+\delta) C  \gamma^{j+1} \phi \psi$. We again apply Chernoff bounds, which for the same multiplicative error $\delta$ yields 
	\begin{equation}
	\label{eq:markov-chernoff2}
		\Pr \left[X_{3i} > (1+\delta)^2 \cdot  C  \gamma^{j+1} \cdot \phi \psi \right] \overset{\text{Chern.}}{\leq} \exp(-C\phi \psi  \gamma^{j+1} \delta^2 (1+\delta) / 3) \leq \exp(-5 \gamma^{j+1}).
	\end{equation} 
	The second step again follows from the definition of $C$ and $\delta$.
	In case the bad events bounded in (\ref{eq:markov-chernoff1}) and (\ref{eq:markov-chernoff2}) do \emph{not} occur, we have that $X_{3i}\leq (1+\delta)^2 \cdot C \gamma^{j+1} \phi \psi$. We established that the probability for this is at least 
	\[
		1 - (\exp(-5\gamma^{t+1} ) + \exp(- 5\gamma^{j+1})) =1 -  2\exp(- 5\gamma^{j+1}).
	\]
	 Now, when first using that $\gamma = 1/ \sqrt{\phi \psi}$ in the first step and then $\delta < (\psi \phi)^{-1/4} - 1$ in the second, we get 
	\[
		 (1+\delta)^2 \cdot C \gamma^{j+1} \phi \psi = (1+\delta)^2 C  \gamma^{j} \sqrt{\psi \phi}  < C  \gamma^{j}
	\]
	Therefore, $X_{3i}$ does not lie in $I_j$ with probability at least $1 - 2\exp(-5\gamma^{j+1})$. In other words, we showed that a fixed element $X_{2i}$ does i.o.r. into $I_j$ with probability at most $2 \exp(-5\gamma^{j+1})$.
	
	\medskip
	
	In total, we consider the first $\lceil r / 2 \rceil < r$ many elements of $\{X_{2i}\}_{i \geq 0}$. Using the result of the previous paragraph, we majorize the number of them which i.o.r. in $I_j$ 
	by $ B \sim \Bin(r~,~ 2\exp(- 5\gamma^{j+1}))$. In case $j >  \log_{\gamma} r$, it is easy to see that $B = 0$ occurs with good probability. More precisely, \cref{obs:increasing-hits} follows as 
	\[
	    \Pr [B = 0] \geq  (1 - 2\exp(-5 \gamma^{j}))^{r} \geq 1 - 2r \exp(-5 \gamma^{j}) \geq 1 - \exp(-4 \gamma^j).
	\]
	In the last step, we used that $2r< \exp(r) < \exp(-\gamma^{j})$.
	For smaller values of $j$, we use the PDF of the binomial distribution to derive for $0 \leq k \leq r$
	\[
		\Pr \left[B = k \right] \leq \binom{r}{k} \cdot 2^k \exp(-5\gamma^{j+1} \cdot k) \leq \left( \frac{e r}{k}\right)^k \cdot 2^k \exp(-5\gamma^{j+1}k).
	\]
	For values of $k \geq  2e \cdot r \cdot \gamma^{-(j+1)} = \Omega (r \cdot \gamma^{-j})$ the right-hand side can be further simplified when applying this bound to the $k$ in the denominator:
	\begin{align*}
		\Pr \left[B = k \right] \leq  (\gamma^{j+1} / 2)^k \cdot 2^k \exp(-5\gamma^{j+1} k) = \left( \frac{\gamma^{j+1}}{\exp(5\gamma^{j+1})} \right)^k < \left( \frac{5\gamma^{j+1}}{\exp(5\gamma^{j+1})} \right)^k  \\
		\leq \exp(-5\gamma^{j+1} \cdot k / 2) 
		\leq \exp(-5e \cdot r^2) <  \exp(-5r)
	\end{align*}
    To derive the first term in the second line, we used that $x / \exp(x) < \exp(x/2)$ holds for $x \geq 0$, which we applied for $x = 5\gamma^{j+1}$. 
	As the bound  $\Pr[B =k] < \exp(-5r)$ holds for any value of $k  \geq   2e \cdot r \cdot \gamma^{-(j+1)} $ and $k$ may take at most $r$ different values, we get that
	\[
		\Pr \left[ B \geq 2e \cdot r \cdot \gamma^{-(j+1)} \right] \leq r \exp(-5r) \leq \exp(-4r)
	\]
	and the result of \cref{obs:increasing-hits} follows.
	\end{proof}
	Hence, we established that, \whp,  $ \mathcal{H}^+_{I_j}(\{X_{2i}\}, \lceil r / 2 \rceil)$ shrinks geometrically with increasing $j$ until it eventually reaches $0$ for large values of $j$. Remember, the other important type of hits w.r.t. $I_j$ are caused by decreasing members. Note that each decrease of some member $X_{2i}$ into $I_j$ needs to be preceded by one of the following events (i) some member $X_{2i'}$ with $i' < i$ must have increased into an interval $I_{j'}$ with $j' > j$, or (ii) $X_0$ was already above $C \gamma^{j+1}$ (enabling the first decreasing member). Therefore, 
	\begin{align}
	\label{eq:decreasing-hits}
		 \mathcal{H}^-_{I_j}(\{X_{2i}\}, \lceil r / 2 \rceil) \leq  \mathbf{1}(X_0 > C \gamma^{j+1}) +  \sum_{j' > j} \mathcal{H}^+_{I_{j'}}(\{X_{2i}\}, \lceil r / 2 \rceil).
	\end{align}
	In the following, we assume the event in which \emph{all} intervals $I_j$, $j \geq 0$, follow the bound given by \cref{obs:increasing-hits} holds. By a union bound application, the probability for this is at least
	\[
	     \geq  1 - \left(\log_{\gamma} r \cdot \exp(-4r) + \sum_{j > \log_{\gamma} r} \exp(-4\gamma^{j}) \right) > 1 - \exp(- 3r).
	\]
	To derive the result on the right-hand side we assume $r > C_{\phi \psi}$, where $C_{\phi \psi}$ is a sufficiently large constant that depends on $\gamma$.
	Conditioned on the aforementioned event,  we may use the result of \cref{obs:increasing-hits} bound the number of i.o.r with the help of a geometric series: $\sum_{j' > j} \mathcal{H}^+_{I_j'}(\{X_{2i}\}, \lceil r / 2 \rceil) \leq \sum_{j' > j} \BigO{r \cdot \gamma^{-j'}} = \BigO{r \cdot \gamma^{-j}}$. We combine this result with (\ref{eq:decreasing-hits}) which yields 
	\begin{align*}
		\mathcal{H}^-_{I_j}(\{X_{2i}\}, \lceil r / 2 \rceil)	\leq  \mathbf{1}(X_0 > C \gamma^{j+1})  +  \begin{cases}
			\BigO{r \cdot \gamma^{-j}}  & \text{ if $j \leq  \log_\gamma r$} \\
			0 & \text{ otherwise.}
		\end{cases}
	\end{align*}
	We are now ready to further simplify (\ref{eq:markov-main}). As we can bound the number of increasing and decreasing hits each by $\BigO{r \cdot \gamma^{-j}}$, we have for some fixed interval $j \leq \log_\gamma r$ that 
	\begin{align}
		\left( \mathbf{1}(X_0 \in I_j) + \mathcal{H}^+_{I_j}(\{X_{2i}\}, \lceil r / 2 \rceil) + \mathcal{H}^-_{I_j}(\{X_{2i}\}, \lceil r/ 2 \rceil)  \right) \cdot C \gamma^{j+1} \\
\leq  \mathbf{1}(X_0 \geq C \gamma^{j}) \cdot C \gamma^{j+1} + \BigO{r}. \label{eq:markov-11}
	\end{align}
	Note, in case of $j > \log_\gamma r$, the bound in (\ref{eq:markov-11}) becomes just $\mathbf{1}(X_0 \geq C \gamma^{j}) \cdot C \gamma^{j+1}$ as, according to \cref{obs:increasing-hits}, no i.o.r. hits occur  and $\mathcal{H}^-_{I_j}(\{X_{2i}\}, \lceil r / 2 \rceil) \leq \mathbf{1}(X_0 > C \gamma^{j+1})$. When plugging these inequalities into (\ref{eq:markov-main}) we get that, 
	\begin{align*}
		\sum_{i=0}^{\lceil r/ 2 \rceil} X_{2i} \leq C \cdot \lceil r /2 \rceil + \sum_{j=0}^{\infty} \mathbf{1}(X_0 \geq C \gamma^{j+1}) \cdot C\gamma^j+ \sum_{j = 0}^{\log_\gamma ( r) - 1}    \BigO{r} \\
		=  \BigO{ r } +  \sum_{j=0}^{\infty} \mathbf{1}(X_0 \geq C \gamma^{j}) \cdot C\gamma^{j+1} + \BigO{\log (r) \cdot r} + 
	\end{align*}
	Observe that $X_0 = \BigO{r}$ by assumption of \cref{lem:markov}. Therefore $\sum_{j=0}^{\infty} \mathbf{1}(X_0 \geq C \gamma^{j}) \cdot C\gamma^{j+1} \leq \BigO{\log(r) \cdot r}$.
	
	\medskip
	
	What remains is the proof for the sum of the odd members $\sum_{i=0}^{\lceil r / 2 \rceil} X_{2i + 1}$. A repetition of the current proof, while exchanging $\psi$ with $\phi$ and $\{X_{2i}\}$ with $\{X_{2i + 1}\}$, yields with probability $> 1- \exp(-3r)$ 
	\begin{align*}
		\sum_{i=0}^{\lceil r/ 2 \rceil} X_{2i + 1} \leq C \cdot \lceil r /2 \rceil + \sum_{j=0}^{\infty} \mathbf{1}(X_1 \geq C \gamma^{j+1}) \cdot C\gamma^j+ \sum_{j = 0}^{\log_\gamma (r) - 1}    \BigO{r} \\
		= \BigO{\log (r) \cdot  r } +  \sum_{j=0}^{\infty} \mathbf{1}(X_1 \geq C \gamma^{j}) \cdot C\gamma^{j+1}
	\end{align*}
	In this setting we cannot immediately derive $X_1 = \BigO{r}$ from the assumptions of \cref{lem:markov} to bound the infinity sum. However, as $X_0 = \BigO{r}$ it follows by  \cref{drift:item2} that  $\Ex{X_1} = \BigO{r}$. By \cref{drift:item1} we may apply Chernoff bounds, which easily yields that  $X_1 \leq C' \cdot r$ with probability $\geq 1- \exp(-4r)$ for some large enough constant $C' > 0$. Just as in the case of even members, this allows us to bound the infinite sum by  $\BigO{\log(r) \cdot r}$. Summarizing, we showed that, both, the sum of members with even and odd indices sum up to $\BigO{\log (r) \cdot r}$ with probability at least $1- \exp(-3r)$. The result follows by adding up the odd and even elements and applying a union bound.
\end{proof}

\bipartitemarkov*

		\begin{proof} \label{proof:bipartite-markov}
		Assume all entries $\alpha(v)$ to determine the sets $L_0, L_1, ..., L_{i-1}$ were already uncovered. Given $L_{i-1}$, we will now inductively determine $|L_{i}|$. To that end, we distinguish two cases that depend on  the parity of $i$,
		
		\case{1} $i \geq 1$ is odd. In this case, the nodes in $L_{i-1}$ belong to $V(K - \lceil i/2 \rceil)$ and nodes in $L_i$ will lie in $W(K - \lceil i/2 \rceil - 1)$. More precisely we have that $L_i = \{ w ~|~ w \in W(K - \lceil i /2 \rceil - 1) \land \alpha(w) \in L_{i-1}\}$. Remember, for a fixed $w \in  W(K - \lceil i /2 \rceil - 1)$, the entry $\alpha(w)$ is chosen u.a.r. from $V(K - \lceil i /2 \rceil) \setminus \mathcal{F}_w$ according to the definition of $\mathcal{P}_{B}$. Therefore, $\alpha(w)$ lies in $L_{i-1}$ with probability at most 
		\[
			\Pr \left[ \alpha(w) \in L_{i-1} \right] \leq \frac{|L_{i-1}|}{|V(K - \lceil j /2 \rceil) \setminus \mathcal{F}_w|} \leq \frac{3}{2}  \cdot \frac{|L_{i-1}|}{I}
		\]
		The second step follows when using the fact that $\mathcal{F}_w$ can contain at most $I/3$ nodes of $V(K - \lceil j /2 \rceil)$ due to our constraints on the set of failed edges in \cref{thm:bipartite}.
		Note that the entry $\alpha(w)$ is determined independently from other nodes $w \in W(K - \lceil i /2 \rceil - 1)$. Therefore, $|L_i|$ can be modeled by a sum of $|W(K - \lceil i /2 \rceil - 1)|$ Poisson trials and we have 
		\[
			\Ex{|L_i|} \leq \sum_{w \in W(K - \lceil i /2 \rceil - 1)} \frac{3}{2}  \cdot \frac{|L_{i-1}|}{I} = I \cdot \frac{3}{2}  \cdot \frac{|L_{i-1}|}{I} = \frac{3}{2} \cdot |L_{i-1}|.
		\]

\case{2} $i \geq 2$ is even. In this case, we have $L_{i-1} \subseteq W(K  -  i/2 -1)$ and $L_{i}$ will consist of nodes in $V(K - i/2 -1)$. More preceisely, we have that $L_i = \{v ~|~ v \in V(K - i/2 -1) \land v \not \in V_G \land \alpha(v) \in L_{i-1} \}$. This case differs from the previous as we only need to consider the entries $\alpha(v)$ for $v \not \in V_G$. This is because if $v \in V_G$, then $v$ can forward its flows directly to $d$. Nodes $v \in V(K - i /2 -1) \setminus V_G$ again chooses the entry $\alpha(v)$ u.a.r. from $W(K - i/2 -1) \setminus \mathcal{F}_{v}$. For such a node -- similar to the previous case -- it holds that 
\[
	\Pr \left[ \alpha(v) \in L_{i-1} \right] \leq  \frac{|L_{i-1}|}{|W(K - i/2 -1) \setminus \mathcal{F}_{v}|} \leq \frac{3}{2}  \cdot \frac{|L_{i-1}|}{I}
\]
 As in the previous case, we can also model $|L_i|$ by a sum of Poisson trials. This time, the amount of such trials is $|\{v ~|~ v \in V(K - i/2 -1) \land v \not \in V_G\}|$. Note, due the constraints on the failures in \cref{thm:bipartite} at most $I/3$ nodes in $V(K - i/2 -1)$ can have their link to the destination $d$ failed. Hence, the cardinality of the aforementioned set can be bounded by $I/3$ and 
\[
	\Ex{|L_i|} \leq \frac{I}{3} \cdot \frac{3}{2} \frac{|L_{i-1}|}{I} = \frac{1}{2} |L_{i-1}|.
\]
	\end{proof}

\bipartitelowerbound*

\begin{proof}\label{proof:bipartite-lower-bound}
    	We consider a fixed failover protocol $\mathcal{P}$ together with a destination node $d$. Assume that the failover entries $\alpha$ of all nodes have already been decided by drawing them from the distributions of $\mathcal{P}$. Consider some node $v_1 \in V$. In case the edge $(v_1,d)$ is failed, it will forward a packet with destination $d$ according to $\alpha(v_1, \{d\},d)$, which is some node $w_1 \in W$. If now $\mathcal{F}_{w_1} = \emptyset$, that is no edge incident to $w_1$ is failed, then $w_1$ will again forward the packet to some node $v_2 \in V$ according to $\alpha(w_1,\emptyset,d)$. We repeat this approach and track the packet for $m := (1/3) \cdot \log n / \log \log n$ alternations between $V$ and $W$. We observe that the packet starting at $v_1$ travels the path $(v_1,w_1, v_2, w_2 , .... v_m, w_m)$ while still not arriving at the destination, in case the edges $(v_1,d), (v_2,d), ...., (v_m,d)$ each are failed and no edges incident to nodes in $w\in W$ are failed.

        As the nodes $v_1, ... , v_m$ are determined through the routing table $\alpha$, we cannot directly predict these nodes and fail the corresponding edges $(v_i,d)$. However, we can fail a  subset of edges $\mathcal{F} \subseteq \{ (v,d) ~|~ v \in V \}$  of size $|\mathcal{F}| = \varepsilon \cdot n / \log n$ that we select uniformly at random. Ideally we want this set of failures to contain all edges in the set $\{(v_1,d), ... , (v_m,d)\}$. The probability for this can be modelled with a hypergeometric distribution and is 
    \[
    	\geq \frac{ \frac{\varepsilon n}{\log n}}{ n} \cdot \frac{\frac{\varepsilon n}{\log n} - 1}{n}  \cdot  ...  \cdot \frac{\frac{\varepsilon n}{\log n}-m}{n} \geq \left( \frac{\frac{\varepsilon n}{2\log n}}{n} \right)^m = \left( \frac{\varepsilon/2}{\log n} \right)^m = \left( \frac{1}{n}\right)^{(1/3) + o(1)}.
    \]
    Hence, we established that, by failing $\varepsilon \cdot n / \log n$ edges of the form $(v,d)$ uniformly at random, we can create a path of length $2m$ with probability at least $n^{-1/3 - o(1))}$. In such a case, the flow initiated at $v_1$ travels for $>2m$ hops and hits $>m$ nodes at $V$, each of which also initiate one flow. This implies that $\mathcal{L}(v_m) > m$. \emph{In expectation}, we have that at least $|V| \cdot n^{-1/3 - o (1)} > \sqrt{n}$ nodes are each part of such a path of length at least $2m$. Even if all these paths terminate in the same node $w \in V$, this implies that in expectation there is at least one node $w \in V$ with $\mathcal{L}(w) = \Omega(\log n / \log \log n)$.
\end{proof}

 \subsection{Clos Topology Routing Example}
 \label{sec:clos-example}
 
 In order to illustrate the behavior of our routing protocol given in \cref{def:interval-routing}, we present an exemplary Clos topology in \cref{fig:clos-example1}. 
 As described in \cref{sec:clos-definition}, the nodes are partitioned into blocks, which then are again split into clusters and finally intervals. Note that, in our example, only level $\ell = 0$ has more then one cluster. Additionally, each node on level $\ell = 2$ lies in its own block. These blocks were omitted to improve visual clarity. 
 The goal is to route the packet, currently residing at node $s$, towards destination node $d$ with sequence $S_d = (3,6)$. As described in \cref{sec:clos-definition}, the nodes are partitioned

First (see a) and b) in \cref{fig:clos-example1}), the packet is forwarded via \ruleone to random nodes in an interval of a block that lies one level lower. This is done until the packet resides on a node in $\bl{\emptyseq}$. This is the first block the packet reaches such that the block's sequence (in this case $\emptyseq$) is a prefix of $S_d$. In the next step, the packet needs to be forwarded via \ruletwo into $T(3)$, the fat subtree containing $d$, to get closer to the destination $d$. In our example, we assume that after two hops the random choices caused the packet to land on a node $v$ in cluster $\cl{\emptyseq}{2}$ of $\bl{\emptyseq}$. According to the definition of the Clos topology, each node inside this cluster only has a single link into $T(3)$, all of which are incident to $d_{1,2}$.
In case no links are failed, the packet would simply be forwarded first to $d_{1,2}$ and then finally towards $d$. However, in our example we assume that the link $(v,d_{1,2}$) is failed and continue our example  with \cref{fig:clos-example2}.

On the left-hand side in \cref{fig:clos-example2} we take a closer look at $\cl{\emptyseq}{2}$ and its links into level $\ell=1$. According to the Clos topology definition, the nodes in $\cl{\emptyseq}{2}$ and the second node in each block $\bl{i}$ form a complete bipartite graph. These "second nodes" (denoted by $\vcl{\emptyseq}{2}$) are partitioned into vertical intervals, separated by the orange lines in the image. The idea is now to forward the packet, currently residing on $v$, towards $d_{1,2}$ just as in the protocol for the bipartite graph in \cref{sec:bipartite}. That is, until the packet reaches a node $v'$ such that the link to $d_{1,2}$ is not failed, it "ping-pongs" between intervals of $\cl{\emptyseq}{2}$ and vertical intervals of $\vcl{\emptyseq}{2}$. Our routing protocol implements this as the forwarding rule applied to the packet alternates between \ruletwo and \ruleone.

After being forwarded to $d_{1,2}$, we continue our example with the right-hand side of \cref{fig:clos-example2}. The packet now resides in $T(3)$ and is close to the destination. If the link $(d_{1,2},d)$ is not failed, then the packet is forwarded directly to $d$ via \ruletwo. However, we assume that this is not the case. Just as in subgraph considered in the left-hand side of \cref{fig:clos-example2} one can again observe that $T(3)$ is a complete bipartite graph. This allows us to again apply ideas from the bipartite routing protocol. The packet "ping-pongs" via \ruleone and \ruletwo. until it hits the first node in $B(3)$ that can reach $d$ directly.

 \begin{figure}
    \centering
    \includegraphics[scale=0.85,page=2]{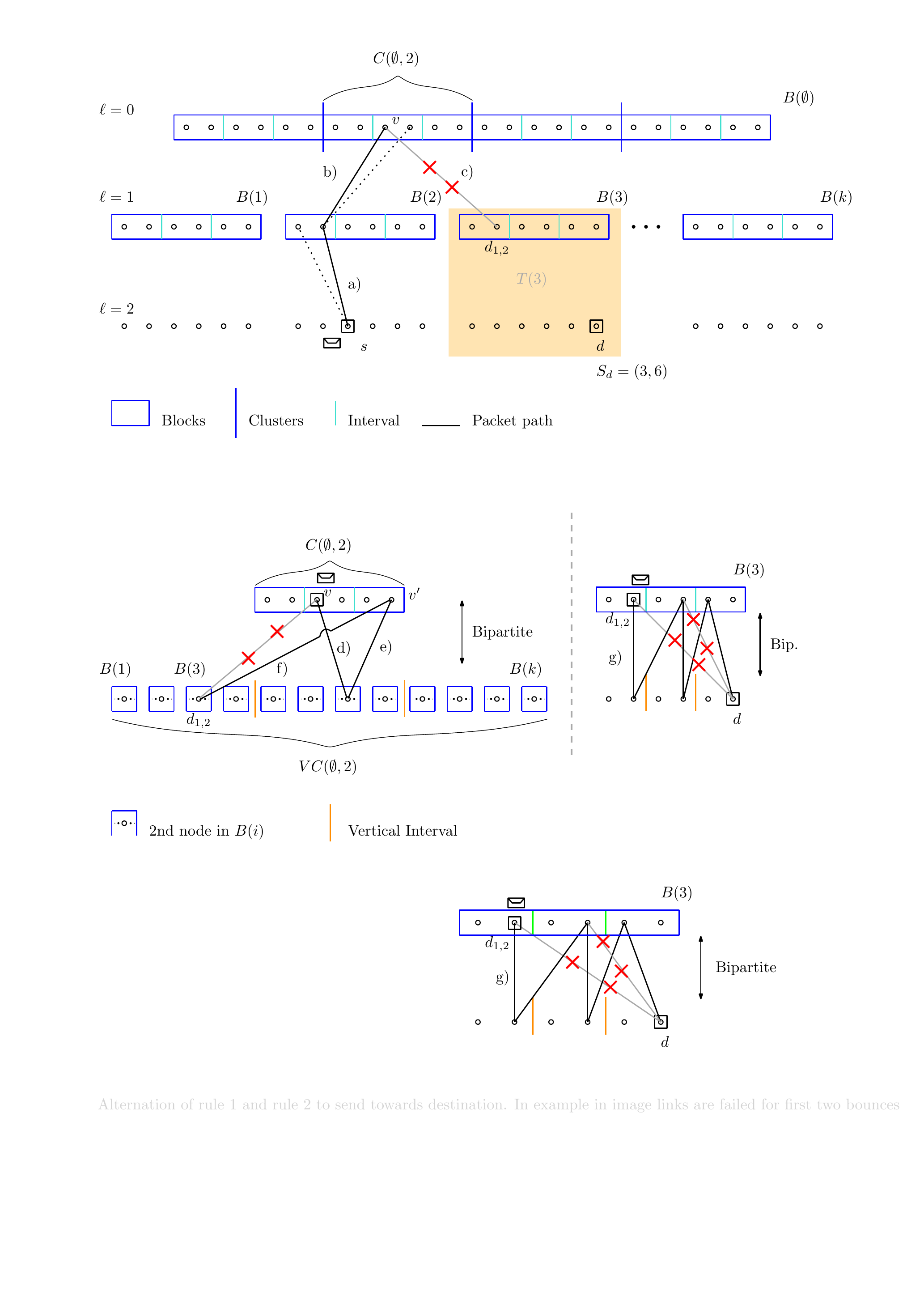}
    \caption{Clos topology with $k=12$, $L=2$ and interval size $I=2$. For easier visibility, edges are not visible. A packet residing at some bottom node of layer $s$ is forwarded towards destination $d$ with sequence $S_d = (3,6)$ until it encounters a failed edge.}
    \label{fig:clos-example1}
\end{figure}

\begin{figure}
    \centering
    \includegraphics[scale=0.88,page=5]{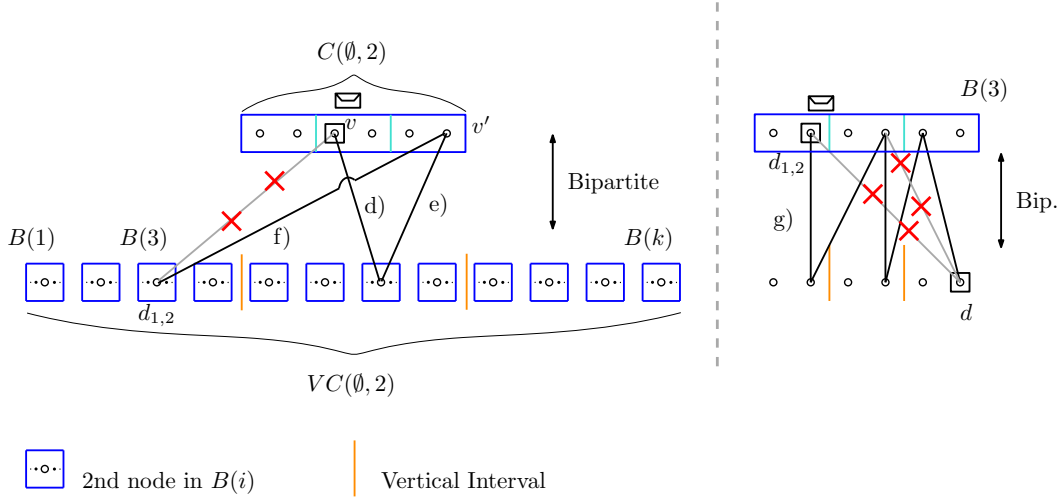}
    \caption{On the left, a closer look at $\cl{\emptyseq}{2}$. The nodes in $\cl{\emptyseq}{2}$ form a complete bipartite graph together with the second node in each block $\bl{i}$. On the right, we have $T(3)$. After landing on $d_{1,2}$, the packet cannot be forwarded directly to $d$ as we assume the link $(d_{1,2},d)$ is failed.}
    \label{fig:clos-example2}
\end{figure}

\subsection{Clos Topology Analysis}
\label{sec:details-fat-tree}

\atreeinduction*

\begin{proof} \label{proof:a-tree-induction}
    We will show the statements by induction. For the base case of $\ell = 0$ the statements follow directly as all entries are still covered and each node starts with $1$ flow.
    Next to the inductive step. We assume that the statement of \cref{lem:a-tree-induction} holds for some fixed $\ell \geq 0$. That is, we assume all edges outside $\tree{\prefix{S_d}{\ell}}$ to be uncovered and, most importantly, that on each node in $\bl{\prefix{S_d}{\ell}}$  at most $\BigO{k^{\ell}}$ flows are stopped. We will now establish the statements of \cref{lem:a-tree-induction} w.r.t. $\ell + 1$. We need to uncover entries of nodes that lie in $\tree{\prefix{S_d}{\ell}}$ but not in $\tree{\prefix{S_d}{\ell+1}}$, and show that each node in $\bl{\prefix{S_d}{\ell + 1}}$ receives $\BigO{k^{\ell + 1}}$ load. 
    We do this, by uncovering edges in two steps. To that end, we define $S_\ell := \prefix{S_d}{\ell}$ to ease readability and also define $\mathcal{B}_\ell$ as the set of  children of block $\bl{S_\ell}$ \emph{excluding} $\bl{\prefix{S_d}{\ell+1}}$. Similar we let $\mathcal{T}_\ell$ denote the corresponding trees, each rooted in one of the blocks in  $\mathcal{B}_\ell$. 
    
    In the first step, we uncover entries of nodes in $\mathcal{T}_\ell$, track their unleashed flows until they hit some node in layer $\ell +1$ and show in \cref{obs:a-tree-1} that the nodes of each vertical cluster in $\mathcal{B}_\ell$ receive $\BigO{k}$ load. 
    
    In the second step, we uncover the remaining entries, which belong to nodes in $\bl{S_\ell}$ and $\mathcal{B}_\ell$. The nodes in $\bl{S_\ell}$ each belong to one of the clusters $\cl{S_\ell}{i}$, $i \geq 1$, and the nodes in $\mathcal{B}_\ell$ to some vertical cluster $\vcl{S_\ell}{i'}$, $i' \geq 1$. For any fixed $i\geq 1$, the nodes in $\cl{S_\ell}{i}$ and $\vcl{S_\ell}{i}$ form a complete bipartite graph. Nodes in this graph will forward their flows over $d_{s+1,i}$, which is the $i$-th node in $\bl{S_\ell + 1}$, as illustrated in \cref{fig:ping-pong}. By the induction hypothesis, we know that each node in $\bl{S_\ell}$ hosts $\BigO{k^{\ell}}$ flows \whp, implying that $\cl{S_\ell}{i}$ contains $\BigO{k^{\ell + 1}}$ flows in total. Additionally, at the end of the first step, we know by \cref{obs:a-tree-1} that each cluster $\vcl{S_\ell}{i}$ contains $\BigO{k}$ flows in total \whp. Therefore, the node $d_{s+1,i}$ will receive $\BigO{k} + k \cdot \BigO{k^{\ell}} =  \BigO{k^{\ell + 1}}$ many flows after the entries in $\bl{S_\ell}$ and $\mathcal{B}_\ell$ are uncovered. As $\bl{S_{\ell+1}} = \bl{\prefix{S_d}{\ell+1}} = \{d_{\ell+1,i} ~|~ i \geq 1\}$ this implies the statement in \cref{item:a-tree-3} of \cref{lem:a-tree-induction}. 
    
    Note that, when uncovering edges in this second step, some flows will "ping-pong" between $\cl{S}{i}$ and $\vcl{S}{i}$ until they reach $d_{\ell+1,i}$. This could cause a forwarding cycle. We present \cref{obs:a-tree-hops} which states that packets "ping-pong" $<2K$ times until hitting $d_{\ell+1,i}$ \whp, which prevents them from travelling in a cycle as this would require them to traverse at least $2K$ many intervals. Therefore, also the statement in \cref{item:a-tree-1} is fulfilled. The statement in \cref{item:a-tree-2} also follows from a combination of \cref{obs:a-tree-1} and \cref{obs:a-tree-hops} as all flows from nodes in $\mathcal{T}_\ell$ first reach some node in $\mathcal{B}_\ell$ and later on are forwarded to some $d_{\ell+1,i}$ without being trapped in a forwarding loop.
    
    Summarizing, we established the statements in \cref{item:a-tree-1,item:a-tree-2,item:a-tree-3} w.r.t. $\ell + 1$. Throughout our arguments, we assumed that (i) \cref{item:a-tree-1,item:a-tree-2,item:a-tree-3} hold with respect to $\ell$ (induction hypothesis), (ii) the event of \cref{obs:a-tree-1} holds, and (iii) the event of \cref{obs:a-tree-hops} holds for any cluster $i \geq 1$. By a union bound application the probability for this is at least $1 - (4\ell \cdot k^{-4} + k^{-\omega(1)} + k^{L-1} \cdot 2k^{-(3 + L)}) > 1 - 4(\ell + 1) k^{-4})$. The factor $k^{L-1}$ originates from the fact that no block contains more clusters than $\bl{\emptyseq}$, which contains exactly $K^{L-1}$ many clusters.
    In the remainder of this proof, we list \cref{obs:a-tree-1,obs:a-tree-hops} together with their corresponding proofs. \qedhere
    \begin{observation}
    \label{obs:a-tree-1}
         Assume we uncover the entries of all nodes in $\mathcal{T}_\ell$, excluding those of nodes in blocks of $\mathcal{B}_\ell$. Then, the flow of any node in $\mathcal{T}_\ell$ reaches some node in $\mathcal{B}_\ell$.  Additionally, for any $i \geq 1$, the nodes in $\vcl{S_\ell}{i} \setminus \{d_{\ell+1,i}\}$ will receive in total $\BigO{k}$ flows with probability at least $1-k^{-\omega(1)}$.
    \end{observation}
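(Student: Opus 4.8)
The plan is to prove the two assertions separately; the first is a soft structural fact and the second is where the work lies. For the first, observe that every node of $\mathcal{T}_\ell$ sits in a block $\bl{S}$ whose sequence extends $S_\ell\circ t$ for some $\bl{S_\ell\circ t}\in\mathcal{B}_\ell$, and since $S_\ell\circ t\neq\prefix{S_d}{\ell+1}$ the sequence $S$ is not a prefix of $S_d$, so the node always forwards via \ruleone, i.e.\ into its parent block. Thus every hop taken inside $\mathcal{T}_\ell$ strictly decreases the level while staying inside the same subtree $\tree{S_\ell\circ t}$; after at most $L-\ell-1=\Theta(1)$ hops the flow reaches a node of $\bl{S_\ell\circ t}\subseteq\mathcal{B}_\ell$, whose entry has not been uncovered, and is stopped there. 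In particular no flow of $\mathcal{T}_\ell$ cycles, so all loads below are finite.

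For the second assertion I would fix $i\ge 1$, write $u_t$ for the $i$-th node of $\bl{S_\ell\circ t}$ (so $\vcl{S_\ell}{i}\setminus\{d_{\ell+1,i}\}=\{u_t:\bl{S_\ell\circ t}\in\mathcal{B}_\ell\}$, at most $k$ nodes), and split $\mathcal{L}(u_t)$ into the own flow of $u_t$ (these contribute $\BigO{k}$ in total) plus flows that arrive at $u_t$ via a last hop. By the first part such a last hop comes from a node $g$ on level $\ell+2$; reading off \ruleone, $g$ must be the $c$-th node of a child of $\bl{S_\ell\circ t}$, where $c=\lceil i/(k/2)\rceil$ is the cluster index of $u_t$ in $\bl{S_\ell\circ t}$, and its vertical interval must be the unique one that \ruleone maps onto $u_t$'s interval --- there are at most $I$ such ``gateway'' nodes per $t$, and for $i$ fixed all gateways (over all $t$) lie in pairwise distinct level-$(\ell+2)$ blocks. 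Since $g$ sends all of its $X_g$ flows to a single target chosen uniformly from an interval with at least $2I/3$ non-failed nodes (using the failure constraint of \cref{thm:a-tree-main}), it sends them onto $u_t$ with probability $p_g\le 3/(2I)$; and because distinct gateways lie in disjoint subtrees, the indicators $B_g:=\mathbf{1}[\alpha(g)=u_{t(g)}]$ together with the loads $X_g$ form a mutually independent family. Hence the extra load on $\vcl{S_\ell}{i}\setminus\{d_{\ell+1,i}\}$ is $\sum_g B_g X_g$, a sum over at most $Ik$ independent terms, with $\Ex{\sum_g B_gX_g}\le Ik\cdot\tfrac{3}{2I}\cdot\max_g\Ex{X_g}=\BigO{k}\cdot\max_g\Ex{X_g}$.

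It then remains to control $\max_g\Ex{X_g}$ and, crucially, the upper tail of $X_g$. I would obtain both from a recursion down the $\le L=\Theta(1)$ levels of a subtree in $\mathcal{T}_\ell$: the same gateway analysis applied to a node $v$ on level $\ell'$ gives $X_v=1+\sum_{m\le I}B_mY_m$ with $B_m$ independent $\mathrm{Bernoulli}(\le 3/(2I))$ and $Y_m$ loads of level-$(\ell'+1)$ nodes, independent of the $B_m$ and of each other. Taking expectations yields $\Ex{X_v}\le 1+\tfrac32\max_w\Ex{X_w}$, which unrolls from the leaf value $1$ to $\Ex{X_v}=\BigO{1}$; taking moment generating functions yields $\phi_{\ell'}(\theta)\le e^{\theta}\exp\!\bigl(\tfrac32(\phi_{\ell'+1}(\theta)-1)\bigr)$ with $\phi_L(\theta)=e^{\theta}$, and since there are only $\Theta(1)$ levels one can fix a constant $\theta_0>0$ (depending on $L$ only) with $\Ex{e^{\theta_0X_v}}\le C_0=\BigO{1}$ uniformly, i.e.\ every $X_g$ is sub-exponential with constant parameters. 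Plugging this into the moment generating function of $\sum_gB_gX_g$ gives $\Ex{e^{\theta_0\sum_gB_gX_g}}\le\prod_g\bigl(1+p_g(C_0-1)\bigr)\le\exp\!\bigl(Ik\cdot\tfrac{3(C_0-1)}{2I}\bigr)=e^{\BigO{k}}$, so Markov's inequality gives $\Pr[\sum_gB_gX_g>\lambda k]\le e^{-\theta_0\lambda k+\BigO{k}}=e^{-\Omega(k)}$ once $\lambda$ is a large enough constant. Combined with the at-most-$k$ own flows, this yields $\BigO{k}$ load on $\vcl{S_\ell}{i}\setminus\{d_{\ell+1,i}\}$ with probability $1-e^{-\Omega(k)}\ge 1-k^{-\omega(1)}$, and a union bound over the $\poly(k)$ values of $i$ keeps the failure probability at $k^{-\omega(1)}$.

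The step I expect to be the real obstacle is exactly this tail control of $X_g$: a single level-$(\ell+2)$ node could, in the worst case, accumulate $\Theta(k^{L-\ell-2})$ flows, so a naive Chernoff bound on $\sum_gB_gX_g$ --- whose summands would then be bounded only by a polynomial in $k$ --- fails to reach probability $k^{-\omega(1)}$. Turning the $\Theta(1)$-depth, mildly supercritical (branching factor $\tfrac32$) recursion into a uniform sub-exponential moment generating function bound is what rescues the concentration, and this is precisely where $L=\Theta(1)$ is genuinely used. A secondary nuisance is the bookkeeping: pinning down the set of at most $I$ gateways feeding each $u_t$, and checking that distinct gateways live in disjoint subtrees, which is what makes the $(B_g,X_g)$ independent.
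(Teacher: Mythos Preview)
Your argument is correct, but it takes a genuinely different route from the paper's. The paper runs a \emph{backward} analysis: it fixes $R=\vcl{S_\ell}{i}\setminus\{d_{\ell+1,i}\}$, sets $L_0=R$, and inductively defines $L_a$ as the set of level-$(\ell+1+a)$ nodes in $\mathcal{T}_\ell$ whose routing entry lands in $L_{a-1}$; the total load on $R$ is then simply $\sum_a |L_a|$. Since each node contributes one flow, this counts \emph{nodes} rather than loads. One gets $\Ex{|L_{a+1}|}\le (3/2)|L_a|$ as a sum of Poisson trials, and because $|L_0|=\Theta(k)$ is already large, a single Chernoff bound per level gives $|L_a|\le 3^a|L_0|$ with failure probability $\exp(-\Omega(k))$ directly; summing over the $\BigO{1}$ levels yields the claim.

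Your forward analysis instead tracks individual gateway loads $X_g$, which have $\BigO{1}$ expectation, and this is exactly why you need the MGF recursion to recover $e^{-\Omega(k)}$ concentration for $\sum_g B_gX_g$. The paper sidesteps that obstacle entirely by aggregating at the target set from the start, so the quantities under Chernoff are always $\Theta(k)$ and no sub-exponential machinery is needed. Your approach is more work but does buy something: the MGF recursion gives a uniform tail bound on the load of \emph{every} individual node in $\mathcal{T}_\ell$, not just the aggregate over a $\Theta(k)$-sized target set, and it would continue to function if one wanted tight per-node control. The bookkeeping you flagged (distinct gateways lying in disjoint subtrees, hence full independence of the $(B_g,X_g)$ family) is correct and is indeed what makes the product form of the MGF go through.
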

    \begin{proof} \label{proof:a-tree-one}
        We start by showing that the nodes in a fixed vertical cluster $\vcl{S_\ell}{i} \setminus \{d_{\ell,i}\}$ will $\emph{not}$ receive more than $\BigO{k}$ load \whp.
         Let $R$ be the set of nodes in  $\vcl{S_\ell}{i} \setminus \{d_{\ell+1,i} \}$.
        Consider an arbitrary node $v \in \mathcal{T}_\ell$ together with its block $\bl{S}$. Per definition of $\mathcal{T}_\ell$ it follows that $\bl{S}$ is a successor of $\bl{S_\ell}$ that does not share a prefix with $\mathcal{S}_d$. Therefore, such a node $v$ will follow \ruleone of our routing protocol (see \cref{def:interval-routing}).
        After uncovering the entry of $v$, its flows will be directed to some node in $\mathcal{T}_\ell$ that lies one level lower, on $|S| - 1$. Note that this cannot be prevented by the adversary, as it may not fail more than $I/3$ edges incident to $v$ and $v$ has $I$ links into level $|S| - 1$.
        
        Throughout this proof we denote by $L_a$ the set of nodes that (i) lie on level $\ell + 1+a$ (ii) belong to  $\mathcal{T}_\ell$ and (iii) forward their flows to some node in $L_{a-1}$ after their entries have been uncovered. To complete this inductive definition, we define $L_0$ to be $R$.
        As all nodes in $\mathcal{T}_\ell$ forward according to \ruleone, given $L_i$, we may determine $L_{i+1}$ by uncovering edges of nodes in level  $\ell+(i+1)$ only.
        Hence, after uncovering all entries of nodes in  $\mathcal{T}_\ell$ level-by-level, we may determine sizes of the sets $|L_i|$. We can then sum up these sets to determine $\sum_{i=0}^{L - \ell+1} |L_i|$, which is the load that all nodes in $R$ receive.
        We now show how, given $L_i$, the size of $|L_{i+1}|$ can be calculated. Consider a node $v \in L_{i}$ that lies in some cluster interval $\cli{S}{i}{j}$ with $|S| = \ell + 1+i$. Such a node will only receive flows from neighbors $w \in \vcli{S}{i}{(j-1) \mod K}$ in layer $\ell + 2 + i$ (above we established that in $\mathcal{T}_\ell$ only \ruleone is used for forwarding). Hence, when uncovering the entries of the nodes on level $\ell + 2+i$, such a node $w$ forwards its load to $v$ with probability at most 
        \[
            \Pr \left[\alpha(w) \in L_i \right] \leq \frac{|\{v \in L_i \land v \in \cli{S}{i}{j} \}|}{I - |\mathcal{F}_v|} \leq \frac{3}{2} \cdot \frac{|\{v \in L_i \land v \in \cli{S}{i}{j} \}|}{I}.
        \]
        
        In the second step of this calculation, we used the fact that the adversary is only allowed to fail $I/3$ edges that connect nodes from $\cli{S}{i}{j}$ with $v$. For nodes  $v \in \cli{S}{i}{j}$ there are only $I$ neighboring nodes to receive flows from, namely those in $\vcli{S}{i}{j}$. Also, note that each node sets its entry independently from other nodes. Therefore, when summing over all blocks and intervals, we get that $L_{i+1}$ can be represented by a sum of Poisson trials with expected value 
        \begin{align*}
            \Ex{|L_{i+1}|} \leq \sum_{\substack{S,i,j \\ \text{ with } \cli{S}{i}{j} \cap L_i \neq \emptyset}} \sum_{w \in \vcli{S}{i}{j}}  \frac{3}{2} \cdot \frac{|\{v \in L_i \land v \in \cli{S}{i}{j} \}|}{I} \\
            = \sum_{w \in \vcli{S}{i}{j}}  \sum_{\substack{S,i,j  \\ \text{ with } \cli{S}{i}{j} \cap L_i \neq \emptyset}} \frac{3}{2} \cdot \frac{|\{v \in L_i \land v \in \cli{S}{i}{j} \}|}{I} \\
            = \sum_{w \in \vcli{S}{i}{j}} \frac{3}{2} \cdot \frac{|L_i|}{I} = \frac{3}{2} \cdot |L_i|.
        \end{align*}
        When applying Chernoff bounds, this  yields that $|L_{i+1}| \leq 3 |L_i|$ with probability $1-\exp(-\Omega(|L_i|))$. The above approach can easily be translated into an introduction which yields that, with probability $\geq 1-i \cdot \exp(-\Omega(|L_0|))$ we have $|L_{i}| \leq 3^i |L_0|$. Initially, we established that the nodes in $L_0 = R$ receive $\sum_{i=0}^{L - \ell+1} |L_i|$ load. As the topology has at $L+1 = \BigO{1}$ levels, it follows that this number lies in $\BigO{k}$ with probability at least $ 1-L \cdot \exp(-\Omega(|L_0|) \geq  1 - k^{-\omega(1)}$.
        
        Hence, we showed that the nodes in $R = \vcl{S_\ell}{i} \setminus \{d_{\ell+1}{i}\}$ receive $\BigO{k}$ flows in total.  As there only exist $k^{\BigO{1}}$ vertical clusters w.r.t. $S_\ell$, a union bound application over all such vertical clusters yields the desired result.
    \end{proof}

    \begin{observation}
    \label{obs:a-tree-hops}
        Fix cluster $i$ of $\bl{S_\ell}$. Then, after uncovering the entries of all nodes in $\cl{S_\ell}{i}$ and $\vcl{S_\ell}{i} \setminus \{d_{\ell +1,i}\}$, the flows stopped at any node in $v \in \cl{S_\ell}{i} \cup \vcl{S_\ell}{i} \setminus \{d_{\ell +1,i}\}$ reach $d_{\ell+1,i}$ within at most $2 K$ further hops with probability $\geq 1 - 2k^{-(3+L)}$.
    \end{observation}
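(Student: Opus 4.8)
The plan is to recognize this as the Clos incarnation of the no-loop statement already proved for the bipartite protocol, \cref{lem:bip-noloops}, and to reduce to it. Write $S_\ell := \prefix{S_d}{\ell}$. First I would observe that $\cl{S_\ell}{i}\cup\vcl{S_\ell}{i}$, together with the edges running between its two sides, is a complete bipartite graph on which the restriction of $\mathcal{P}_C$ behaves exactly like the bipartite interval protocol $\mathcal{P}_B$ of \cref{sec:bipartite} with destination $d_{\ell+1,i}$, interval size $I$, and $K=(4+L)\log k$ intervals (i.e.\ with $C=4+L>4$, legitimate since $L>1$). Indeed, since $S_\ell$ is a prefix of $S_d$, every node $v\in\cli{S_\ell}{i}{j}$ follows \ruletwo: it forwards to $d_{\ell+1,i}$ if that edge survives, and otherwise to a uniform node of $\vcli{S_\ell}{i}{j}\subseteq\vcl{S_\ell}{i}$; and every $w\in\vcl{S_\ell}{i}\setminus\{d_{\ell+1,i}\}$ lies in a child block $\bl{S_\ell\circ i'}$ with $S_\ell\circ i'\neq\prefix{S_d}{\ell+1}$, so its sequence is not a prefix of $S_d$ and it follows \ruleone, forwarding to a uniform node of $\cli{S_\ell}{i}{(j+1)\bmod K}\subseteq\cl{S_\ell}{i}$. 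Hence a released flow never leaves $\cl{S_\ell}{i}\cup\vcl{S_\ell}{i}$ before it reaches $d_{\ell+1,i}$, and along the way it ping-pongs between the two sides while its interval index advances cyclically by one every two hops (cf.\ \cref{fig:ping-pong}) --- precisely the dynamics of $\mathcal{P}_B$ with $d_{\ell+1,i}$ playing the role of the destination in $W$.

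Next I would check that the failure hypotheses transfer. Hypothesis~(1) of \cref{thm:a-tree-main} applied to the triple $(S_\ell,i,j)$ says that every $w\in\vcl{S_\ell}{i}$ --- in particular $w=d_{\ell+1,i}$ --- is a failed neighbour of at most $I/3$ nodes of $\cli{S_\ell}{i}{j}$, which is condition~(1) of \cref{thm:bipartite}; and since $\mathcal{F}$ is undirected, hypothesis~(2) is condition~(2) of \cref{thm:bipartite} and also gives $|\mathcal{F}_w\cap\cli{S_\ell}{i}{j}|\le I/3$, so each uniform draw from an interval is over at least $2I/3$ candidates. With this in hand, \cref{lem:bip-noloops} applies to the sub-instance and states that a flow released at any single node of $\cl{S_\ell}{i}\cup\vcl{S_\ell}{i}$ reaches $d_{\ell+1,i}$ in fewer than $2K$ hops with probability at least $1-k^{-(4+L)}$ (the mild discrepancies with \cref{lem:bip-noloops} --- $\log k$ versus $\log n$, side size $\Theta(k)$ versus $n$ --- only affect constants, as the argument uses only $K\cdot I=|\cl{S_\ell}{i}|$ and $I=\BigO{k/\log k}$); a union bound over the $\BigO{k}$ nodes, one flow per node being enough since destination-based routing forces all flows at a node onto a common path, then yields the claimed $1-2k^{-(3+L)}$. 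If one prefers not to invoke \cref{lem:bip-noloops} as a black box, the underlying argument is short: each time the flow lands on a node of $\cl{S_\ell}{i}$ arriving from $\vcl{S_\ell}{i}$ it lands on a uniform node of some interval, of which at most $I/3$ of $\ge 2I/3$ have a failed edge to $d_{\ell+1,i}$, so it is forced to continue with probability $\le 1/2$; these events are independent across the at most $K$ such landings because within $2K$ hops the flow visits the $K$ intervals in cyclic order and so revisits neither a node nor an already-revealed routing entry; and if all $2K$ hops are exhausted the flow has been unlucky at every one of the $K$ intervals.

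The routine part is everything above; the one place that needs care is the bookkeeping of the reduction in the first step --- matching interval indices, the $\bmod K$ cycling, and the distinguished node $d_{\ell+1,i}$ between the rules \ruleone/\ruletwo of $\mathcal{P}_C$ and the two forwarding cases of $\mathcal{P}_B$ --- together with handling the level-$0$ case, where vertical-cluster intervals hold $2I$ rather than $I$ nodes: this only enlarges the denominator in the $\le 1/2$ estimate and is therefore harmless. I would also flag one subtlety: the observation is phrased as though the relevant entries are already revealed, but for the probability estimate one should instead reveal them on demand as the flow advances, which is legitimate precisely because the cyclic advance of the interval index guarantees that no entry is reused within $2K$ hops.
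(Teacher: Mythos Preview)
Your proposal is correct and takes essentially the same approach as the paper: the paper also observes that $\cl{S_\ell}{i}\cup\vcl{S_\ell}{i}$ is a complete bipartite graph, notes the analogy with \cref{lem:bip-noloops}, and then carries out exactly the direct ping-pong argument you give as your alternative (probability $\le 1/2$ of continuing at each of the $K$ landings, yielding $k^{-(4+L)}$ per node and $2k^{-(3+L)}$ after a union bound over $\le 2k$ nodes). Your write-up is somewhat more careful than the paper's --- you make the identification of $\mathcal{P}_C$ with $\mathcal{P}_B$ explicit, handle the level-$0$ interval-size discrepancy, and flag the reveal-on-demand subtlety --- but the argument is the same.
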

    \begin{proof} \label{proof:a-tree-two}
        The clusters $\cl{S_\ell}{i}$ and $\vcl{S_\ell}{i}$ form a complete bipartite graph. Therefore, the proof is similar as the proof of \cref{obs:bip-noloops}. Fix now a packet that originates on some node $v$ in some interval $\cli{S_\ell}{i}{j}$ or vertical interval $\vcli{S_\ell}{i}{j}$ for some $j\geq 1$. Until this packet reaches $d_{\ell+
        1,j}$ it "ping-pongs" between intervals on level $\ell$ and $\ell+1$ (see \cref{fig:ping-pong}). Each time it hits a node $w \in  \vcli{S_\ell}{i}{j'}$ for some $j'$, it has the chance on the next hop to hit a node $u \in \cli{S_\ell}{i}{j' + 1 \mod K}$ that has its edge to $d_{\ell+1,i}$ not failed. As the entry $\alpha(w)$ is chosen uniformly at random out of $\cli{S_\ell}{i}{(j' + 1) \mod K} \setminus \mathcal{F}_w$, it follows that the probability for this is at least
        \[
            1 - \frac{I/3}{\cli{S_\ell}{i}{(j' + 1) \mod K} \setminus \mathcal{F}_w|} \geq 1 - \frac{I/3}{2I/3} = \frac{1}{2}.
        \]
        This bound follows from the fact that restrictions on the failures placed by the adversary as stated in \cref{thm:a-tree-main}.
        Conversely, the packet on $v$ will ping-pong more than $x$ times with probability only $1/2^x$. For $x= K = (4+L) \log n$, this probability is less than $k^{-(4+L)}$. Besides $v$ there are $\leq 2k$ nodes in $\cl{S_\ell}{i} \cup \vcl{S_\ell}{i} \setminus \{d_{\ell+1,i}\}$. A union bound application yields that the flows starting from any such node will reach $d_{\ell+1,i}$ in $2K$ hops with probability $\geq 1 - k^{-(4 + L)} \cdot 2k = 1 - 2k^{-(3+ L)}$.
    \end{proof}
\end{proof}

    \atreelowerbound*
    
    \begin{proof} \label{proof:a-tree-lower-bound}
    Throughout the analysis, we consider a fixed destination $d$ and will focus on  $\tree{\prefix{S_d}{L-1}}$. According to \cref{obs:last-tree} this subgraph is a complete bipartite graph consisting of the two clusters $\cl{\prefix{S_d}{L-1}}{1} = \bl{\prefix{S_d}{L-1}}$ and $\vcl{\prefix{S_d}{L-1}}{1}$ of $(k/2)$ nodes each. Note that $d \in \vcl{\prefix{S_d}{L-1}}{1}$. In the following we will call these sets of nodes $V$ and $W$. Assuming we do not fail any edges, it follows that from the fact that $\mathcal{P}$ is fairly balanced (see \cref{def:fairly-balanced}) that $\mathcal{L}(v) = \Theta(k^{L-1})$ at any node $v$ of $V = \bl{\prefix{S_d}{L-1}}$.
    As $\mathcal{P}$ is a local failover protocol, this is also true if we fail edges inside of $\tree{\prefix{S_d}{L-1}}$. Because $\mathcal{P}$ must forward over shortest-paths, it follows that any flows arriving at a node in $\tree{\prefix{S_d}{L-1}}$ will never leave this subgraph. To see this, consider some node $v \in V$ and assume that the link $(v,d)$ is failed. Let this node be the $i$-th node in  $\bl{\prefix{S_d}{L-1}}$. This node $v$ has only edges that reach into $\cl{\prefix{S_d}{L-2}}{i}$ and $W$. Only in case $v$ forwards flows to some node in $\cl{\prefix{S_d}{L-2}}{i}$, they would leave $\tree{\prefix{S_d}{L-1}}$. However, even in such a case the flow must travel back to $\tree{\prefix{S_d}{L-1}}$ to reach the destination. As $v$ is the only node in $\tree{\prefix{S_d}{L-1}}$ that can be reached from nodes in $\cl{\prefix{S_d}{L-2}}{i}$ -- and forwarding the packet back to $v$ would cause a forwarding loops -- this requires multiple hops and does not optimize for shortest paths.
    This description can be visualized with the help of $\cref{fig:fixed-level}$. Imagine $\prefix{S_d}{L-1} = S \circ (k/2)$,  $\prefix{S_d}{L-2} = S$ and $v$ is the node in $\bl{S \circ (k/2)}$ with the purple square. Then, the only nodes $v$ can reach outside of $\tree{\prefix{S_d}{L-1}} = \tree{S \circ (k/2)}$ lie in $\cl{S}{2}$. However, these nodes only have a single connection into $\tree{\prefix{S_d}{L-1}}$. Therefore, as $\mathcal{P}$ attempts to route over shortest paths, $v$ must always forward over nodes in $w \in W$ in case the edge $(v,d)$ is failed. This implies that flows arriving at some node in $\tree{\prefix{S_d}{L-1}}$ will never leave this bipartite graph. 
    
    In summary, when only failing edges inside $\tree{\prefix{S_d}{L-1}}$, our problem can be reduced to finding a set of failed edges in a complete bipartite graph, where each node in the set $V$ starts with $\Theta(k^{L-1})$ many flows. For routing purposes, the protocol $\mathcal{P}'$ is employed, which contains the routing distributions of $\mathcal{P}$ in $\tree{\prefix{S_d}{L-1}}$. By \cref{lem:bipartite-lower-bound} of the bipartite graph analysis, we have that -- in expectation -- some node $v^*\in V$ receives load from $\BigO{\log k / \log \log k}$ other nodes in $V$. As $\mathcal{P}$ is fairly balanced, this implies that this node $v^*$ receives $\Omega
(k^{L-1} \cdot  \log k / \log \log k)$ load.
 \end{proof}

\atreefairlybalanced*

\begin{proof} 
    We start by showing that $\mathcal{P}_C$ is fairly balanced. To that end, we consider a fixed destination $d$ and assume that $\mathcal{F} = \emptyset$, i.e., no edges are failed. The following statement shows that every node which is not part of $\bl{\prefix{S_d}{\ell}}$ for any $1 \leq \ell \leq L$ receives $\BigO{\polylog k}$ load.
    
    \begin{observation}
        Assume that $\mathcal{F} = \emptyset$ and consider an arbitrary block $\bl{S}$ such that, either (i) $S$ is \emph{not} a prefix of $S_d$, or (ii) $S = \emptyseq$. Then, for a fixed node $v \in \bl{S}$, it holds that $\mathcal{L}(v) \leq \polylog k$ with probability at least $\geq 1 - k^{-10}$.
    \end{observation}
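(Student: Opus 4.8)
The plan is to replay the counting argument behind \cref{obs:a-tree-1}, but for a single target node and with an empty failure set, which only makes things easier. The first thing I would pin down is the geometry of flows when $\mathcal{F}=\emptyset$. Under \ruleone a flow always moves one level up (towards level $0$), and a flow only starts using \ruletwo once it reaches a block whose sequence is a prefix of $S_d$. Since $S$ is not a prefix of $S_d$, transitivity of the prefix relation shows that no sequence $S'$ having $S$ as a prefix --- that is, no block of $\tree{S}$ --- is a prefix of $S_d$; hence every flow ever located in $\tree{S}$ is forwarded by \ruleone only. Such a flow therefore travels strictly upward, leaves $\tree{S}$ through the parent of $\bl{S}$ exactly once, never returns (the downward, \ruletwo-driven part of any route stays on the prefix chain of $S_d$, which is disjoint from $\tree{S}$ when $S$ is not a prefix of $S_d$), and no flow ever enters $\bl{S}$ from above. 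Consequently the flows crossing a fixed $v\in\bl{S}$ are exactly $v$'s own flow together with those originating in $\tree{S}$ whose upward path passes through $v$. For the boundary case $S=\emptyseq$, the same picture holds with the source subtrees being $\tree{i}$ ranging over the children $\bl{i}$ of $\bl{\emptyseq}$ with $\bl{i}\neq\bl{\prefix{S_d}{1}}$: flows from $\tree{\prefix{S_d}{1}}$ get absorbed by a prefix block before reaching $\bl{\emptyseq}$, while flows arriving at $\bl{\emptyseq}$ from the other children are forwarded by \ruletwo towards $\bl{\prefix{S_d}{1}}$ in the very next hop and do not pile up in $\bl{\emptyseq}$.

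Given this, I would set up the staggered (backward) exposure of routing entries exactly as in the proof of \cref{lem:a-tree-induction}: put $L_0:=\{v\}$ and, for $a\ge 1$, let $L_a$ be the set of nodes on level $|S|+a$ inside $\tree{S}$ (respectively inside the relevant sibling subtrees, when $S=\emptyseq$) whose routing entry, once uncovered, lands in $L_{a-1}$. By the previous paragraph $\mathcal{L}(v)=\sum_{a\ge 0}|L_a|$, and this sum has at most $L+1=\BigO{1}$ nonzero terms because $\tree{S}$ has depth at most $L$. To control $|L_a|$ given $L_{a-1}$: every node $w$ that can forward into the interval containing a node of $L_{a-1}$ lies in one fixed interval (of size $I$, or $2I$ on level $0$), and by \ruleone with $\mathcal{F}_w=\emptyset$ it picks its target uniformly in an interval of size at least $I$; hence, across the independent entries on level $|S|+a$, the quantity $|L_a|$ is a sum of Poisson trials with $\Ex{|L_a|\mid L_{a-1}}\le 2\,|L_{a-1}|$. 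This is the identical computation to the one in \cref{obs:a-tree-1}, just with no failed edges.

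The last step is an induction over the $\BigO{1}$ levels via Poisson/Chernoff tail bounds. From $|L_0|=1$ a sum of Poisson trials of constant mean is $\BigO{\log k/\log\log k}$ with probability $1-k^{-\omega(1)}$, so $|L_1|=\BigO{\log k/\log\log k}$ with probability $1-k^{-\omega(1)}$; conditioning on that, $|L_2|$ is a sum of Poisson trials of mean $\BigO{\log k/\log\log k}$ and is again $\BigO{\log k/\log\log k}$ with probability $1-k^{-\omega(1)}$, and so on for the remaining constantly many levels; a union bound over them yields $\mathcal{L}(v)=\sum_{a}|L_a|\le\polylog k$ with probability at least $1-k^{-10}$. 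The one genuinely delicate point is the bookkeeping of the first paragraph --- establishing that with no failures every flow inside $\tree{S}$ uses \ruleone exclusively, so there is no ping-pong, no re-entry into $\tree{S}$, and no inflow from above --- since that is precisely what makes the layers $L_a$ well defined and their sum finite; once it is in place, the remaining probabilistic estimate is routine and parallels \cref{obs:a-tree-1}.
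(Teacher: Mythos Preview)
Your approach is essentially the paper's: both argue that with $\mathcal{F}=\emptyset$ the node $v$ can only receive flows via \ruleone from within $\tree{S}$ (respectively from the sibling subtrees when $S=\emptyseq$), and then trace those flows backwards level by level using Chernoff over the $\BigO{1}$ remaining levels. The paper phrases the backward trace as a branching argument (each node receives from $\BigO{\log k}$ children on the next level, depth $L-|S|=\BigO{1}$, hence $\BigO{\log^L k}$ total), while you aggregate into level sets $L_a$ exactly as in the proof of \cref{obs:a-tree-1}; these are two presentations of the same idea, and your geometric justification that no flow enters $\bl{S}$ from above and no ping-pong occurs is correct.

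One small slip worth fixing: the claim that $|L_2|=\BigO{\log k/\log\log k}$ with probability $1-k^{-\omega(1)}$ is too strong. A sum of Poisson trials with mean $\mu=\Theta(\log k/\log\log k)$ concentrates to a constant multiple of $\mu$ only with probability $1-\exp(-\Theta(\mu))=1-k^{-o(1)}$; to push the failure probability down to $k^{-11}$ you must allow the bound to relax to $\BigO{\log k}$ (and already for $|L_1|$ with constant mean you get $1-k^{-\Theta(1)}$, not $1-k^{-\omega(1)}$). This is harmless for the conclusion---with only $\BigO{1}$ levels the constants blow up by a bounded factor per step and the total stays $\polylog k$---but the intermediate estimates as written do not hold.
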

    \begin{proof}
         We fix such a node $v$ in a block $\bl{S}$, that fulfills either (i) or (ii). Nodes might receive flows from his neighbors due to either \ruleone or \ruletwo of \cref{def:interval-routing}. We argue that $v$ only receives flows according to \ruleone. In case assumption (ii) holds, which means $S=\emptyseq$, this is easy to see as $\bl{\emptyseq}$ has no parent. Otherwise, in case (i) holds, consider $\bl{S^P}$ -- the parent block of $\bl{S}$. Even if $S^P$ is a prefix of $S_d$, the no neighbor of $v$ in $\bl{S^P}$ will forward its flows to $v$ according to \ruletwo. This is because no edges are failed, which allows the nodes in $\bl{S^P}$ to forward their load directly to some $d_{|S|,i}$, $i\geq 0$. These nodes all lie in a block $\bl{S'}$, where $S'$ is a prefix of $S_d$. Therefore, our node $v$ will only receive flows from neighbors due to \ruleone. Let $v$ be in cluster interval $\cli{S}{i}{j}$ for some $i,j\geq 1$. It will only (potentially) receive flows from neighbors $v_1$ that are part of $\vcli{S}{i}{j-1 \mod K}$. As no edges are failed and failover edges are chosen uniformly, it follows that $v_1$ will forward its flows to $v$ with probability $1 / I = \BigO{\log k / k}$. As failover entries are chosen independently from other nodes, and $|\vcli{S}{i}{j-1 \mod K}| \leq k$, it follows from a Chernoff bound application that at most $\BigO{\log k}$ nodes in $\vcli{S}{i}{j}$ will forward their load to $v$ \whp. Note that these $\BigO{\log k}$ nodes lie in a block $\bl{S \circ i}$, for some $1 \leq i \leq k/2$. As $S$ was is not a prefix of $S_d$, it follows that also $S \circ i$ is not a prefix of $S_d$. Hence, we can repeat this argument, which implies that each of these $\BigO{\log k}$ nodes again only receives load from $\BigO{\log k}$ of its neighbors, which lie on level $|S| + 2$. This way, we can upper-bound the load $v$ receives with the help of a tree of degree $\BigO{\log k}$. This tree has depth at most $L - |S| = \BigO{1}$ as at this point our argument arrives at the highest level, which receive flows from no other node. Therefore, $v$ receives $\BigO{\log^L k} = \polylog k$ flows in total. By increasing the constant hidden in the $O$-notation we can easily achieve a probability of $1-k^{-10}$ for this event.
    \end{proof}
    What remains to show is that, \whp, nodes in $\bl{\prefix{S_d}{\ell}}$ for $1 \leq \ell \leq L$ receive $\Theta(k^{\ell})$ load. To that end, we first consider the load of nodes in $\cl{\emptyseq}{i}$. The statement of \cref{obs:a-tree-1}, which is used in the inductive step from $\ell = 0 \rightarrow \ell = 1$ of \cref{lem:a-tree-induction} and implies that the nodes in any cluster $\vcl{\emptyseq}{i} \setminus \{d_{1,i}\}$ receives a load of $\BigO{k}$ can easily be adapted to also include a matching lower bound of $\Omega(k)$. The reason for this is that the proof relies on a sequence of Chernoff bound applications, which can be applied to bound both the upper and lower tail. As the load of nodes in $\vcl{\emptyseq}{i} \setminus \{d_{1,i}\}$ is forwarded to $\cl{\emptyseq}{i}$ according to routing \ruleone, this implies that, for any $i\geq1$, $\sum_{v \in \cl{\emptyseq}{i}} \mathcal{L}(v) = \Theta(k)$. We now present the following result.
    \begin{observation}
    \label{obs:fairly-balanced}
        Let $1 \leq \ell < L$. Then, for any node $d_{\ell,i}$ in $\bl{\prefix{S_d}{\ell}}$ it holds \whp, that 
        \[
            \mathcal{L}(d_{\ell,i}) = \BigO{k} + \sum_{v \in \cl{\prefix{S_d}{\ell-1}}{i}} \mathcal{L}(v).
        \]
    \end{observation}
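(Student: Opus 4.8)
The plan is to fix $\mathcal{F}=\emptyset$ and trace exactly which flows pass through the single node $d_{\ell,i}$, splitting them into a part arriving from level $\ell-1$ (which will account for the cluster sum on the right) and a part arriving from level $\ell+1$ (which will contribute only $\BigO{k}$). Throughout I write $S_{\ell-1}:=\prefix{S_d}{\ell-1}$ and $S_\ell:=\prefix{S_d}{\ell}$.

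First I would identify which in-neighbours of $d_{\ell,i}$ actually forward to it. Since $d_{\ell,i}$ is the $i$-th node of the block $\bl{S_\ell}$, which is a child of $\bl{S_{\ell-1}}$, its neighbours towards level $\ell-1$ are precisely the nodes of the cluster $\cl{S_{\ell-1}}{i}$; because $S_{\ell-1}$ is a length-$(\ell-1)$ prefix of $S_d$ and no edge fails, every such node applies \ruletwo and forwards its entire accumulated flow directly onto $d_{\ell,i}$. This accounts for exactly $\sum_{v\in\cl{S_{\ell-1}}{i}}\mathcal{L}(v)$. Towards level $\ell+1$, $d_{\ell,i}$ lies in some cluster $\cl{S_\ell}{i^*}$ of $\bl{S_\ell}$ and is adjacent to the vertical cluster $\vcl{S_\ell}{i^*}$. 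Among those neighbours, the node $d_{\ell+1,i^*}$ belongs to the on-path block $\bl{\prefix{S_d}{\ell+1}}$, hence applies \ruletwo and never forwards up to $d_{\ell,i}$; every other node of $\vcl{S_\ell}{i^*}$ lies in an off-path child of $\bl{S_\ell}$, applies \ruleone, and may forward a flow into cluster $i^*$ of $\bl{S_\ell}$, in particular onto $d_{\ell,i}$. The remaining neighbours of $d_{\ell,i}$---the other nodes of $\cl{S_\ell}{i^*}$---apply \ruletwo towards $d_{\ell+1,i^*}$ and route nothing to $d_{\ell,i}$. So the only flow at $d_{\ell,i}$ besides the level-$(\ell-1)$ contribution comes up from the off-path subtrees rooted at the children of $\bl{S_\ell}$ other than $\bl{\prefix{S_d}{\ell+1}}$.

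Next I would bound that ``from below'' part by $\BigO{k}$. These are exactly the flows that, travelling up the off-path subtrees (all of whose nodes use \ruleone, since $\mathcal{F}=\emptyset$), land on $\vcl{S_\ell}{i^*}\setminus\{d_{\ell+1,i^*}\}$ and are then forwarded one further hop up into cluster $i^*$ of $\bl{S_\ell}$. By the argument proving \cref{obs:a-tree-1}, which uses only the failure bound $|\mathcal{F}_v|\le I/3$ and therefore applies verbatim when $\mathcal{F}=\emptyset$, the set $\vcl{S_\ell}{i^*}\setminus\{d_{\ell+1,i^*}\}$ receives $\BigO{k}$ flows in total \whp. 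Since $d_{\ell,i}$ can absorb at most all of them, the ``from below'' contribution is $\BigO{k}$ \whp. Finally, because no edge fails, every flow travels monotonically upward under \ruleone until it first reaches an on-path block and then monotonically downward under \ruletwo towards $d$; hence no flow revisits $d_{\ell,i}$, so $\mathcal{L}(d_{\ell,i})<\infty$ and $\mathcal{L}(d_{\ell,i})$ equals the sum of the two contributions, namely $\sum_{v\in\cl{S_{\ell-1}}{i}}\mathcal{L}(v)+\BigO{k}$ \whp. A union bound over the $k^{\BigO{1}}$ pairs $(\ell,i)$ then gives this for all of them simultaneously, which is what the surrounding induction on $\ell$ (base case $\sum_{v\in\cl{\emptyseq}{i}}\mathcal{L}(v)=\Theta(k)$, established above) needs in order to conclude $\mathcal{L}(d_{\ell,i})=\Theta(k^\ell)$.

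I expect the main obstacle to be the topology bookkeeping of the second paragraph: one must argue carefully that the in-neighbours of $d_{\ell,i}$ that actually route to it are exactly the parent cluster $\cl{S_{\ell-1}}{i}$ together with (a subset of) the off-path vertical cluster $\vcl{S_\ell}{i^*}$---in particular that nothing is ever routed to $d_{\ell,i}$ from the on-path child block $\bl{\prefix{S_d}{\ell+1}}$, and that with no failures there is no ``ping-pong'' to bring extra flow back---and that the flows arriving from below coincide with those already counted in \cref{obs:a-tree-1}, so that its $\BigO{k}$ bound may be invoked at level $\ell$ rather than re-derived.
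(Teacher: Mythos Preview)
Your proposal is correct and follows essentially the same approach as the paper: both split the incoming flow at $d_{\ell,i}$ into the \ruletwo contribution from the parent cluster $\cl{\prefix{S_d}{\ell-1}}{i}$ (giving the sum) and the \ruleone contribution from $\vcl{S_\ell}{i^*}\setminus\{d_{\ell+1,i^*}\}$ (bounded by $\BigO{k}$ via \cref{obs:a-tree-1}). Your write-up is more explicit about the topology bookkeeping and the absence of ping-pong when $\mathcal{F}=\emptyset$, but the argument is the same.
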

    \begin{proof}
    Fix some  $1 \leq \ell < L$  and $d_{\ell,i}$, the $i$-th node in $\bl{\prefix{S_d}{\ell}}$. Assume that this node lies in the cluster $\cl{\prefix{S_d}{\ell}}{i'}$ for some $i' \geq 0$. Consider now the nodes that might forward their flows to $v$ in a single hop. The reason to be forwarded to $v$ might either be \ruleone or \ruletwo of \cref{def:interval-routing}. In case of \ruleone, the flow must originate from some node in $\vcl{\prefix{S_d}{\ell}}{i'} \setminus \{d_{\ell+1,i'}\}$. Note that we can exclude $d_{\ell+1,i'}$ here as it is part of $\bl{\prefix{S_d}{\ell+1}}$ and does not forward flows according to \ruleone. In case of \ruletwo this will be \emph{all} nodes of $\cl{\prefix{S_d}{\ell - 1}}{i}$ -- as we assume no edges to be failed all of them may forward directly to $d_{\ell,i}$. According to \cref{obs:a-tree-1}, we can bound the number of flows forwarded via nodes corresponding to \ruleone by $\BigO{k}$ \whp, which yields that
    \[
        \mathcal{L}(d_{\ell,i}) = \BigO{k} +  \sum_{v \in \cl{\prefix{S_d}{\ell}}{i-1}} \mathcal{L}(v) \qedhere
    \]
    \end{proof}
    We are now ready to show that, \whp, any node in $\bl{\prefix{S_d}{\ell}}$ receives $\Theta(k^{\ell})$ load, $1 \leq \ell \leq L$. We start with $\ell = 1$. From \cref{obs:fairly-balanced} together with the fact that $\sum_{v \in \cl{\emptyseq}{i}} \mathcal{L}(v) = \Theta(k)$, we immediately get that any node in $\bl{\prefix{S_d}{1}}$ has load $\BigO{k} + \Theta(k) = \Theta(k)$ as desired. This can be translated into an induction, where any node in $\bl{\prefix{S_d}{\ell}}$
    receives 
    \[
        \BigO{k} + \sum_{v \in \cl{\prefix{S_d}{\ell}}{i}} \mathcal{L} (v) = \BigO{k} +  (k/2) \cdot \Theta(k^{\ell}) = \Theta(k^{\ell + 1})
    \]
    load. In the second step we used the induction hypothesis, which states that any node in $\bl{\prefix{S_d}{\ell}}$ has $\Theta(k^{\ell}$ load. Note that this especially includes nodes in $\cl{\prefix{S_d}{\ell}}{i}$ for any $i\geq 1$ which are a subset of $\bl{\prefix{S_d}{\ell}}$. Therefore, the protocol $\mathcal{P}_C$ is fairly balanced.
    
    \medskip
    
    Next, we show that $\mathcal{P}_C$ only forwards over shortest paths. Consider a packet with destination $d$, which lies on some node in  $\bl{S}$ of level $L$, and assume for now that no edges are failed. In \cref{sec:clos-definition} we established that our topology is a fat-tree, and when compressing the nodes in each blocks into a single node, then the resulting graph forms a tree. This makes it easy to see, which routing strategy results in a shortest path to $d$. That is, first the packet need to be send until the first block $\bl{S'}$ is reached such that $S'$ is a prefix of $S_d$. At this point, the destination $d$ is in the subtree rooted at $\bl{S'}$. Assume that $S' = \prefix{S_d}{\ell}$, i.e., that $S'$ is prefix of length $\ell$ of $d$. To reach $d$, the packet must be forwarded to the child $\bl{S_d}{\ell + 1}$ to further approach the destination. All other child blocks of $\bl{S'}$ lie in a different subtree of $d$. Summarizing, to achieve a shortest path in the absence of failures, the packet needs to be routed from child to parent for blocks $\bl{S}$ such that $S$ is not a prefix of $S_d$. Otherwise, it needs to be forwarded to the child block, whose sequence is length $|S|+ 1$ prefix of $S_d$.
    
    It is easy to see that the protocol in \cref{def:interval-routing} achieves exactly this routing behavior in absence of failures. Now, consider a fixed node $u \in \bl{S}$ that receives a packet with destination $d$, and assume the edges to some of its neighbors are failed. Let $\mathcal{F}_u$ with $|\mathcal{F}_u| < I / 2$ denotes this set. In case of $S$ \emph{not} being a prefix of $S_d$, the protocol $\mathcal{P}_C$ follows \ruleone, and as only $I/2$ edges are failed, the adversary cannot prevent the packet from being forwarded to $\bl{S}$. As sketched above, this implies that the packet still follows a shortest path.
    However, if $u \in \bl{S}$ such that $S$ is a length $\ell$ prefix of $S_d$, then the packet is forwarded to some node according to \ruletwo.  Let $\cl{S}{i}$ denote the cluster of block $\bl{S}$ in which $u$ resides. If $\mathcal{F}_u$ does not contain $d_{|S|+1,i}$, then the packet is forwarded to $d_{|S|+1,i}$ in $\bl{\prefix{S_d}{\ell+1}}$, which is in line with the shortest path routing strategy above. In case the edge $(u,d_{|S|+1,i})$ is failed, this is impossible. However, this is also reflected in best achievable the shortest path in $(V,E \setminus \mathcal{F}_v)$: In order to reach the destination, the packet needs to be forwarded to some other node $u'$ in $\bl{S}$ with $u' \neq u$ so that $u'$ can then forward the packet towards $\bl{\prefix{S_d}{\ell+1}}$. As $u$ and $u'$ lie in the same block, this is impossible to achieve in less than $2$ hops. In such case any shortest path must run over some node $w \in \bl{S \circ i}$ for some $i\geq 1$, or in other words some child of $\bl{S}$. From $w$ the packet then needs to move back into $\bl{S}$ to some $u' \neq u$. Note that the desired behavior of forwarding the packet from $u$ to such a node $w$ is captured by  $\mathcal{P}_C$ as $u$ follows \ruletwo.
\end{proof}

\subsection{Concentration Inequalities}

\begin{theorem}[McDiarmid's Inequality, cf.~\cite{MRT12}]
\label{thm:mcdiarmid}
    Let $X_1, X_2, ... ,X_m \in \mathcal{X}^m$ be a set of $m\geq 1$ independent random variables and assume that there exist $c_1, ... , c_m >0$ such that $f: \mathcal{X}^m \rightarrow \mathrm{R}$ satisfies the following conditions:
    \[
        f(x_1, ..., x_i, ... ,x_m) - f(x_1, ... , x_i', ..., x_m)| \leq c_i,
    \]
    for all $i \in [m]$ and any points $x_1, ..., x_m, x_i' \in \mathcal{X}$. Let $Y$ denote $f(X_1, ... , X_m)$, then for all $t > 0$, the following inequalities hold:
    \[
        \Pr \left[Y - \Ex{Y} \geq t \right] ~,~ \Pr \left[Y - \Ex{Y} \leq -t \right] \leq \exp \left(-2t^2 / \sum_{i=1}^{m} c_i^2 \right).
    \]
\end{theorem}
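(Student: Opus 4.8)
The plan is to prove this in the classical way, via the Doob (exposure) martingale together with Hoeffding's lemma and a Chernoff‑style exponential‑moment argument. Fix $f$ and the independent variables $X_1,\dots,X_m$, set $Y := f(X_1,\dots,X_m)$, and for $0 \le i \le m$ define $Z_i := \Ex{Y \mid X_1,\dots,X_i}$, so that $Z_0 = \Ex{Y}$ is deterministic, $Z_m = Y$, and $\{Z_i\}$ is a martingale with respect to the filtration generated by the $X_i$. Writing $D_i := Z_i - Z_{i-1}$, we have $\Ex{D_i \mid X_1,\dots,X_{i-1}} = 0$ and $Y - \Ex{Y} = \sum_{i=1}^m D_i$ by telescoping; the entire argument then reduces to controlling the exponential moments of the $D_i$.

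The key step I would carry out first is to show that, conditioned on $X_1,\dots,X_{i-1}$, the increment $D_i$ is supported in an interval of length at most $c_i$. For this I would freeze the first $i-1$ coordinates at their observed values and define $g_i(x) := \Ex{f(x_1,\dots,x_{i-1},x,X_{i+1},\dots,X_m)}$, where the expectation is over $X_{i+1},\dots,X_m$; independence of the $X_j$ is exactly what licenses this substitution, and it gives $Z_i = g_i(X_i)$ and $Z_{i-1} = \Ex{g_i(X_i)}$ conditionally on the past. The bounded‑differences hypothesis yields $|g_i(x) - g_i(x')| \le c_i$ for all $x, x'$, since only the $i$‑th coordinate is changed, so $D_i = g_i(X_i) - \Ex{g_i(X_i)}$ lies in an interval of width $\le c_i$ and has conditional mean zero. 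Hoeffding's lemma then gives, for every $\lambda > 0$, the conditional bound $\Ex{e^{\lambda D_i} \mid X_1,\dots,X_{i-1}} \le \exp(\lambda^2 c_i^2/8)$.

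From here the tail estimate follows by the standard routine. Iterating the tower property from $i = m$ down to $i = 1$ gives $\Ex{e^{\lambda(Y - \Ex{Y})}} = \Ex{\prod_{i=1}^m e^{\lambda D_i}} \le \exp\bigl(\tfrac{\lambda^2}{8}\sum_{i=1}^m c_i^2\bigr)$, and Markov's inequality yields $\Pr[Y - \Ex{Y} \ge t] \le \exp\bigl(-\lambda t + \tfrac{\lambda^2}{8}\sum_{i=1}^m c_i^2\bigr)$. Optimizing over $\lambda$ with the choice $\lambda = 4t/\sum_{i=1}^m c_i^2$ produces the claimed bound $\exp\bigl(-2t^2/\sum_{i=1}^m c_i^2\bigr)$. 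The lower‑tail inequality is obtained by applying the same argument to $-f$, which satisfies the identical bounded‑differences condition with the same constants $c_i$.

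The main obstacle — really the only nontrivial point — is the second paragraph: one must be careful to invoke independence of the $X_j$ to justify replacing conditioning on $X_1,\dots,X_{i-1}$ by substitution of the observed values, and to verify that the resulting conditional law of $D_i$ is genuinely supported in a width‑$c_i$ interval (rather than merely having $|D_i| \le c_i$, which would only give Azuma's inequality with a worse constant and would miss the sharp factor $2$ in the exponent). Once that reduction is in place, Hoeffding's lemma and the Chernoff optimization are entirely mechanical.
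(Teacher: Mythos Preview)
Your proof is correct and is precisely the standard Doob–martingale/Hoeffding's lemma derivation of McDiarmid's inequality; in particular you are right to stress that the conditional range of $D_i$ is an interval of width $c_i$ (not just $|D_i|\le c_i$), which is what yields the sharp constant $2$ in the exponent. The paper, however, does not give its own proof of this statement at all: the theorem is merely quoted as a known concentration inequality with a reference to~\cite{MRT12}, so there is nothing to compare against beyond noting that your argument is the classical one found in that source.
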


\begin{theorem}[Chernoff Bound]
\label{thm:chernoff}
    Let $X_1,X_2, ... ,X_n$ be independent Poisson trials such that $\Pr[X_i = 1] = p_i$ and $X =\sum_{i=1}^{n} X_i$. Then, if $\mu = \Ex{X} = \sum_{i=1}^n p_i$ and $\mu_l \leq \mu \leq  \mu_u$, we have for any $\delta \in [0,1]$ that
    \[
        \Pr \left[X \geq (1+\delta) \mu_u \right] \leq \exp(-\mu_u \cdot \delta^2 / 3)
    \]
    \[
        \Pr \left[X  \leq (1-\delta) \mu_l \right]  \leq \exp(-\mu_l \cdot \delta^2 / 2).
    \]
\end{theorem}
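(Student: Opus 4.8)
The plan is to establish both inequalities by the textbook exponential-moment (Chernoff--Bernstein) argument: first derive the multiplicative bounds $\bigl(e^{\delta}/(1+\delta)^{1+\delta}\bigr)^{\mu}$ and its lower-tail analogue, and then weaken these to the cleaner exponentials stated in the theorem.

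For the upper tail I would fix a parameter $\lambda>0$ and apply Markov's inequality to $e^{\lambda X}$:
\[
\Pr\bigl[X\ge(1+\delta)\mu_u\bigr]\le e^{-\lambda(1+\delta)\mu_u}\,\Ex{e^{\lambda X}}.
\]
Independence of the Poisson trials gives $\Ex{e^{\lambda X}}=\prod_i\bigl(1+p_i(e^{\lambda}-1)\bigr)\le e^{(e^{\lambda}-1)\mu}$ using $1+x\le e^{x}$; since $e^{\lambda}-1>0$ and $\mu\le\mu_u$ this is at most $e^{(e^{\lambda}-1)\mu_u}$, so the whole bound becomes $\exp\bigl(\mu_u(e^{\lambda}-1-\lambda(1+\delta))\bigr)$. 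Minimising the exponent over $\lambda$ gives $\lambda=\ln(1+\delta)$ and the estimate $\exp\bigl(-\mu_u\,g_{+}(\delta)\bigr)$ with $g_{+}(\delta):=(1+\delta)\ln(1+\delta)-\delta$. The lower tail is symmetric: apply Markov to $e^{-\lambda X}$ with $\lambda>0$, use $\Ex{e^{-\lambda X}}\le e^{(e^{-\lambda}-1)\mu}$ together with $e^{-\lambda}-1<0$ and $\mu\ge\mu_l$, and optimise at $\lambda=\ln\tfrac{1}{1-\delta}$ to obtain $\exp\bigl(-\mu_l\,g_{-}(\delta)\bigr)$ with $g_{-}(\delta):=(1-\delta)\ln(1-\delta)+\delta$.

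It then remains to verify the elementary inequalities $g_{+}(\delta)\ge\delta^{2}/3$ and $g_{-}(\delta)\ge\delta^{2}/2$ on $[0,1]$, which is where the constants $3$ and $2$ in the statement come from. The lower-tail case is immediate once one expands $g_{-}(\delta)=\sum_{k\ge2}\delta^{k}/(k(k-1))$, a power series with only nonnegative coefficients, so its leading term $\delta^{2}/2$ already suffices. The upper-tail series $g_{+}(\delta)=\sum_{k\ge2}(-1)^{k}\delta^{k}/(k(k-1))=\tfrac{\delta^2}{2}-\tfrac{\delta^3}{6}+\cdots$ alternates and its terms $\delta^k/(k(k-1))$ are decreasing in $k$ for $\delta\in[0,1]$, so truncating after the cubic term gives $g_{+}(\delta)\ge\tfrac{\delta^2}{2}-\tfrac{\delta^3}{6}\ge\tfrac{\delta^2}{2}-\tfrac{\delta^2}{6}=\tfrac{\delta^2}{3}$. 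Plugging these two bounds back into the displays above yields exactly the claimed inequalities.

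I expect the only genuinely non-routine point to be this last pair of one-line analytic estimates — in particular, noticing that the alternating character of the series for $g_{+}$ is what forces the weaker constant $1/3$, since the naive $\delta^2/2$ already fails at $\delta=1$ (there $g_{+}(1)=2\ln 2-1\approx0.386<\tfrac12$). Everything else is the standard moment-generating-function computation and carries over verbatim; the role of the auxiliary bounds $\mu_l\le\mu\le\mu_u$ is merely to let us replace $\mu$ by $\mu_u$ (resp.\ $\mu_l$) at the right moment, which is legitimate precisely because $e^{\lambda}-1>0$ in the upper-tail estimate and $e^{-\lambda}-1<0$ in the lower-tail one.
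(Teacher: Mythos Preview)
The paper does not prove this theorem at all; it is simply stated in the appendix as a standard concentration inequality (``the usual multiplicative version, stated for convenience''), alongside McDiarmid's inequality which is likewise quoted without proof. Your argument is correct and is exactly the classical exponential-moment derivation one finds in standard references, including the series estimates $g_{-}(\delta)\ge\delta^2/2$ and $g_{+}(\delta)\ge\delta^2/2-\delta^3/6\ge\delta^2/3$ that produce the stated constants.
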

\section{Future Work}
\label{sec:conclusion}

While we provided almost tight bounds on the achievable
congestion with randomized local fast rerouting, 
our work leaves several interesting avenues for future research.
In particular, it will be interesting to generalize our lower bound
and to study
the achievable resilience and congestion under arbitrary 
traffic patterns. 
It would also be interesting to generalize our results
to AB fat-trees~\cite{liu2013f10}: our Clos topology 
essentially corresponds to an ``A fat-tree''.

\bibliographystyle{plainurl}
\bibliography{paper}

\end{document}